\documentclass{fundam}
\usepackage[utf8]{inputenc}
\usepackage[T1]{fontenc}

\usepackage{amssymb}
\DeclareMathSymbol{\shortminus}{\mathbin}{AMSa}{"39}
\hypersetup{breaklinks=true}
\usepackage{dashrule}
\usepackage{hyphenat}
\usepackage{mathrsfs}
\makeatletter
\setlength\abovecaptionskip{\f@size\p@}
\setlength\belowcaptionskip{0\p@}
\long\def\@makecaption#1#2{%
        \vskip\abovecaptionskip
        \sbox\@tempboxa{#1: #2}%
        \ifdim \wd\@tempboxa >\hsize
        #1: #2\par
        \else
        \global \@minipagefalse
        \hb@xt@\hsize{\hfil\box\@tempboxa\hfil}%
        \fi
        \vskip\belowcaptionskip}
\makeatother

\usepackage[style=base,width=.85\textwidth]{caption}
\usepackage{longtable}
\usepackage{booktabs}
\usepackage[inline]{enumitem}

\usepackage{algorithm}
\usepackage[noend]{algpseudocode}

\makeatletter
\renewcommand\thealgorithm{\thesection.\arabic{algorithm}}
\let\c@algorithm\c@equation
\newenvironment{breakablealgorithm}
{
\begin{center}
        \refstepcounter{algorithm}
        \hrule height.8pt depth0pt \kern2pt
        \renewcommand{\caption}[2][\relax]{
                {\raggedright\textbf{\ALG@name~\thealgorithm} ##2\par}%
                \ifx\relax##1\relax 
                \addcontentsline{loa}{algorithm}{\protect\numberline{\thealgorithm}##2}%
                \else 
                \addcontentsline{loa}{algorithm}{\protect\numberline{\thealgorithm}##1}%
                \fi
                \kern2pt\hrule\kern2pt
        }
}{
\kern2pt\hrule\relax%
\end{center}
}
\makeatother
\renewcommand*\Call[2]{\textproc{#1}(#2)}
\usepackage{graphicx}
\usepackage{pgf}
\usepackage{tikz}
\usetikzlibrary{matrix,decorations.pathmorphing,decorations.markings,decorations.pathreplacing}

\usepackage[group-minimum-digits=4]{siunitx}
\usepackage{hyphsubst}
\HyphSubstLet{english}{usenglishmax}
\usepackage[english]{babel}
\usepackage{csquotes}
\usepackage{mathtools}
\usepackage[numbers,sort&compress]{natbib}
\numberwithin{equation}{section}

\newcommand{\mscript}[1]{{$\scriptscriptstyle #1$}}

\def\ov#1{\overline{#1}}
\def\wt#1{\widetilde{#1}}

\newcommand{\ab}{\allowbreak}
\newcommand{\cG}{{\check G}}
\renewcommand{\AA}{\mathbb{A}}
\newcommand{\DD}{\mathbb{D}}
\newcommand{\EE}{\mathbb{E}}
\newcommand{\ZZ}{\mathbb{Z}}
\newcommand{\NN}{\mathbb{N}}
\newcommand{\QQ}{\mathbb{Q}}
\newcommand{\RR}{\mathbb{R}}
\newcommand{\CCC}{\mathbb{C}}
\newcommand{\MM}{\mathbb{M}}
\newcommand{\CN}{\mathcal{N}}
\newcommand{\CH}{\mathcal{H}}
\newcommand{\cox}{\mathrm{cox}}
\newcommand{\Cox}{\mathrm{Cox}}
\newcommand{\CI}{\mathcal{I}}
\DeclareMathOperator{\crk}{\mathbf{crk}}

\DeclareMathOperator{\specc}{{\mathbf{specc}}}
\DeclareMathOperator{\Ker}{Ker}
\DeclareMathOperator{\lcm}{lcm}

\newcommand{\bh}{\mathbf{h}}
\newcommand{\bc}{\mathbf{c}}
\newcommand{\Gl}{\mathrm{Gl}}
\newcommand{\Dyn}{\mathrm{Dyn}}
\newcommand{\Gln}{\Gl(n;\,\ZZ)}
\newcommand{\sgn}{\mathrm{sgn}}
\newcommand*{\inlineeqno}[1]{\refstepcounter{equation}\hfill (\theequation)\label{#1}}

\newcommand{\mppps}{\raisebox{.5pt}{\scalebox{0.75}{$\scriptstyle +$}}}
\newcommand{\mppms}{\scalebox{0.75}[1]{$\scriptstyle -$}}
\newcommand{\mpppss}{\raisebox{.375pt}{\scalebox{0.75}{$\scriptscriptstyle +$}}}
\newcommand{\mppmss}{\scalebox{0.75}[1]{$\scriptscriptstyle -$}}
\newcommand{\mmm}{\scalebox{0.75}[1]{$-$}}
\newcommand{\ppp}{\raisebox{0.682pt}{\scalebox{0.75}{$+$}}}
\newcommand{\eqdef}{ \coloneqq}
\newtheorem{conjecture}[equation]{Conjecture}
\newtheorem{problem}[equation]{Problem}
\newcommand{\tikzsetnextfilename}[1]{}%
\newcommand{\tikzexternaldisable}{}%
\newcommand{\tikzexternalenable}{}%
\newenvironment{proofof}[1]
{\trivlist\PRstyle\item[]{\bfseries #1:}\newline}{\QED\endtrivlist}
\def\squareforqed{\hbox{\rlap{$\sqcap$}$\sqcup$}}
\def\QED{\ifmmode\squareforqed\else{\unskip\nobreak\hfil
                \penalty50\hskip1em\null\nobreak\hfil\squareforqed
                \parfillskip=0pt\finalhyphendemerits=0\endgraf}\fi}

\newcommand{\grapheAn}[2]{\tikzsetnextfilename{grapheAn}\begin{tikzpicture}[label distance=-2pt, xscale=#1, yscale=#2]
                \node[circle, fill=black, inner sep=0pt, minimum size=3.5pt, label=above:\mscript{1}] (n1) at (0  , 0  ) {};
                \node[circle, fill=black, inner sep=0pt, minimum size=3.5pt, label=above:\mscript{2}] (n2) at (1  , 0  ) {};
                \node (n3) at (2  , 0  ) {\mscript{}};
                \node (n4) at (3  , 0  ) {\mscript{}};
                \node[circle, fill=black, inner sep=0pt, minimum size=3.5pt, label=above:\mscript{n-1}] (n5) at (4  , 0  ) {};
                \node[circle, fill=black, inner sep=0pt, minimum size=3.5pt, label=above:\mscript{n}] (n6) at (5  , 0  ) {};
                
                \draw [-, shorten <= -2.50pt, shorten >= 2.50pt] (n4) to  (n5);
                \draw [dotted, -, shorten <= -2.50pt, shorten >= -2.50pt] (n3) to  (n4);
                \foreach \x/\y in {1/2, 5/6}
                \draw [-, shorten <= 2.50pt, shorten >= 2.50pt] (n\x) to  (n\y);
                \draw [-, shorten <= 2.50pt, shorten >= -2.50pt] (n2) to  (n3);
\end{tikzpicture}}

\newcommand{\grapheDn}[2]{\tikzsetnextfilename{grapheDn}\begin{tikzpicture}[label distance=-2pt, xscale=#1, yscale=#2]
                \node[circle, fill=black, inner sep=0pt, minimum size=3.5pt, label=above:\mscript{1}] (n1) at (0  , 0  ) {};
                \node[circle, fill=black, inner sep=0pt, minimum size=3.5pt, label=right:\mscript{2}] (n2) at (1  , 0.6) {};
                \node[circle, fill=black, inner sep=0pt, minimum size=3.5pt, label=above right:\mscript{3}] (n3) at (1  , 0  ) {};
                \node (n4) at (2  , 0  ) {\mscript{}};
                \node (n5) at (3  , 0  ) {\mscript{}};
                \node[circle, fill=black, inner sep=0pt, minimum size=3.5pt, label=above:\mscript{n-1}] (n6) at (4  , 0  ) {};
                \node[circle, fill=black, inner sep=0pt, minimum size=3.5pt, label=above:\mscript{n}] (n7) at (5  , 0  ) {};
                
                \draw [-, shorten <= -2.50pt, shorten >= 2.50pt] (n5) to  (n6);
                \draw [dotted, -, shorten <= -2.50pt, shorten >= -2.50pt] (n4) to  (n5);
                \foreach \x/\y in {1/3, 2/3, 6/7}
                \draw [-, shorten <= 2.50pt, shorten >= 2.50pt] (n\x) to  (n\y);
                \draw [-, shorten <= 2.50pt, shorten >= -2.50pt] (n3) to  (n4);
\end{tikzpicture}}

\newcommand{\grapheEsix}[2]{\tikzsetnextfilename{grapheEsix}\begin{tikzpicture}[label distance=-2pt, xscale=#1, yscale=#2]
                \node[circle, fill=black, inner sep=0pt, minimum size=3.5pt, label=above:\mscript{1}] (n1) at (0  , 0  ) {};
                \node[circle, fill=black, inner sep=0pt, minimum size=3.5pt, label=above:\mscript{2}] (n2) at (1  , 0  ) {};
                \node[circle, fill=black, inner sep=0pt, minimum size=3.5pt, label=above left:\mscript{3}] (n3) at (2  , 0  ) {};
                \node[circle, fill=black, inner sep=0pt, minimum size=3.5pt, label=right:\mscript{4}] (n4) at (2  , 0.6  ) {};
                \node[circle, fill=black, inner sep=0pt, minimum size=3.5pt, label=above:\mscript{5}] (n5) at (3  , 0  ) {};
                \node[circle, fill=black, inner sep=0pt, minimum size=3.5pt, label=above:\mscript{6}] (n6) at (4  , 0  ) {};
                
                \foreach \x/\y in {1/2, 2/3, 3/5, 4/3, 5/6}
                \draw [-, shorten <= 2.50pt, shorten >= 2.50pt] (n\x) to  (n\y);
\end{tikzpicture}}

\newcommand{\grapheEseven}[2]{\tikzsetnextfilename{grapheEseven}\begin{tikzpicture}[label distance=-2pt, xscale=#1, yscale=#2]
                \node[circle, fill=black, inner sep=0pt, minimum size=3.5pt, label=above:\mscript{1}] (n1) at (0  , 0  ) {};
                \node[circle, fill=black, inner sep=0pt, minimum size=3.5pt, label=above:\mscript{2}] (n2) at (1  , 0  ) {};
                \node[circle, fill=black, inner sep=0pt, minimum size=3.5pt, label=above:\mscript{3}] (n3) at (2  , 0  ) {};
                \node[circle, fill=black, inner sep=0pt, minimum size=3.5pt, label=above right:\mscript{4}] (n4) at (3  , 0) {};
                \node[circle, fill=black, inner sep=0pt, minimum size=3.5pt, label=right:\mscript{5}] (n5) at (3  , 0.6  ) {};
                \node[circle, fill=black, inner sep=0pt, minimum size=3.5pt, label=above:\mscript{6}] (n6) at (4  , 0  ) {};
                \node[circle, fill=black, inner sep=0pt, minimum size=3.5pt, label=above:\mscript{7}] (n7) at (5  , 0  ) {};
                
                \foreach \x/\y in {1/2, 2/3, 3/4, 4/5, 4/6, 6/7}
                \draw [-, shorten <= 2.50pt, shorten >= 2.50pt] (n\x) to  (n\y);
\end{tikzpicture}}

\newcommand{\grapheEeight}[2]{\tikzsetnextfilename{grapheEeight}\begin{tikzpicture}[label distance=-2pt, xscale=#1, yscale=#2]
                \node[circle, fill=black, inner sep=0pt, minimum size=3.5pt, label=above:\mscript{1}] (n1) at (0  , 0  ) {};
                \node[circle, fill=black, inner sep=0pt, minimum size=3.5pt, label=above:\mscript{2}] (n2) at (1  , 0  ) {};
                \node[circle, fill=black, inner sep=0pt, minimum size=3.5pt, label=above right:\mscript{3}] (n3) at (2  , 0  ) {};
                \node[circle, fill=black, inner sep=0pt, minimum size=3.5pt, label=right:\mscript{4}] (n4) at (2  , 0.6) {};
                \node[circle, fill=black, inner sep=0pt, minimum size=3.5pt, label=above:\mscript{5}] (n5) at (3  , 0  ) {};
                \node[circle, fill=black, inner sep=0pt, minimum size=3.5pt, label=above:\mscript{6}] (n6) at (4  , 0  ) {};
                \node[circle, fill=black, inner sep=0pt, minimum size=3.5pt, label=above:\mscript{7}] (n7) at (5  , 0  ) {};
                \node[circle, fill=black, inner sep=0pt, minimum size=3.5pt, label=above:\mscript{8}] (n8) at (6  , 0  ) {};
                
                \foreach \x/\y in {1/2, 2/3, 3/5, 4/3, 5/6, 6/7, 7/8}
                \draw [-, shorten <= 2.50pt, shorten >= 2.50pt] (n\x) to  (n\y);
\end{tikzpicture}}

\newcommand{\graphOnePeakpan}{
\tikzsetnextfilename{hasseposit0Anmi}\begin{tikzpicture}[baseline={([yshift=1.75pt]current bounding box)},label distance=-2pt, xscale=1]
        \node[circle, fill=black, inner sep=0pt, minimum size=3.5pt, label=below:$\scriptstyle 1$] (n1) at (0  , 0  ) {};
        \node[circle, fill=black, inner sep=0pt, minimum size=3.5pt, label=below:$\scriptstyle 2$] (n2) at (1  , 0  ) {};
        \node (n3) at (2  , 0  ) {};
        \node (n4) at (3  , 0  ) {};
        \node[circle, fill=black, inner sep=0pt, minimum size=3.5pt, label=below:$\scriptstyle n-1$] (n5) at (4  , 0  ) {};
        \node[circle, fill=black, inner sep=0pt, minimum size=3.5pt, label=below:$\scriptstyle \phantom{1}n\phantom{1}$] (n6) at (5  , 0  ) {};
        
        \draw [-stealth, shorten <= -2.50pt, shorten >= 2.50pt] (n4) to  (n5);
        \draw [dotted, -, shorten <= -2.50pt, shorten >= -2.50pt] (n3) to  (n4);
        \foreach \x/\y in {1/2, 5/6}
        \draw [-stealth, shorten <= 2.50pt, shorten >= 2.50pt] (n\x) to  (n\y);
        \draw [-stealth, shorten <= 2.50pt, shorten >= -2.50pt] (n2) to  (n3);
\end{tikzpicture}}

\newcommand{\graphOnePeakpanPrinc}{
\tikzsetnextfilename{posetprincpan}
\begin{tikzpicture}[baseline={([yshift=-2.75pt]current bounding box)},
        label distance=-2pt,xscale=0.6, yscale=0.5]
        \node[circle, fill=black, inner sep=0pt, minimum size=3.5pt, label=above left:$\scriptscriptstyle 1$] (n1) at (0  , 2  ) {};
        \node[circle, fill=black, inner sep=0pt, minimum size=3.5pt, label=below left:$\scriptscriptstyle 2$] (n2) at (0  , 0  ) {};
        \node[circle, fill=black, inner sep=0pt, minimum size=3.5pt, label=above:$\scriptscriptstyle 3$] (n3) at (1  , 2  ) {};
        \node[circle, fill=black, inner sep=0pt, minimum size=3.5pt, label=above:$\scriptscriptstyle 4$] (n4) at (2  , 2  ) {};
        \node (n5) at (3  , 2  ) {$\scriptscriptstyle $};
        \node (n6) at (4  , 2  ) {$\scriptscriptstyle $};
        \node[circle, fill=black, inner sep=0pt, minimum size=3.5pt, label=right:$\scriptscriptstyle n\mppmss 1$] (n7) at (1  , 1  ) {};
        \node[circle, fill=black, inner sep=0pt, minimum size=3.5pt, label=right:$\scriptscriptstyle n$] (n8) at (6  , 1  ) {};
        \node[circle, fill=black, inner sep=0pt, minimum size=3.5pt, label=above right:$\scriptscriptstyle p$] (n9) at (5  , 2  ) {};
        \node[circle, fill=black, inner sep=0pt, minimum size=3.5pt, label=below:$\scriptscriptstyle p\mpppss 1$] (n10) at (1  , 0  ) {};
        \node (n11) at (2  , 0  ) {$\scriptscriptstyle $};
        \node (n12) at (4  , 0  ) {$\scriptscriptstyle $};
        \node[circle, fill=black, inner sep=0pt, minimum size=3.5pt, label=below right:$\scriptscriptstyle n\mppmss 2$] (n13) at (5  , 0  ) {};
        \foreach \x/\y in {1/3, 1/7, 2/7, 2/10, 3/4, 9/8, 13/8}
        \draw [-stealth, shorten <= 2.50pt, shorten >= 2.50pt] (n\x) to  (n\y);
        \foreach \x/\y in {4/5, 10/11}
        \draw [-stealth, shorten <= 2.50pt, shorten >= -2.50pt] (n\x) to  (n\y);
        \foreach \x/\y in {5/6, 11/12}
        \draw [dotted, -, shorten <= -2.50pt, shorten >= -2.50pt] (n\x) to  (n\y);
        \foreach \x/\y in {6/9, 12/13}
        \draw [-stealth, shorten <= -2.50pt, shorten >= 2.50pt] (n\x) to  (n\y);
\end{tikzpicture}}

\makeatletter
\let\c@figure\c@equation
\let\c@table\c@equation
\let\c@definition\c@equation
\let\c@theorem\c@equation
\let\c@fact\c@equation
\let\c@lemma\c@equation
\let\c@example\c@equation
\let\c@assumption\c@equation
\let\c@proposition\c@equation
\let\c@remark\c@equation
\let\c@corollary\c@equation
\let\c@claim\c@equation
\let\ftype@table\ftype@figure 
\makeatother



\setcounter{page}{1}
\publyear{2022}
\papernumber{2060}
\volume{182}
\issue{1}



\frenchspacing
\begin{document}

\def\matitle{A Coxeter type classification of Dynkin type $\AA_n$ non-negative posets}
\title{\matitle}

\address{ul. Chopina 12/18, 87-100 Toruń, Poland}

\author{Marcin G\k{a}siorek\\
Faculty of Mathematics and Computer Science\\
Nicolaus Copernicus University\\
ul. Chopina 12/18, 87-100 Toruń, Poland\\
mgasiorek{@}mat.umk.pl} 

\maketitle

\runninghead{M. G\k{a}siorek}{\matitle}

\begin{abstract}
We continue the Coxeter spectral analysis of finite connected posets $I$ that are non-negative in the sense
that their symmetric Gram matrix $G_I:=\frac{1}{2}(C_I + C_I^{tr})\in\MM_{m}(\QQ)$ is positive semi-definite of rank $n\geq 0$, where $C_I\in\MM_m(\ZZ)$ is the incidence matrix of $I$ encoding the relation $\preceq_I$. We extend the results of [Fundam. Inform., 139.4(2015),
347--367] and give a complete Coxeter spectral classification of finite connected posets $I$ of Dynkin type $\AA_n$.

We show that such posets $I$, with $|I|>1$, yield  exactly $\lfloor\frac{m}{2}\rfloor$  Coxeter types, one of which describes the positive (i.e., with $n=m$) ones. We give an exact description and calculate the number of posets of every type.
Moreover, we prove that, given a pair of such posets $I$ and $J$, the incidence matrices $C_I$ and $C_J$ are $\ZZ$-congruent if and only if $\specc_I = \specc_J$,
and present deterministic algorithms that calculate a $\ZZ$-invertible matrix defining such a $\ZZ$-congruence in a polynomial time.

\textbf{Keywords:} non-negative poset, unit quadratic form, Coxeter-Dynkin type, Coxeter spectrum
\end{abstract}

\section{Introduction}

Coxeter spectral study of finite posets and, more generally, edge-bipartite graphs, is inspired by their applications in the representation theory of posets, finite groups, classical orders, finite-dimensional
algebras over a field $K$, and cluster $K$-algebras, see \cite{ASS,bondarenkoSystemsSubspacesUnitary2013,Si92,simsonMeshGeometriesRoot2011}, and \cite{simsonWeylOrbitsMatrix2022,simsonCoxeterGramClassificationPositive2013,SimZaj_intmms,mrozCongruencesEdgebipartiteGraphs2016,gasiorekAlgorithmicStudyNonnegative2015,jimenezgonzalezGraphTheoreticalFramework2021}.
The algebraic framework of this study is given in \cite{simsonCoxeterGramClassificationPositive2013,simsonFrameworkCoxeterSpectral2013}.

We note that the Coxeter spectral classification of finite posets, up to the Gram $\ZZ$-congruences defined in Section~\ref{section:preliminaries}, grew up from many different branches of mathematics and computer science and is successfully applied in Lie theory, Diophantine geometry, algebraic combinatorics, representation theory, matrix analysis, graph theory, combinatorial and graph algorithms, singularity theory, and related areas. The reader is referred to \cite{simsonWeylOrbitsMatrix2022,simsonCoxeterGramClassificationPositive2013,zajacPolynomialTimeInflation2020} and \cite[Section 6.1.2]{gasiorekAlgorithmicCoxeterSpectral2020} for a more detailed discussion.

In the present paper, we study posets that are non-negative of Dynkin type $\AA_m$, as defined in Section~\ref{section:preliminaries}. In \cite{gasiorekStructureNonnegativePosets2022arxiv} we give a characterization of such posets in terms of their Hasse digraphs (see Fact~\ref{fact:a:hasse}). One of the main results of the present paper is Theorem~\ref{thm:mainthm:stronggram}, which gives a detailed description of such posets up to the strong Gram $\ZZ$-congruence (see Definition~\ref{df:congruences}). This yields a complete Coxeter spectral classification of this class of posets and generalizes the results of \cite{gasiorekAlgorithmicStudyNonnegative2015,gasiorekOnepeakPosetsPositive2012,gasiorekCoxeterTypeClassification2019}. In particular, we show that the complex  Coxeter spectrum $\specc_I\subseteq\CCC$ \eqref{eq:pos_cox_specc} determines a connected non-negative poset $I$ of Dynkin type $\Dyn_I=\AA_m$ uniquely, up to the strong Gram $\ZZ$-congruence.

\begin{theorem}\label{thm:mainthm:stronggram}
Let $I=(I,\preceq)$ be a finite connected non-negative poset of corank $\crk_I\geq 0$ and Dynkin type $\Dyn_I=\AA_{|I|-\crk_I}$.
\begin{enumerate}[label={\textnormal{(\alph*)}}]
\item\label{thm:mainthm:stronggram:dichotomy} Poset $I$ is either positive or principal, that is, $\crk_I\in\{0,1\}$.
\item\label{thm:mainthm:stronggram:posit} If $I$ is positive, then
\begin{enumerate}[label=\normalfont{(b\arabic*)}, leftmargin=4ex]
\item\label{thm:mainthm:stronggram:posit:opcongr} $I$ is strongly Gram $\ZZ$-congruent with a one peak poset ${}_0 \AA_{n-1}^*$ \eqref{hasse:pospositzeroanmi},
\item\label{thm:mainthm:stronggram:posit:coxpol} $\cox_I(t) = t^{n+1}+\ldots + t + 1\in\ZZ[t]$,
\item\label{thm:mainthm:stronggram:posit:coxnum} $\bc_I=n+1$.
\end{enumerate}
\item\label{thm:mainthm:stronggram:princ} If $I$ is principal, then
\begin{enumerate}[label=\normalfont{(c\arabic*)}, leftmargin=4ex]
\item\label{thm:mainthm:stronggram:princ:pancongr} $I$ is strongly Gram $\ZZ$-congruent with a canonical two peak poset ${}_p\wt \AA_{n}$ \eqref{hasse:pospprincpan}, where 
$p\eqdef c(I)$ is the cycle index of $I$ \eqref{eq:df:cycleindex},
\item\label{thm:mainthm:stronggram:princ:coxpol} $\cox_I(t) = t^n- t^p-t^{n-p}+1\in\ZZ[t]$,
\item\label{thm:mainthm:stronggram:princ:coxnum} $\bc_I=\infty$ and $\check \bc_I=\lcm(p,n-p)$.
\end{enumerate}
\end{enumerate}
\end{theorem}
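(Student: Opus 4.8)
The plan is to reduce the whole statement to the Hasse-digraph description of Fact~\ref{fact:a:hasse}: from that description the corank dichotomy is immediate, one then transports the incidence matrix $C_I$ to one of two explicit normal forms by a sequence of strong Gram $\ZZ$-congruences, and finally extracts the Coxeter data by a direct computation on the normal form, using that $\cox_I(t)$, $\bc_I$ and $\check\bc_I$ are invariants of the strong Gram $\ZZ$-congruence class.

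For part~\ref{thm:mainthm:stronggram:dichotomy} I would read it off Fact~\ref{fact:a:hasse}: that result presents $\CH(I)$ as a member of an explicit parameterised family, and for each such shape the underlying graph of $\CH(I)$ is either a path --- in which case $G_I$ is positive definite and $\crk_I=0$ --- or a single cycle --- in which case $\ker G_I$ is one-dimensional and $\crk_I=1$. Hence $I$ is positive or principal. (If Fact~\ref{fact:a:hasse} already records $\crk_I$ for each shape, this step is just a citation.)

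Parts~\ref{thm:mainthm:stronggram:posit:opcongr} and~\ref{thm:mainthm:stronggram:princ:pancongr} are the core. Starting from the explicit $\CH(I)$, I would process its covering relations one at a time by a finite sequence of strong Gram $\ZZ$-congruences, each realised by a signed permutation matrix or an elementary inflation matrix, and each strictly decreasing an explicit complexity measure of $C_I$ (for instance $\sum_{i,j}(C_I)_{ij}$, or the number of non-covering relations $i\prec j$ of $I$). The Dynkin type stays $\AA$ and the corank is preserved, so the terminal matrix is the incidence matrix of a connected poset of type $\AA$ with nothing left to straighten: in the positive case this forces the chain ${}_0\AA_{n-1}^*$ of~\eqref{hasse:pospositzeroanmi}, and in the principal case a canonical two peak poset ${}_q\wt\AA_n$ of~\eqref{hasse:pospprincpan}. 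To identify $q$ with $p=c(I)$ I would check that the cycle index~\eqref{eq:df:cycleindex} is left unchanged by the congruences used --- so it is a strong Gram $\ZZ$-congruence invariant of principal type-$\AA$ posets --- and that $q$ and $n-q$ are interchangeable, so that the normalisation built into~\eqref{eq:df:cycleindex} yields $q=p$. The hard part will be to organise this reduction so that only polynomially many moves occur and every intermediate matrix stays a poset incidence matrix of the required type; carrying that out is precisely what produces the polynomial-time algorithms announced in the abstract.

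For parts~\ref{thm:mainthm:stronggram:posit:coxpol}--\ref{thm:mainthm:stronggram:posit:coxnum} and~\ref{thm:mainthm:stronggram:princ:coxpol}--\ref{thm:mainthm:stronggram:princ:coxnum} it then suffices, by invariance, to compute $\cox_I(t)$, $\bc_I$ and $\check\bc_I$ on the two normal forms. For ${}_0\AA_{n-1}^*$ the incidence matrix is upper unitriangular with all entries equal to $1$; a direct computation of $\Cox_I$ then yields the cyclotomic-type Coxeter polynomial of~\ref{thm:mainthm:stronggram:posit:coxpol}, whose roots are all roots of unity, so that $\bc_I$ is the finite number of~\ref{thm:mainthm:stronggram:posit:coxnum}. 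For ${}_p\wt\AA_n$ I would read $C_I$ off~\eqref{hasse:pospprincpan}, form $\Cox_I$, and verify that $\cox_I(t)=(t^p-1)(t^{n-p}-1)=t^n-t^p-t^{n-p}+1$. Then $t=1$ is a multiple root; since $\crk_I=1$ and $I$ is not positive, $\Cox_I$ fixes the one-dimensional radical but is not semisimple --- it carries a single Jordan block of size $2$ at $t=1$, the standard behaviour of the Coxeter matrix of a principal non-negative bigraph --- so $\Cox_I$ has infinite order, $\bc_I=\infty$, while its semisimple part has order $\lcm(p,n-p)$, i.e.\ $\check\bc_I=\lcm(p,n-p)$. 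The one non-routine point here is the size-$2$ Jordan block at $t=1$, which I would derive from the minimal polynomial of $\Cox_I$ together with $\crk_I=1$.
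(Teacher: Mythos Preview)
Your high-level plan --- read the shape of $\CH(I)$ from Fact~\ref{fact:a:hasse}, reduce $C_I$ to a normal form by a finite chain of strong Gram $\ZZ$-congruences, then compute the Coxeter data on the normal form --- is exactly the paper's strategy. The genuine gap is in the reduction step: you gesture at ``signed permutation or elementary inflation matrices'' and a decreasing complexity measure, but never name the operation that actually works. The paper's tool is the $(\min,\max)$-reflection $S_a$ at a source or sink $a\in I$ (Definition~\ref{df:minmaxrefl}), which reverses the arrows at $a$ in $\CH(I)$; the accompanying matrix $B_a$ of \eqref{eq:reflmat} is neither a signed permutation nor a single elementary inflation. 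The delicate point, handled in Proposition~\ref{prop:minmaxrefl}, is that $S_a$ gives a strong Gram $\ZZ$-congruence \emph{between posets} only under the junction-free condition of Definition~\ref{df:junctionfree}; without this, the target bilinear form need not be an incidence form (cf.\ Example~\ref{ex:minmaxref:def}). Your proposed complexity measures ($\sum (C_I)_{ij}$ or the number of non-covering relations) are not monotone under these reflections; the paper instead pushes a designated extremal index monotonically along the path (positive case) or around the cycle (principal case), and terminates because that index is bounded. Identifying the right move and proving it stays in the category of posets is really the content of (b1)/(c1).

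On the Coxeter data your route diverges from the paper's. For ${}_p\wt\AA_n$ the paper does not compute $\Cox_I$ directly; it realises the incidence bigraph as a line graph $L(Q_I)$ of an explicit quiver $Q_I$ and quotes Jim\'enez Gonz\'alez's formulas (Lemma~\ref{lemma:posprincpan}), which give $\cox_I$, $\bc_I=\infty$ and $\check\bc_I=\lcm(p,n-p)$ in one stroke from the lengths of the two minimally decreasing walks in $Q_I$. Your Jordan-block argument is viable, but two steps need care: you must actually verify the single size-$2$ block at $t=1$ (not merely that $1$ is a double root of $\cox_I$), and you must match ``order of the semisimple part'' to the paper's definition \eqref{eq:pos_red_cox_num} of $\check\bc_I$, which is a statement about $e_i\cdot(\Cox_I^{k}-E)\in\Ker q_I$ rather than about $\Cox_I^{k}$ being unipotent. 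Both checks go through, but they are not as immediate as your sketch suggests.
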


In other words, there exist precisely $\lfloor\frac{n}{2}\rfloor$ Coxeter types of connected Dynkin type $\AA_n$ non-negative posets, and exactly one of them describes positive ones. In Theorem~\ref{thr:posnum:a} we give formulae for the exact number of all, up to the isomorphism, posets of a given Coxeter polynomial. 
\begin{theorem}\label{thr:posnum:a}
        Assume that $n$ is a natural number. Up to poset isomorphism, there exist exactly:
        \begin{enumerate}[label={\textnormal{(\alph*)}}]
                \item\label{thr:posnum:a:posit} $N(P_n)$ connected non-negative posets $I$ with a Coxeter polynomial $\cox_I(t) = t^{n+1}+\cdots + t + 1$, where 
                \begin{equation*}%
                        N(P_n)=
                        \begin{cases}
                                2^{n-2}, & \textnormal{if $n\geq 2$ is even},\\[0.1cm]
                                2^{\frac{n - 3}{2}} + 2^{n - 2}, & \textnormal{if $n\geq 1$ is odd,}\\
                        \end{cases}
                \end{equation*}
                and every such a poset $I$ is positive, i.e., $\crk_I=0$;
                \item\label{thr:posnum:a:printc} $N(C_n,p)-1$ connected non-negative posets $I$ with the Coxeter polynomial $\cox_I(t) = t^n- t^p-t^{n-p}+1$, where
                $2\leq \frac{n}{2}\leq p\leq n-2$,
                \begin{equation*}%
                        N(C_n,p)=
                        \begin{cases}
                                \frac{1}{n}\sum_{d|\gcd(n,p)}\varphi(d)\binom{n/d}{p/d}, & \textnormal{if }p\in\{\lceil\frac{n}{2} \rceil, \ldots, n-2 \}\textnormal{ and } p\neq\frac{n}{2},\\[0.1cm]
                                \frac{1}{2n}\sum_{d|\gcd(n,\frac{n}{2})}\varphi(d)\binom{n/d}{n/2d}+2^{\frac{n}{2}-2}, & \textnormal{if }p=\frac{n}{2}.\\
                        \end{cases}
                \end{equation*}
               where $\varphi$ is Euler's totient function. Moreover, every such a poset $I$ is principal, i.e., $\crk_I=1$.
        \end{enumerate}
\end{theorem}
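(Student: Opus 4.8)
The plan is to derive both formulas by combining Fact~\ref{fact:a:hasse} with Theorem~\ref{thm:mainthm:stronggram}: together these reduce the classification of the posets in question to a classification of the shapes of their Hasse digraphs, after which each count becomes an application of Burnside's lemma. The one elementary input used throughout is that two finite posets are isomorphic precisely when their Hasse digraphs are isomorphic as directed graphs, so each count is the number of orbits of a set of admissible orientations of a fixed undirected graph under its automorphism group.

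For part~\ref{thr:posnum:a:posit}, Theorem~\ref{thm:mainthm:stronggram}\ref{thm:mainthm:stronggram:posit} identifies the posets with Coxeter polynomial $t^{n+1}+\cdots+1$ as exactly the positive connected ones of Dynkin type $\AA$, and Fact~\ref{fact:a:hasse} identifies their Hasse digraphs with the orientations of the path $\AA_n$; conversely, applying Theorem~\ref{thm:mainthm:stronggram} in the reverse direction, every orientation of $\AA_n$ yields such a poset. The automorphism group of the path is $\{1,\tau\}$, where $\tau$ is the reflection, and — because reversing the vertex order of a path interchanges the two ends of every edge — $\tau$ acts on the string $\varepsilon\in\{0,1\}^{n-1}$ of edge directions by $\varepsilon\mapsto\varepsilon^{\tau}$ with $\varepsilon^{\tau}_i=1-\varepsilon_{n-i}$, i.e.\ by reversal together with complementation. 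Burnside then gives $\tfrac12\bigl(2^{n-1}+|\operatorname{Fix}\tau|\bigr)$; and $\operatorname{Fix}\tau=\varnothing$ for $n$ even (the central edge would have to satisfy $\varepsilon=1-\varepsilon$), whereas $|\operatorname{Fix}\tau|=2^{(n-1)/2}$ for $n$ odd (the $(n-1)/2$ edge-pairs being free). Simplifying yields $N(P_n)$ in the two parities.

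For part~\ref{thr:posnum:a:printc}, Theorem~\ref{thm:mainthm:stronggram}\ref{thm:mainthm:stronggram:princ} identifies a poset with Coxeter polynomial $t^n-t^p-t^{n-p}+1$ as principal of cycle index $p$, and Fact~\ref{fact:a:hasse} identifies its Hasse digraph with an acyclic orientation of the cycle $\wt\AA_{n-1}$ (an $n$-cycle) whose forward/backward arrow counts form the multiset $\{p,\,n-p\}$; conversely every such orientation arises, with the single exception of the one whose forward arrows form one block — there the transitive closure makes the comparability graph two cliques glued along an edge, whose unit form is positive (of Dynkin type $\DD_n$, or an exceptional type when $n\le 8$), not principal of type $\AA_{n-1}$, as a reduction of the symmetric Gram matrix shows; discarding this one orbit produces the $-1$. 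Isomorphism classes of the remaining posets are orbits of the admissible orientations under the dihedral automorphism group $D_n$ of the cycle, acting by rotations together with the reversal-and-complementation of part~(a). When $p\neq n/2$, a reflection turns the weight $p$ into $n-p\neq p$, so only the rotation group $\ZZ_n$ acts on weight-$p$ strings and Burnside over $\ZZ_n$ produces the classical necklace number $\tfrac1n\sum_{d\mid\gcd(n,p)}\varphi(d)\binom{n/d}{p/d}=N(C_n,p)$, hence $N(C_n,p)-1$ posets. When $p=n/2$, the reversal-and-complementation also preserves the weight, so one must run Burnside over all of $D_n$: the rotations contribute half the necklace number, and among the $n$ reflections the $n/2$ fixing a pair of opposite vertices each fix exactly the $2^{n/2}$ strings with $\varepsilon_i=1-\varepsilon_{\sigma(i)}$ for the induced edge-involution $\sigma$ (and such strings automatically have weight $n/2$), while the $n/2$ fixing a pair of opposite edges fix none; this gives the reflection contribution $\tfrac1{2n}\cdot\tfrac n2\cdot 2^{n/2}=2^{n/2-2}$, so the count is again of the asserted form $N(C_n,n/2)-1$.

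The part that needs care is the principal case with $p=n/2$: there one must recognise that the relevant symmetry group is the full dihedral group (so one counts bracelets, not necklaces), carry out the reflection count precisely enough to obtain the constant $2^{n/2-2}$, and check that exactly one orbit — the ``single block'' one — still has to be excluded. A secondary, routine point in both parts is the converse bookkeeping: confirming via Theorem~\ref{thm:mainthm:stronggram} that every admissible shape is realised by a poset with the stated Coxeter polynomial and Dynkin type, and that the Hasse digraph up to directed-graph isomorphism is a complete invariant of the posets under consideration.
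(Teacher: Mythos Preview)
Your argument is correct and runs parallel to the paper's, but with two genuine methodological differences worth noting.

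For part~(a) the paper simply quotes $N(P_n)$ from Fact~\ref{fact:typeanum} (which in turn cites \cite{gasiorekStructureNonnegativePosets2022arxiv}); you instead derive it from scratch via Burnside on $\{1,\tau\}$, which is self-contained and slightly more informative.

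For part~(b) the two arguments diverge in how they handle the $p=n/2$ case. The paper works entirely with necklaces (i.e.\ $\ZZ_n$-orbits) and then observes that digraph isomorphism identifies a necklace with its colour-complement, so $N(C_n,\tfrac{n}{2})=2^{n/2-1}+\tfrac12\bigl(Nck_2(\tfrac{n}{2},n)-2^{n/2-1}\bigr)$, where $2^{n/2-1}$ is the number of self-complementary necklaces. You instead run Burnside directly over the full dihedral group $D_n$, computing the reflection contribution $2^{n/2-2}$ explicitly. The two expressions agree, and your route has the virtue of making the bracelet structure transparent; the paper's route has the virtue of reusing Proposition~\ref{prop:binneckl:num} verbatim.

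One point where you work harder than necessary: to justify the ``$-1$'' you argue that the single-block orientation produces a \emph{positive} poset of Dynkin type $\DD_n$ (``as a reduction of the symmetric Gram matrix shows''). This claim is true but you do not actually prove it, and it is not needed: Fact~\ref{fact:a:hasse}\ref{fact:a:hasse:princ} already says that a poset whose Hasse digraph is a $2$-regular oriented cycle is principal of type $\AA_{n-1}$ \emph{if and only if} it has at least two maximal elements. The single-block orientation has exactly one sink, so it is excluded by that criterion alone --- this is precisely how the paper dispatches the point in one line. Replacing your $\DD_n$ digression with a reference to Fact~\ref{fact:a:hasse}\ref{fact:a:hasse:princ} would tighten the argument.
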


As a direct result of Theorem~\ref{thm:mainthm:stronggram} we obtain a useful algebraic tool for checking whether two posets are strongly Gram $\ZZ$-congruent. Namely, every two  connected non-negative posets $I$ and $J$ of Dynkin type $\Dyn_I=\Dyn_J=\AA_m$ are strongly Gram $\ZZ$-congruent if and only if $\cox_I=\cox_J$.

\begin{corollary}\label{corr:main:dyncongr}
If $I$ and $J$ are non-negative connected posets of Dynkin type $\AA_m$, then $I\approx_\ZZ J$ if and only if $\cox_I(t)=\cox_J(t)$ or, equivalently, $\specc_I=\specc_J$.
\end{corollary}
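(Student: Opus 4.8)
The corollary is a book-keeping consequence of Theorem~\ref{thm:mainthm:stronggram}, which provides, within the class under consideration, one canonical representative of each strong Gram $\ZZ$-congruence class; the only thing left is to check that the Coxeter polynomial tells these representatives apart. I would first dispose of the implication ``$I\approx_\ZZ J\Rightarrow\cox_I=\cox_J$'', which uses nothing beyond the definitions of Section~\ref{section:preliminaries}: a strong Gram $\ZZ$-congruence (Definition~\ref{df:congruences}) provides a $\ZZ$-invertible $B$ with $C_J=B^{tr}C_IB$, and since the incidence matrix of a poset is unipotent upper triangular, and hence $\ZZ$-invertible, the Coxeter matrices $\Cox_I$ and $\Cox_J$ are conjugate by a $\ZZ$-invertible matrix; in particular $\cox_I=\cox_J$, and then $\specc_I=\specc_J$ as well, $\specc_I$ being the root multiset of the monic polynomial $\cox_I$. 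The reverse implication $\specc_I=\specc_J\Rightarrow\cox_I=\cox_J$ is immediate, so it remains to prove ``$\cox_I=\cox_J\Rightarrow I\approx_\ZZ J$''.

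Assume $\cox_I=\cox_J$. By Theorem~\ref{thm:mainthm:stronggram}\ref{thm:mainthm:stronggram:dichotomy} one has $\crk_I,\crk_J\in\{0,1\}$, and parts~\ref{thm:mainthm:stronggram:posit:coxpol} and~\ref{thm:mainthm:stronggram:princ:coxpol} show that $\cox_I(1)\ne 0$ when $I$ is positive, whereas $\cox_I(1)=1-1-1+1=0$ when $I$ is principal; hence $\cox_I(1)=\cox_J(1)$ already forces $\crk_I=\crk_J$. If $I$ and $J$ are both positive, then, since $\Dyn_I=\Dyn_J=\AA_m$, part~\ref{thm:mainthm:stronggram:posit:opcongr} shows that each of $I$ and $J$ is strongly Gram $\ZZ$-congruent to one and the same one-peak poset (the one associated there with the Dynkin type $\AA_m$), so $I\approx_\ZZ J$ because $\approx_\ZZ$ is an equivalence relation. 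If $I$ and $J$ are both principal, put $N\eqdef|I|=|J|$ (so $N=m+1$), $p\eqdef c(I)$, $q\eqdef c(J)$; by part~\ref{thm:mainthm:stronggram:princ:pancongr} we have $I\approx_\ZZ{}_p\wt\AA_N$ and $J\approx_\ZZ{}_q\wt\AA_N$, and by part~\ref{thm:mainthm:stronggram:princ:coxpol} the assumption $\cox_I=\cox_J$ reads $t^N-t^p-t^{N-p}+1=t^N-t^q-t^{N-q}+1$. Reading off the two negative terms gives $\{p,\,N-p\}=\{q,\,N-q\}$; since the cycle index is normalized so as to lie in $\{k:\tfrac{N}{2}\le k\le N-2\}$ (cf.\ Theorem~\ref{thr:posnum:a}\ref{thr:posnum:a:printc}), this forces $p=q$, whence ${}_p\wt\AA_N={}_q\wt\AA_N$ and, again because $\approx_\ZZ$ is an equivalence relation, $I\approx_\ZZ J$.

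The only step with genuine content is this last one, namely that $\cox_I$ recovers the cycle index in the principal case; it is transparent from the factorization $\cox_I(t)=(t^p-1)(t^{N-p}-1)$, whose expansion exhibits $-t^p$ and $-t^{N-p}$ as the only terms of intermediate degree, so that the unordered pair $\{p,N-p\}$---and, with the normalization of $p$, the number $p$ itself---is read off the polynomial directly. I do not expect any further obstacle, since all of the structural input---the dichotomy, the canonical normal forms, and their Coxeter polynomials---is already delivered by Theorem~\ref{thm:mainthm:stronggram}.
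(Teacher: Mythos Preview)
Your proof is correct and follows essentially the same route as the paper's own argument: the forward implication is the general fact that strong Gram $\ZZ$-congruence preserves the Coxeter polynomial (the paper just cites Fact~\ref{fact:sgc_conseq}\ref{fact:sgc_conseq:coxinvariants}), and the converse is handled by reducing both posets to the canonical forms of Theorem~\ref{thm:mainthm:stronggram} and checking that the Coxeter polynomial distinguishes those forms. The only cosmetic differences are that the paper deduces $\crk_I=\crk_J$ from $|I|=\deg\cox_I=\deg\cox_J=|J|$ together with $|I|-\crk_I=m=|J|-\crk_J$, whereas you use the value $\cox_I(1)$; and the paper packages the two cases as Corollaries~\ref{cor:posit_eqv} and~\ref{cor:princ_eqv} rather than arguing them inline. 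One small wording caveat: the incidence matrix $C_I$ is unipotent upper triangular only after a suitable relabelling of $I$, but this does not affect your conclusion that $C_I\in\Gln$ and that $\Cox_I$ and $\Cox_J$ are $\ZZ$-conjugate.
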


This is a  solution to the following variant of the Coxeter spectral analysis problem formulated by Simson in \cite{simsonCoxeterGramClassificationPositive2013,simsonFrameworkCoxeterSpectral2013} and studied in \cite{gasiorekAlgorithmicCoxeterSpectral2020,gasiorekAlgorithmicStudyNonnegative2015,gasiorekCoxeterTypeClassification2019,gasiorekOnepeakPosetsPositive2012,gasiorekStructureNonnegativePosets2022arxiv,jimenezgonzalezCoxeterInvariantsNonnegative2022,jimenezgonzalezGraphTheoreticalFramework2021,jimenezgonzalezIncidenceGraphsNonnegative2018,mrozCongruencesEdgebipartiteGraphs2016,zajacPolynomialTimeInflation2020,zajacStructureLoopfreeNonnegative2019}, for a wide class of connected non-negative posets $I$ of Dynkin type $\AA_m$.

\begin{problem}\label{problem_main}
        When the Coxeter polynomial $\cox_I(t)\in\ZZ[t]$ (equivalently: the Coxeter spectrum $\specc_I\subseteq \CCC$) of a finite poset $I$
        determines the incidence matrix $C_I\in\MM_{n}(\ZZ)$ %
        uniquely, up to the strong Gram
        $\ZZ$-congruence?
\end{problem}

There is also a natural question of not only determining whether posets $I$ and $J$ are strong Gram $\ZZ$-congruent, but computing a matrix that defines this congruence, see \cite{simsonCoxeterGramClassificationPositive2013,simsonFrameworkCoxeterSpectral2013}. 

\begin{problem}\label{problem_alg}
Construct an algorithm that computes such a $\ZZ$-invertible matrix $B\in\Gln$, that $B^{tr}\cdot C_I\cdot B=C_J$,  for any pair of strong Gram $\ZZ$-congruent connected non-negative posets $I$ and $J$.
\end{problem}
We present a solution to Problem~\ref{problem_alg} for a wide class of connected non-negative posets, i.e., posets of Dynkin type $\AA_m$, in 
the form of efficient polynomial-time deterministic algorithms: Algorithm~\ref{alg:refl:posit} (for $\crk_I=0$) and Algorithm~\ref{alg:refl:princ} (for $\crk_I\neq 0$), see Section~\ref{sec:algorithms} for throughout description and complexity analysis.

\section{Preliminaries and notation}\label{section:preliminaries}
Throughout the paper, by $\NN\subseteq\ZZ\subseteq\QQ\subseteq \RR\subseteq \CCC$ we denote the set of non-negative integers, the ring of integers, the rational, the real, and the complex number fields, respectively.
We view $\ZZ^n$, with $n\geq 1$, as a free abelian group, and we denote by $e_1, \ldots, e_n$ the standard $\ZZ$-basis of $\ZZ^n$. We use a row notation for vectors $v=[v_1,\ldots,v_n]$ and we write $v^{tr}$ to denote a column vector. 
Given $n\geq 1$, by $\MM_n(\ZZ)$ we denote the $\ZZ$-algebra of all square $n$ by $n$ matrices, by $E\in \MM_n(\ZZ)$ the identity matrix, and by $\Gl(n,\ZZ):=\{A\in \MM_n(\ZZ), \,\det A\in \{-1, 1\}\}\subseteq\MM_n(\ZZ)$ the general integral $\ZZ$-linear group. 

We say that two square integral matrices $X\in\MM_n(\ZZ)$ and $Y\in\MM_n(\ZZ)$
are $\ZZ$-\textit{congruent} if there exists such a matrix $B\in\Gln$,
that $B^{tr}\cdot X\cdot B=Y$. We denote this relation by $X\sim_\ZZ Y$ and write $X\overset{B}{\sim_\ZZ} Y$to denote the matrix $B\in\MM_n(\ZZ)$ defining the congruence.
\medskip

By a finite \textbf{signed graph} we mean a triplet $G=(V_G, E_G, \sgn_G)$ consisting of a finite 
set of \textit{vertices} $V_G$, a finite set of \textit{edges} $E_G$ and a \textit{sign} function $\sgn_G\colon E_G\to\{-1,1\}$. By an \textit{edge}, we mean a pair
of (not necessarily distinct) vertices and, for simplicity of presentation, throughout the paper we assume that  $V_G=\{1,\ldots,n\}$. 
By a directed [signed] graph (\textit{digraph} or \textit{quiver}) we mean a graph $G$, whose edges $\alpha\in E_G$ have designated source $s(\alpha)\in V_G$ and target $t(\alpha)\in V_G$. Following \cite{simsonCoxeterGramClassificationPositive2013,simsonFrameworkCoxeterSpectral2013}, by a finite \textit{edge-bipartite graph} (bigraph) we mean a signed graph $\Delta=(\Delta_0, \Delta_1, \sgn \colon \Delta_1\mapsto \{+1,-1\})$, with the sign map constant on the multiset $\Delta_1(u,v)=\Delta_1(v,u)\subseteq\Delta_1$ of edges adjacent with the vertices $u,v\in\Delta_0$.
Graphically, we represent bigraphs as graphs with multiple edges, where:
\begin{itemize}
        \item \textit{positive} edges 
        $\Delta_1^+:= \{e\in\Delta_1;\, \sgn(e)=+1 \}$
are denoted by dotted lines $u\,\hdashrule[3pt]{22pt}{0.4pt}{1pt}v$ and
        \item \textit{negative} edges 
        $\Delta_1^- := \{e\in\Delta_1;\, \sgn(e)=-1 \}$
         are denoted by full lines $u\,\rule[3pt]{22pt}{0.4pt}v$.
\end{itemize}
We note that graphs $G=(V_G,E_G)$ can be viewed as bigraphs $\Delta=(V_G,E_G,\sgn)$ with a constant sign function $\sgn(e)\eqdef -1$ for every $e\in E_{G}$. 

Two (di)graphs $G=(V_{G},E_{G})$ and $G'=(V_{G'},E_{G'})$ are called \textbf{isomorphic} $G\simeq G'$ if there exists a bijection $f\colon V_G\to V_{G'}$ that preserves (directed) edges. By \textbf{underlying graph} $\ov D$, we mean a graph obtained from signed digraph $D$ by forgetting the orientation and signs of its edges. We call graph $G$ a \textit{path graph} if $V_G$ is an empty set or $G\simeq\,P_n(u,v)\eqdef \, u\scriptstyle \bullet\,\rule[1.5pt]{22pt}{0.4pt}\,\bullet\,\rule[1.5pt]{22pt}{0.4pt}\,\,\hdashrule[1.5pt]{12pt}{0.4pt}{1pt}\,\rule[1.5pt]{22pt}{0.4pt}\,\bullet \displaystyle v$ and $u\neq v$ (if $u=v$, we call $G$ a \textbf{cycle}). We say that a digraph $D$ is an \textbf{oriented path} if 
$\ov D\simeq\,P_n(u,v)$
and $a\neq b$ (if $a=b$ we call $D$ an \textbf{oriented cycle}). A digraph $D$ is called \textbf{acyclic} if it contains no oriented cycle $\vec P(a,a)\eqdef$
\tikzsetnextfilename{diagcyclegraphn}%
\begin{tikzpicture}[baseline=(n11.base),label distance=-2pt,xscale=0.65, yscale=0.74]
\node[circle, fill=black, inner sep=0pt, minimum size=3.5pt, label={[name=n11]left:$a$}] (n1) at (0  , 0  ) {};
\node[circle, fill=black, inner sep=0pt, minimum size=3.5pt] (n2) at (1  , 0  ) {};
\node (n3) at (2  , 0  ) {$ $};
\node[circle, fill=black, inner sep=0pt, minimum size=3.5pt] (n4) at (5  , 0  ) {};
\node (n5) at (3  , 0  ) {$ $};
\node[circle, fill=black, inner sep=0pt, minimum size=3.5pt] (n6) at (4  , 0  ) {};
\draw[<-,shorten <= 2.50pt, shorten >= 3.50pt] (n1) .. controls (0.8,0.4) and (4.,0.4) .. (n4);
\draw [line width=1.2pt, shorten <= 2.50pt, line cap=round, dash pattern=on 0pt off 5\pgflinewidth, -, shorten <= .50pt, shorten >= -4.50pt] (n3) to  (n5);
\foreach \x/\y in {1/2, 6/4}
        \draw [->, shorten <= 2.50pt, shorten >= 2.50pt] (n\x) to  (n\y);
\draw [->, shorten <= 2.50pt, shorten >= -2.50pt] (n2) to  (n3);
\draw [->, shorten <= -2.50pt, shorten >= 2.50pt] (n5) to  (n6);
\end{tikzpicture}, i.e., induced subdigraph isomorphic to $\vec P(a,a)$. A graph $G=(V_G,E_G)$ is \textbf{connected} if $P(u, v)\subseteq G$ for every $u\neq v\in V_G$. A digraph $D$ (bigraph $\Delta$) is connected if the graph $\ov D$ ($\ov \Delta$) is connected. A connected (di)graph is called a \textbf{tree} if it does not contain any cycle. We call a vertex $v$ of a digraph $D=(V,A)$ a \textit{source} (minimum) if it is not a target of any edge $\alpha\in A$. Analogously, we call $v\in D$ a \textit{sink} (maximum) if it is not a source of any edge. By degree $\deg(v)$ of a vertex $v$ in (di)graph $G$  we mean a number of edges incident with $v$. A (di)graph $G$  is called $2$-regular if $\deg(v)=2$ for every $v\in G$.

Every (bi)graph $\Delta$ is uniquely determined by the non-symmetric Gram matrix $\check G_\Delta\in\MM_n(\ZZ)$
\begin{equation}\label{eg:bigr:nonsymgrammat}
\check G_\Delta\eqdef\begin{bmatrix*}[r]
1+d_{11}^\Delta & d_{12}^\Delta & \cdots & d_{1n}^\Delta\\
0 & 1+d_{22}^\Delta & \cdots & d_{2n}^\Delta\\
\vdots & \vdots & \ddots & \vdots\\
0 & 0 & \hdots & 1+d_{nn}^\Delta
\end{bmatrix*},
\textnormal{ where }
d_{ij}^\Delta:=|\Delta_1^+(i,j)|-|\Delta_1^-(i,j)|
\end{equation}
and the symmetric Gram matrix 
$G_\Delta:=\tfrac{1}{2}(\check G_\Delta+\check G_\Delta^{tr})\in\MM_n(\QQ)$,
see~\cite{simsonCoxeterGramClassificationPositive2013}. 
\smallskip

By a finite poset $I= (\{1,\ldots,n\}, \preceq)$ we mean a partially ordered set $I$, 
with respect to a partial order relation $\preceq$. Every finite poset $I$ is uniquely encoded in the form of its \textit{incidence matrix} 
\begin{equation}\label{eq:pos_inc_mat}
C_I=[c_{ij}]\in\MM_{n}(\ZZ),
\textnormal{ where } c_{ij}=1 \textnormal{ if } i \preceq j \textnormal{ and }c_{ij}=0 \textnormal{ otherwise}, 
\end{equation}
see~\cite{SimZaj_intmms}.
We often use the \textit{Hasse digraph} $\CH(I)$ representation of $I$, where $\CH(I)=(V,A)$ is an acyclic digraph with vertices $V=\{1,\ldots,n\}$ and edges defined as follows: there is an oriented edge (arrow) $i\to j\in A$ if and only if $i\preceq j$ and there is no such a $k\in\{1,\ldots,n\}\setminus \{i,j\}$ that $i\preceq k\preceq j$, see \cite[Section 14.1]{Si92}. We note that $\CH(I)$ encodes $I$\ uniquely. We associate with $I$ its (incidence) bigraph $\Delta_I=(V,A)$, where $V=\{1,\ldots,n\}$ and $i\,\hdashrule[3pt]{18pt}{0.4pt}{1pt}j\in A$ iff $i\preceq j$. We call $I$ \textbf{connected} if the bigraph $\Delta_I$ or, equivalently,
the digraph $\CH(I)$ is connected. Following~\cite{SimZaj_intmms} we associate with $I= (\{1,\ldots,n\}, \preceq)$:
\begin{itemize}
        \item the symmetric Gram matrix $\smash{G_I:=\frac{1}{2}(C_I+C_I^{tr})\in\MM_n(\QQ)}$,\inlineeqno{eq:pos_gram_mat}
        \item the incidence bilinear form
        $b_I\colon \ZZ^n\times\ZZ^n\to\ZZ$, 
        $b_I(x,y)\eqdef\sum_{i \preceq j}x_i y_j=x\cdot C_I\cdot y^{tr}$,\inlineeqno{eq:bilinear_form}
        \item the incidence quadratic form
        $q_I\colon \ZZ^n\to\ZZ$,
        \begin{equation}\label{eq:incidence_form}
        q_I(x)\eqdef b_I(x,x)=\sum_{i \preceq j}x_i x_j=\sum_{i}x_i^2+ \sum_{i \prec j}x_i x_j= x\cdot C_I\cdot x^{tr} =x\cdot G_I\cdot x^{tr},
        \end{equation}
        \item the Coxeter matrix $\smash{\Cox_I:=-C_I\cdot C_I^{-tr}\in\MM_n(\ZZ)}$,
where $\smash{C_I^{-tr}:=(C_I^{tr})^{-1}=(C_I^{-1})^{tr}}$,\inlineeqno{eq:pos_cox_mat}
        \item the Coxeter polynomial $\cox_I(t):=\det (tE-\Cox_I)\in\ZZ[t]$,\inlineeqno{eq:pos_cox_poly}
        \item the Coxeter spectrum $\specc_I:=\{\lambda \in \CCC;\, \cox_I(\lambda)=0\}\subseteq\CCC$, \inlineeqno{eq:pos_cox_specc}

that is, the multiset of all eigenvalues of the Coxeter matrix $\Cox_I\in\MM_n(\ZZ)$ (with multiplicities).
\end{itemize}
We note that $C_I=\cG_{\Delta_I}$ if and only if vertices of the digraph $\CH(I)$ are topologically ordered.

A poset $I$ is called \textit{non-negative} [of corank $\crk_I\geq 0$] if its symmetric Gram matrix \eqref{eq:pos_gram_mat} is positive semi-definite of rank $n-\crk_I\geq 0$, see~\cite{SimZaj_intmms}. By a \textit{positive} [\textit{principal}] poset, we mean a non-negative $I$ of corank $\crk_I=0$ [$\crk_I=1$].
\begin{definition}\label{df:congruences}
Two posets $I$ and $J$ are said to be:
strongly [weakly] Gram $\ZZ$\hyp congruent and denoted by $I\approx_\ZZ J$
[$I\sim_\ZZ J$] if their incident matrices \eqref{eq:pos_inc_mat} [symmetric Gram matrices \eqref{eq:pos_gram_mat}] are 
$\ZZ$\hyp congruent, i.e. 
$C_I\sim_\ZZ C_J$ [$G_I\sim_\ZZ G_J$]. A poset $I$ is strongly [weakly] Gram $\ZZ$\hyp congruent with a bigraph $\Delta'$, i.e., $I\approx_\ZZ \Delta'$
[$I\sim_\ZZ \Delta'$] if $C_I\sim_\ZZ \cG_{\Delta'}$ [$G_I\sim_\ZZ G_{\Delta'}$].
\end{definition}
It is straightforward to check that
\begin{equation}\label{eq:isomorphism:stronggram}
I\simeq J \Rightarrow I\approx_\ZZ J 
\end{equation}
for any two posets $I$ and $J$, that is, isomorphic posets are strongly Gram $\ZZ$\hyp congruent. In this case, the $\ZZ$-congruence is defined by the permutation matrix $B_\sigma\in\Gln$, where $\sigma\colon I\to J$ is a bijection defining $I\simeq J$ isomorphism.

Given a non-negative poset $I$, following~\cite{SimZaj_intmms}
and~\cite{barotQuadraticFormsCombinatorics2019,simsonCoxeterGramClassificationPositive2013,simsonFrameworkCoxeterSpectral2013}, we use the following definitions.
\begin{itemize}
\item The Coxeter number $\bc_I\in\NN$ is the order of $\Cox_I$ \eqref{eq:pos_cox_mat} in the group $\Gln$, that is, such a minimal integer $\bc_I\geq 1$ that $\Cox_I^{\bc_I}=E$. If such a number does not exist, we set $\bc_I\eqdef\infty.$\inlineeqno{eq:pos_cox_num}
\item The reduced Coxeter number $\check \bc_I\in\NN$, that is, such a minimal $\check \bc_I\geq 1$, that for every $1\leq i  \leq n$
\begin{equation}\label{eq:pos_red_cox_num}
e_i\cdot(\Cox_I^{\check\bc_I}-E)\in\Ker q_I,\textnormal{ where }
\Ker q_I\eqdef \{v\in\ZZ^n;\, q_I(v)=0\}.
\end{equation}
\end{itemize}

The basic properties of a strong Gram $\ZZ$-congruence are summarized in the following fact. In particular, we note that Coxeter spectrum \eqref{eq:pos_cox_specc}, Coxeter polynomial \eqref{eq:pos_cox_poly}, Coxeter number \eqref{eq:pos_red_cox_num} and reduced Coxeter number \eqref{eq:pos_red_cox_num} are invariant under strong Gram $\ZZ$-congruence.
\begin{fact}\label{fact:sgc_conseq}
        Assume that $I$ and $J$ are finite partially ordered sets, and $\Delta$ is a bigraph.
        \begin{enumerate}[label={\textnormal{(\makebox[\widthof{d}][c]{\alph*})}}]
                \item\label{fact:sgc_conseq:coxinvariants} $I\approx_\ZZ J \Rightarrow \specc_I=\specc_J,\ \cox_I(t)=\cox_J(t),\ \bc_I=\bc_J$\textnormal{ and }$\check \bc_I=\check \bc_J$
                \item\label{fact:sgc_conseq:weakcongr} $I\approx_\ZZ J \Rightarrow I\sim_\ZZ J$
                \item\label{fact:sgc_conseq:nnegcorank} If poset $I$ [bigraph $\Delta$] is non-negative of corank $r$ and $I\sim_\ZZ J$ [$I\sim_\ZZ \Delta$], then poset $J$ [bigraph~$\Delta$] is non-negative of corank $r$.
        \end{enumerate}
\end{fact}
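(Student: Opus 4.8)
\begin{proofof}{Proof of Fact~\ref{fact:sgc_conseq}}
The plan is to deduce all three statements from the way the matrix $B$ defining the congruence transports $G_I$, $\Cox_I$ and the quadratic form $q_I$. Fix $B\in\Gln$ with $B^{tr}\cdot C_I\cdot B=C_J$. Transposing gives $G_J=\tfrac{1}{2}(C_J+C_J^{tr})=B^{tr}\cdot G_I\cdot B$; inverting the transpose gives $C_J^{-tr}=B^{-1}\cdot C_I^{-tr}\cdot B^{-tr}$, whence $\Cox_J=-C_J\cdot C_J^{-tr}=B^{tr}\cdot\Cox_I\cdot B^{-tr}$; and for every $v\in\ZZ^n$ one has $q_J(v)=v\cdot G_J\cdot v^{tr}=(vB^{tr})\cdot G_I\cdot(vB^{tr})^{tr}=q_I(vB^{tr})$.

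Part~\ref{fact:sgc_conseq:weakcongr} is then immediate, since $G_J=B^{tr}\cdot G_I\cdot B$ with $B\in\Gln$ is precisely the relation $G_I\sim_\ZZ G_J$, i.e.\ $I\sim_\ZZ J$. For part~\ref{fact:sgc_conseq:nnegcorank} I would use that $B$ is invertible over $\QQ$ (hence over $\RR$), so $v\mapsto vB^{tr}$ is a linear automorphism: then $v\cdot G_J\cdot v^{tr}=(vB^{tr})\cdot G_I\cdot(vB^{tr})^{tr}\geq 0$ shows $G_J$ is positive semi-definite once $G_I$ is, while $\rank G_J=\rank(B^{tr}G_IB)=\rank G_I$; hence the corank $\crk=n-\rank$ is preserved. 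The bigraph variant is the same: $I\sim_\ZZ\Delta$ means $G_I\sim_\ZZ G_\Delta$ by Definition~\ref{df:congruences}, and the identical computation applies with $G_\Delta$ in place of $G_J$.

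For part~\ref{fact:sgc_conseq:coxinvariants}, the identity $\Cox_J=B^{tr}\cdot\Cox_I\cdot B^{-tr}$ exhibits $\Cox_I$ and $\Cox_J$ as conjugate in the group $\Gln$; in particular they share the characteristic polynomial — giving $\cox_I(t)=\cox_J(t)$ by \eqref{eq:pos_cox_poly} and then $\specc_I=\specc_J$ by \eqref{eq:pos_cox_specc} — and they have the same order in $\Gln$, giving $\bc_I=\bc_J$ (the value $\infty$ allowed). For the reduced Coxeter number I would first read off $\Ker q_J=\{v\in\ZZ^n;\, vB^{tr}\in\Ker q_I\}$ from $q_J(v)=q_I(vB^{tr})$, then compute $e_i\cdot(\Cox_J^k-E)=(e_iB^{tr})\cdot(\Cox_I^k-E)\cdot B^{-tr}$, and conclude that condition~\eqref{eq:pos_red_cox_num} holds for $J$ at the exponent $k$ if and only if $(e_iB^{tr})\cdot(\Cox_I^k-E)\in\Ker q_I$ for all $i$; taking the least admissible $k$ on each side then gives $\check\bc_I=\check\bc_J$.

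The one step that needs a little care is this last equivalence. To pass from the condition stated on the $\ZZ$-basis $\{e_iB^{tr}\}_{1\le i\le n}$ of $\ZZ^n$ to the same condition on the standard basis $\{e_i\}_{1\le i\le n}$, I would use that $v\mapsto v\cdot(\Cox_I^k-E)$ is $\ZZ$-linear and that $\Ker q_I$ is a $\ZZ$-submodule of $\ZZ^n$; the latter holds because $I$ is non-negative (which one may assume whenever $\check\bc$ is in play, and which is then forced for $J$ by parts~\ref{fact:sgc_conseq:weakcongr} and~\ref{fact:sgc_conseq:nnegcorank}), so that $G_I$ is positive semi-definite and $\Ker q_I=\{v\in\ZZ^n;\, G_I\cdot v^{tr}=0\}$ is closed under addition. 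All remaining steps are routine matrix bookkeeping.
\end{proofof}
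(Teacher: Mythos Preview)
Your proof is correct and self-contained; the paper itself does not spell out an argument at all but simply refers the reader to \cite[Lemma~2.1]{simsonCoxeterGramClassificationPositive2013} and \cite[Lemma~3]{SimZaj_intmms}, where the same matrix-transport computations (conjugation of $\Cox_I$, congruence of $G_I$, the induced bijection $\Ker q_J\leftrightarrow\Ker q_I$) are carried out. In substance your route is the standard one underlying those references; the only thing you add beyond a bare citation is the explicit justification that $\Ker q_I$ is a subgroup in the non-negative case, which is exactly what is needed to pass the reduced Coxeter number condition between the two $\ZZ$-bases $\{e_i\}$ and $\{e_iB^{tr}\}$.
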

\begin{proof}
        Apply arguments of~\cite[Lemma~2.1]{simsonCoxeterGramClassificationPositive2013} and \cite[Lemma 3]{SimZaj_intmms}.
\end{proof}

In the case of non-negative posets $I$, the kernel $\Ker q_I$ is a group that admits a 
$(k_1,\ldots, k_r)$-special $\ZZ$-basis in the following sense.

\begin{fact}\label{fact:specialzbasis} Assume that $I=(\{1,\ldots,n\}, \preceq)$ is a connected non-negative poset
of corank $r \geq 1$.%
\begin{enumerate}[label=\normalfont{(\alph*)}]
\item\label{fact:specialzbasis:existance} There exist integers $1 \leq j_1 < \ldots < j_r \leq n$ such that free abelian group $\Ker q_I\subseteq \ZZ^n$ of rank $r \geq 1$ admits a 
$(k_1,\ldots, k_r)$-special $\ZZ$-basis $h^{(k_1)},\ldots, h^{(k_r)}\in\Ker q_I$, that is, $h^{(k_i)}_{k_i} = 1$ and $h^{(k_i)}_{k_j} = 0$ for $1 \leq i,j \leq r$ and $i \neq j$.
\item\label{fact:specialzbasis:subbigraph} $I^{(k_i)}\eqdef I\setminus\{k_i\}$ is a connected non-negative poset
of corank $r - 1\geq 0$.
\item\label{fact:specialzbasis:positsubbigraph} Poset $I^{(k_1,\ldots,k_r)}\eqdef I\setminus\{k_1,\ldots,k_r\}$ is of corank $0$ (i.e., positive) and connected.\inlineeqno{eq:positive_subbigraph}
\end{enumerate}
\end{fact}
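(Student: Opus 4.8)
The plan is to prove the three parts simultaneously by induction on the corank $r\geq 1$, with essentially all the content concentrated in a single-vertex reduction step; parts (b) and (c) then fall out of the bookkeeping. Since the statement does not depend on the labelling of the vertex set, I may apply the induction hypothesis to posets supported on any subset of $\{1,\dots,n\}$. The single-vertex step I want is: \emph{if $I$ is connected and non-negative of corank $r\geq 1$, then there is a vertex $k$ such that $I^{(k)}=I\setminus\{k\}$ is connected and non-negative of corank $r-1$, and $\Ker q_I$ contains a vector $h$ with $h_k=1$.}

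To get the single-vertex step, I would combine a structural input with an elementary corank estimate. First, the incidence form of $I$ restricts along the coordinate subgroup $\langle e_j: j\neq k\rangle$ to the incidence form of the full subposet $I^{(k)}$, and — crucially — since $G_I$ is positive semidefinite, $q_I(v)=v\cdot G_I\cdot v^{tr}=0$ forces $G_I\cdot v^{tr}=0$; hence $\Ker q_{I^{(k)}}=\Ker q_I\cap\langle e_j: j\neq k\rangle$, which has corank at most one in $\Ker q_I$ (and $\Ker q_I$ is itself a pure, i.e.\ direct-summand, subgroup of $\ZZ^n$ of rank $r$). So deleting a vertex never drops the corank by more than one, and it drops it at $k$ exactly when some radical vector of $q_I$ has nonzero $k$-th coordinate. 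The existence of a vertex whose deletion simultaneously preserves connectedness and genuinely lowers the corank is a hereditary structural feature of connected non-negative posets — the poset analogue of deleting an extending node of a Euclidean diagram to recover a Dynkin one — which I would extract from the Hasse-digraph description of such posets in \cite{gasiorekStructureNonnegativePosets2022arxiv,SimZaj_intmms,simsonFrameworkCoxeterSpectral2013}. That the coordinate at such a $k$ can be taken equal to $1$, not merely nonzero, I would obtain from the known explicit description of radical vectors of non-negative posets: these can be chosen in $\{-1,0,1\}^n$ and supported on "cyclic" configurations of the Hasse digraph, so a suitable deletion vertex lies in the support of such a vector with coefficient $\pm1$ (replace $v$ by $-v$ if needed).

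Granting the single-vertex step, the induction runs cleanly. For $r=1$ it is the whole statement: $h$ is primitive (its $k$-th coordinate is $1$), hence generates the rank-one group $\Ker q_I$, so $\{h\}$ is a $(k)$-special basis and $I^{(k)}$ is positive and connected. For the inductive step, fix $k_r$ with $I':=I^{(k_r)}$ connected non-negative of corank $r-1$ and $h\in\Ker q_I$ with $h_{k_r}=1$, and apply the induction hypothesis to $I'$ to get distinct indices $k_1,\dots,k_{r-1}$ (none equal to $k_r$) and a $(k_1,\dots,k_{r-1})$-special basis $h^{(k_1)},\dots,h^{(k_{r-1})}$ of $\Ker q_{I'}\subseteq\Ker q_I$. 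Replace $h$ by $\tilde h:=h-\sum_{i=1}^{r-1}h_{k_i}\,h^{(k_i)}\in\Ker q_I$, so that $\tilde h_{k_i}=0$ for $i<r$ and $\tilde h_{k_r}=1$. Since $\Ker q_{I'}=\Ker q_I\cap\langle e_j: j\neq k_r\rangle$, every $v\in\Ker q_I$ satisfies $v-v_{k_r}\tilde h\in\Ker q_{I'}$, whence $\Ker q_I=\Ker q_{I'}\oplus\ZZ\tilde h$; so $h^{(k_1)},\dots,h^{(k_{r-1})},\tilde h$ is a $\ZZ$-basis of $\Ker q_I$ and, after reindexing $\{k_1,\dots,k_r\}$ increasingly, a special $\ZZ$-basis, giving (a). Part (c) for $I$ is part (c) for $I'$ because $I\setminus\{k_1,\dots,k_r\}=I'\setminus\{k_1,\dots,k_{r-1}\}$. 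For (b): $I^{(k_r)}=I'$ is as required by construction, and for $i<r$ the vector $h^{(k_i)}$ has $k_i$-th coordinate $1$, so $\Ker q_I\not\subseteq\langle e_j: j\neq k_i\rangle$ and $I^{(k_i)}$ has corank exactly $r-1$; its connectedness I would again read off from the Hasse-digraph structure (or arrange by choosing $k_1,\dots,k_r$ compatibly from the outset).

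The main obstacle is the single-vertex step, and within it the passage from "some radical coordinate at $k$ is nonzero" to "some radical vector equals $1$ at $k$", i.e.\ unimodularity of the coordinate-$k$ projection $\Ker q_I\to\ZZ$. This fails for arbitrary positive semidefinite integral forms, so it genuinely uses that $q_I$ is the incidence form of a poset, and I would settle it via the explicit $\{-1,0,1\}$-description of its radical vectors; coordinating that choice with the connectedness required in (b) is the only other delicate point, and everything else is the routine linear algebra above.
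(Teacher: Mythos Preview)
The paper's own proof is a two-line citation: observe that with a topological ordering one has $C_I=\check G_{\Delta_I}$, so the poset is a special case of a (loop-free, non-negative) edge-bipartite graph, and then invoke \cite[Proposition~5.1]{simsonSymbolicAlgorithmsComputing2016a} and \cite[Theorem~2.1]{zajacStructureLoopfreeNonnegative2019} where the statement is already established for bigraphs. No independent argument is given.

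Your approach is different in presentation: you unpack the inductive skeleton (corank drops by at most one under vertex deletion; a special basis can be built one coordinate at a time) and carry out the linear algebra carefully, which is useful and correct. But the substantive content --- existence of a deletion vertex that \emph{simultaneously} preserves connectedness and realises a $\pm1$ radical coordinate, and the $\{-1,0,1\}$ form of radical vectors --- you still defer to exactly the same body of literature the paper cites. So the two approaches differ in how much scaffolding is shown, not in what is actually proved versus quoted.

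The one place your argument is genuinely thinner than the paper's citation is part~(b) for $i<r$. Your induction only yields connectedness of $I^{(k_r)}=I'$; for the remaining $I^{(k_i)}$ you establish the corank but wave at connectedness (``read off from the Hasse-digraph structure'' or ``arrange compatibly from the outset''). Neither phrase is a proof: the inductive hypothesis tells you $I'\setminus\{k_i\}$ is connected, not $I\setminus\{k_i\}$, and these are different posets. Closing this gap is precisely what the structural results in \cite{zajacStructureLoopfreeNonnegative2019} provide, so you end up needing the same citation the paper uses --- you have just postponed it rather than replaced it.
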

\begin{proof}
Since, without loss of generality, one may assume that $C_I=\cG_{\Delta_I}$ (i.e., vertices of the digraph $\CH(I)$ are topologically ordered), apply \cite[Proposition 5.1]{simsonSymbolicAlgorithmsComputing2016a} and \cite[Theorem 2.1]{zajacStructureLoopfreeNonnegative2019}.
\end{proof}
\vspace*{-1ex}
{\newcommand{\mxs}{0.80}
        \tikzexternaldisable
        \begin{longtable}{@{}r@{\,}l@{\,}l@{\quad}r@{\,}l@{}}
                $\AA_n\colon$ & \grapheAn{\mxs}{1} & $\scriptstyle (n\geq 1);$\\[0.2cm]
                $\DD_n\colon$ & \grapheDn{\mxs}{1} & $\scriptstyle (n\geq 1);$ & $\EE_6\colon$ & \grapheEsix{\mxs}{1}\\[0.2cm]
                $\EE_7\colon$ & \grapheEseven{\mxs}{1} &  & $\EE_8\colon$ & \grapheEeight{\mxs}{1}\\
                \caption{Simply-laced Dynkin diagrams}\label{tbl:Dynkin_diagrams}
                \tikzexternalenable
\end{longtable}}%
\vspace*{-1ex}

Following~\cite{SimZaj_intmms,gasiorekAlgorithmicCoxeterSpectral2020,barotQuadraticFormsCombinatorics2019} with any connected non-negative poset $I$ we associate its \textbf{Dynkin type} $\Dyn_I\in \{\AA_n,\DD_n,\EE_6,\EE_7,\EE_8\}$.\pagebreak

\begin{definition}\label{df:Dynkin_type}
Assume that $I$ is a connected non-negative poset  of corank $r\geq 0$.
The Dynkin type $\Dyn_I$ is defined to be the unique simply-laced Dynkin diagram of Table~\ref{tbl:Dynkin_diagrams} viewed as a bigraph
\[
\Dyn_I  \in \{\AA_n,\DD_n,\EE_6,\EE_7,\EE_8\}
\]
such that $\check \Delta_I$ is weakly Gram $\ZZ$-congruent with $\Dyn_I$, 
where
\begin{itemize}
\item $\check \Delta_I\eqdef \Delta_I$ if $r=0$ (i.e., $I$ is positive),
\item $\check \Delta_I\eqdef\Delta_I^{(k_1,\ldots,k_r)}=\Delta_{I\setminus\{k_1,\ldots,k_r\}}\subseteq\Delta_I$~\eqref{eq:positive_subbigraph} if $r>0$.
\end{itemize}
The bigraph $\Dyn_I$ can be obtained by means of the inflation algorithm~\cite[Algorithm 4.6]{zajacPolynomialTimeInflation2020}, in a polynomial time~\cite[Corollary 4.1]{zajacPolynomialTimeInflation2020}.
\end{definition}

In \cite{gasiorekStructureNonnegativePosets2022arxiv}, we described the non-negative posets $I$ of Dynkin type $\Dyn_I=\AA_m$ in terms of their Hasse digraphs as follows. 
\begin{fact}\label{fact:a:hasse}
Assume that $I=(\{1,\ldots n\},\preceq)$ is a finite connected poset. %
\begin{enumerate}[label=\normalfont{(\alph*)}]
\item\label{fact:a:hasse:posit} $I$ is non-negative of Dynkin type $\Dyn_I=\AA_n$ if and only if
        $\CH(I)$ is an oriented path.
\item\label{fact:a:hasse:princ} $I$ is non-negative of Dynkin type $\Dyn_I=\AA_{n-1}$  if and only if $\CH(I)$ is $2$-regular and $I$ has at least two maximal elements.
\item\label{fact:a:hasse:crkbiggeri} If $I$ is non-negative of Dynkin type $\Dyn_I=\AA_{n-\crk_I}$, then $\crk_I\in\{0,1\}$.
\end{enumerate}
\end{fact}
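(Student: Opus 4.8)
The plan is to translate the statement into properties of the incidence quadratic form $q_I$~\eqref{eq:incidence_form} and to use two structural facts: the Dynkin type is a weak Gram $\ZZ$-congruence invariant (Definition~\ref{df:Dynkin_type}), and for an induced subposet $I'\subseteq I$ the form $q_{I'}$ is the restriction of $q_I$ to $\bigoplus_{i\in I'}\ZZ e_i$. Hence, if $I$ is non-negative of Dynkin type $\AA_m$, then no induced subposet of $I$ may have an incidence form that is singular when $q_I$ is positive definite, nor one whose Dynkin type fails to be a disjoint union of diagrams $\AA_{m'}$. Two four-element posets carry the whole obstruction: the \emph{diamond} $D$, with relations $a\prec b$, $a\prec c$, $b\prec d$, $c\prec d$ (so also $a\prec d$), whose incidence form is positive definite of Dynkin type $\DD_4$; and the \emph{crown} $\wt D$, with relations $a\prec b$, $c\prec b$, $c\prec d$, $a\prec d$, whose symmetric Gram matrix is singular (up to sign changes it is that of $\wt\AA_3$). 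Both claims follow from a $4\times 4$ determinant/signature computation.

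Part~\ref{fact:a:hasse:posit}. If $\Dyn_I=\AA_n$ with $n=|I|$, then $\crk_I=0$ and $q_I$ is positive definite. A vertex $v$ of $\CH(I)$ of degree $\geq 3$ spans, together with three of its Hasse-neighbours and whatever the orientations of the incident arrows, an induced subposet whose comparability graph is $K_{1,3}$ or $K_4$ minus an edge, i.e.\ one with incidence form of Dynkin type $\DD_4$ — impossible. Likewise $\CH(I)$ has no cycle in its underlying graph: a chordless one, regarded as an induced subposet $C$, is a $2$-regular connected digraph; if it has a single source and sink, then both of its monotone arcs have length $\geq 2$ (otherwise a cycle edge would be a non-covering comparability), so $C$ contains an induced diamond $D$, contradicting $\Dyn_I=\AA_n$; otherwise $C$ has at least two maximal elements and its incidence form is singular (a nonzero radical vector is read off from the orientation pattern of the cycle), contradicting positivity of $q_I$. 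Thus $\CH(I)$ is a connected acyclic digraph of maximal degree $\leq 2$, hence an oriented path. Conversely, if $\CH(I)$ is an oriented path on $n$ vertices, induct on $n$: deleting a path-endpoint — which is an extremal element, so the deletion creates no new covering relations — leaves an oriented path on $n-1$ vertices, positive of type $\AA_{n-1}$ by induction, and reattaching the endpoint is realised by a $\ZZ$-invertible transformation of the incidence matrix supported on the last monotone segment of the path (equivalently, by a bounded sequence of the inflation moves of~\cite{zajacPolynomialTimeInflation2020}), giving $C_I\sim_\ZZ\cG_{\AA_n}$; hence $I$ is positive of Dynkin type $\AA_n$.

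Parts~\ref{fact:a:hasse:princ} and~\ref{fact:a:hasse:crkbiggeri} I would treat together through Fact~\ref{fact:specialzbasis}. If $\crk_I=r\geq 1$ with $(k_1,\dots,k_r)$-special $\ZZ$-basis $h^{(k_1)},\dots,h^{(k_r)}$ of $\Ker q_I$, then $I^{(k_1)}=I\setminus\{k_1\}$ is connected non-negative of corank $r-1$ and, sharing with $I$ the positive part $I^{(k_1,\dots,k_r)}$, of Dynkin type $\AA_m$ as well. For $r=1$: $I^{(k_1)}$ is positive of type $\AA_{n-1}$, so by part~\ref{fact:a:hasse:posit} its Hasse digraph is an oriented path; restoring $k_1$ cannot create a vertex of degree $\geq 3$ (which would give a $\DD_4$ induced subposet), so $k_1$ joins the two endpoints of that path, $\CH(I)$ becomes a single oriented cycle, hence $2$-regular, and it must have at least two sinks — otherwise it has a unique source and sink and thus an induced diamond $D$. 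Conversely, if $\CH(I)$ is $2$-regular with at least two maximal elements, then deleting one maximal element yields an oriented path (so $I$ minus that element is positive of type $\AA_{n-1}$); bordering its positive definite Gram matrix shows $\crk_I\leq 1$, while an explicit radical vector with entries in $\{1,-1\}$ read off from the up/down pattern of the cycle shows $\crk_I=1$ and $\Dyn_I=\AA_{n-1}$. Finally, for part~\ref{fact:a:hasse:crkbiggeri}, suppose $\crk_I=r\geq 2$ with $|I|$ minimal: then $I^{(k_1)}$ is a smaller connected non-negative poset of type $\AA$ and corank $r-1$, so minimality forces $r-1=1$; by the case just proved $\CH(I^{(k_1)})$ is an oriented cycle, and inspecting how the single remaining vertex $k_1$ attaches to that cycle forces a vertex of degree $\geq 3$ in $\CH(I)$ (hence a $\DD_4$ induced subposet) — contradiction. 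Therefore $\crk_I\in\{0,1\}$.

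The main obstacle is the converse of~\ref{fact:a:hasse:posit} and the analogous forcing step in~\ref{fact:a:hasse:princ}: one must upgrade a purely combinatorial hypothesis on $\CH(I)$ to an explicit $\ZZ$-congruence of \emph{incidence} matrices, not merely of symmetric Gram matrices, since the Dynkin type and strong Gram $\ZZ$-congruence are sensitive to the non-symmetric part of $C_I$. A related subtlety — the reason every reduction above deletes an extremal element — is that removing an arbitrary poset element may introduce transitive covering relations, so $\CH(I\setminus\{v\})$ is controlled by $\CH(I)$ only when $v$ is a maximal or a minimal element.
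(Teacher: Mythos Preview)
The paper does not prove Fact~\ref{fact:a:hasse}; it is imported from~\cite{gasiorekStructureNonnegativePosets2022arxiv} with no argument given here, so there is no in-paper proof to compare against. Your outline is therefore being judged on its own.

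The strategy is the right one, and most of the pieces are in place: the $\DD_4$ obstruction (both the $K_{1,3}$ and the $K_4$-minus-an-edge configurations really do have symmetric Gram determinant $4/16$, matching $\DD_4$ rather than $\AA_4$) forces every vertex of $\CH(I)$ to have degree at most~$2$, and the explicit $\{\pm 1\}$ radical vector on an oriented cycle with at least two sinks settles the principal case. There is, however, a genuine gap in your argument for~\ref{fact:a:hasse:crkbiggeri}. You remove $k_1$ via Fact~\ref{fact:specialzbasis}, obtain a corank-$1$ poset whose Hasse digraph is an oriented cycle, and then speak of ``how the single remaining vertex $k_1$ attaches to that cycle'' in $\CH(I)$. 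But Fact~\ref{fact:specialzbasis} gives no control over whether $k_1$ is extremal, and --- as your own closing paragraph correctly warns --- if $k_1$ is not extremal then deleting it creates new cover relations, so some edges of the cycle $\CH(I^{(k_1)})$ need not be edges of $\CH(I)$ at all. The picture ``cycle with one extra vertex glued on'' is then simply not a description of $\CH(I)$, and the degree-$\geq 3$ contradiction does not follow from what you have written.

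The repair is to abandon the inductive detour through Fact~\ref{fact:specialzbasis} for part~\ref{fact:a:hasse:crkbiggeri} and argue globally: your $\DD_4$ obstruction already works for non-negative $I$ of type $\AA_m$ of \emph{any} corank (a positive induced subform of a non-negative type-$\AA_m$ form lies in a sub-root-system of $\AA_m$, hence is a product of type-$\AA$ pieces), so every vertex of $\CH(I)$ has degree $\leq 2$ regardless of $\crk_I$. Connectedness then forces $\overline{\CH(I)}$ to be a single path or a single cycle; the diamond obstruction eliminates one-sink cycles; and parts~\ref{fact:a:hasse:posit} and~\ref{fact:a:hasse:princ} (once their converses are written out properly --- your inductive sketch for the converse of~\ref{fact:a:hasse:posit} is still only a sketch) pin these to coranks $0$ and $1$ respectively. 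That route avoids ever deleting a possibly non-extremal element.
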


Summing up, every non-negative poset $I$ of Dynkin type $\AA_m$  has its Hasse quiver isomorphic with either an oriented path (and is positive) or an oriented cycle (and is principal). While every oriented path yields a positive poset, this is not the case for oriented cycles: only a cycle with at least two sinks (sources) yields a Hasse digraph of a principal poset of Dynkin type $\AA_m$, see~\cite{gasiorekStructureNonnegativePosets2022arxiv} for a detailed analysis. 

\section{Combinatorial tools}

Inspired by the applications of ($min,max$)-equivalence, introduced by Bondarenko~\cite{bondarenkoMinMaxEquivalence2005} in the study of posets with respect to the non-negativity of Tits quadratic form \cite{bondarenkoClassificationSerialPosets2019,bondarenkoMinMaxEquivalence2008}, we introduce the notion of a ($min,max$)-reflection.
By the \textit{neighbourhood} $N_{I}(a)\subseteq I$ of $a\in I=(I, \preceq)$ in $I$ we mean the set 
\[
N_{I}(a)\eqdef \{b\in I;\, a\preceq b\ \land\ \neg \exists_{c\neq b} a \preceq c \preceq b \} \cup
\{d\in I;\, d\preceq a\ \land\ \neg \exists_{c\neq d} d \preceq c \preceq a \}.
\]
In other words $N_{I}(a)\subseteq I$ is a set of elements that either \textit{covers} $a$ or \textit{are covered} by $a$.

\begin{definition}\label{df:minmaxrefl}
Let $I=(I, \preceq)$ be a finite poset. For a minimal (resp. maximal) element $a\in I$ the ($min,max$)-reflection of $I$ at $a$ is the poset $S_a I\eqdef(I,\preceq_\bullet)$, where $\preceq_\bullet\in I\times I$ is a transitive closure of the $\preceq_\circ$
 relation, defined as follows:

\begin{itemize}
\item for every $b\in N_I(a)$, we have $b \preceq_\circ a$ if and only if $a \preceq b$ (resp. $a \preceq_\circ b$ if and only if $b \preceq a$),
\item for every $c,d\in I\setminus \{a\}$, we have $c \preceq_\circ d$ if and only if $c \preceq d$ .
\end{itemize}
\end{definition}

The following example illustrates the ($min,max$)-reflection operation.
\begin{example}\label{ex:minmaxref}
Consider $5$ element poset 
$J=(\{1,2,3,4,5\}, \{2\preceq 1,\ab 3\preceq 1,\ab 4\preceq 1,\ab 3 \preceq 2,\ab 4\preceq  2,\ab 3 \preceq 5,\ab 4\preceq 5 \})$.
For the maximal element element $5\in J$, with $N_J(5)=\{3,4\}$, we have
\[S_5 J=(\{1,2,3,4,5\}, \{2\preceq_\bullet 1, 3\preceq_\bullet 1, 4\preceq_\bullet 1, 5\preceq_\bullet 1, 3\preceq_\bullet 2, 4\preceq_\bullet 2, 5\preceq_\bullet 2, 5\preceq_\bullet 3, 5\preceq_\bullet 4\}),\]
where the relation $\preceq_\bullet$  is a transitive closure of the relation
 $\preceq_\circ$ defined as 
\[\{2\preceq_\circ 1, 3\preceq_\circ 1, 4\preceq_\circ 1, 3 \preceq_\circ 2, 4\preceq_\circ  2, 5 \preceq_\circ 3, 5\preceq_\circ  4\}\subseteq J\times J.\] 

The ($min,max$)-reflection $J\mapsto S_5J$ has the following interpretation at the Hasse quiver level.
\begin{center}
$\CH(J)\colon$
\tikzsetnextfilename{minmaxExample}
\begin{tikzpicture}[baseline={([yshift=-2.75pt]current bounding box)},
label distance=-2pt,xscale=0.6, yscale=0.6]
\node[circle, fill=black, inner sep=0pt, minimum size=3.5pt, label=below:$\scriptstyle 1$] (n1) at (0  , 1  ) {};
\node[circle, fill=black, inner sep=0pt, minimum size=3.5pt, label=below:$\scriptstyle 2$] (n2) at (2  , 1  ) {};
\node[circle, fill=black, inner sep=0pt, minimum size=3.5pt, label=right:$\scriptstyle 3$] (n3) at (3  , 2  ) {};
\node[circle, fill=black, inner sep=0pt, minimum size=3.5pt, label=right:$\scriptstyle 4$] (n4) at (3  , 0  ) {};
\node[circle, fill=black, inner sep=0pt, minimum size=3.5pt, label=below:$\scriptstyle 5$] (n5) at (4  , 1  ) {};
\foreach \x/\y in {2/1, 3/2, 3/5, 4/2, 4/5}
        \draw [-stealth, shorten <= 2.50pt, shorten >= 2.50pt] (n\x) to  (n\y);
\end{tikzpicture}
\tikzsetnextfilename{minmaxExamplearrii}
\begin{tikzpicture}[baseline={([yshift=-2.75pt]current bounding box)},label distance=-2pt,xscale=0.65, yscale=0.74]
\node (n1) at (0  , 0.40) {$ $};
\node (n3) at (1.50, 0.40) {$ $};
\node (n4) at (1.50, 0  ) {$ $};
\draw [|-to] (n1) to  node[above=-2.0pt, pos=0.5] {{\scriptsize $S_5$}} (n3);
\end{tikzpicture}
$\CH(S_5J)\colon$
\tikzsetnextfilename{minmaxExamplev}
\begin{tikzpicture}[baseline={([yshift=-2.75pt]current bounding box)},
label distance=-2pt,xscale=0.6, yscale=0.6]
\node[circle, fill=black, inner sep=0pt, minimum size=3.5pt, label=below:$\scriptstyle 1$] (n1) at (0  , 1  ) {};
\node[circle, fill=black, inner sep=0pt, minimum size=3.5pt, label=below:$\scriptstyle 2$] (n2) at (2  , 1  ) {};
\node[circle, fill=black, inner sep=0pt, minimum size=3.5pt, label=right:$\scriptstyle 3$] (n3) at (3  , 2  ) {};
\node[circle, fill=black, inner sep=0pt, minimum size=3.5pt, label=right:$\scriptstyle 4$] (n4) at (3  , 0  ) {};
\node[circle, fill=black, inner sep=0pt, minimum size=3.5pt, label=below:$\scriptstyle 5$] (n5) at (4  , 1  ) {};
\foreach \x/\y in {2/1, 3/2, 4/2, 5/3, 5/4}
        \draw [-stealth, shorten <= 2.50pt, shorten >= 2.50pt] (n\x) to  (n\y);
\end{tikzpicture}
\end{center}
\end{example}

As Example~\ref{ex:minmaxref} illustrates, ($min,max$)-reflection $S_aI$ at $a\in I$ can be viewed as a reflection $s_a$ of Hasse quiver $\CH(I)$ \cite[Chapter VII.4]{ASS}. Contrary to the quiver operation and ($min,max$)-equivalence, the $I\mapsto S_aI$ operation does not preserve equivalence of quadratic forms, in general.

\begin{example}\label{ex:minmaxref:def}
Consider $5$ element poset $I=(\{1,2,3,4,5\}, \preceq)$ of Example~\ref{ex:minmaxref}. The quadratic form $q_I\colon \ZZ^5\!\to \ZZ$ \eqref{eq:incidence_form} is indefinite, as $q_I([-4,\!-4,5,7,\!-6])\!=\!-10$,
while the quadratic form $q_{S_5I}\colon \ZZ^5\!\to \ZZ$ is positive, since $S_5I$ is isomorphic with the positive one-peak poset ${}_{1}\DD^*_{4} \diamond \AA_{0}$, see \cite{gasiorekOnepeakPosetsPositive2012}.
\end{example}

\begin{definition}\label{df:junctionfree}
Let $I=(I,\preceq)$ be a finite poset. We call a minimal [maximal] element $a\in I$ \textit{junction free} if and only if for all $b,c\in N_I(a)$ and $d\in I\setminus\{a\}$
we have 
\[(a\preceq b\preceq d)\land (a\preceq c\preceq d) \Rightarrow b=c\qquad [d\preceq b\preceq a)\land (d\preceq c\preceq a) \Rightarrow b=c].
\]
\end{definition}

In the poset $S_5J$ of Example~\ref{ex:minmaxref}, the element $1$ is junction free and $5$ is not, since $2\preceq_\bullet 3 \preceq_\bullet 5$ and $2\preceq_\bullet 4 \preceq_\bullet 5$.

\begin{proposition}\label{prop:minmaxrefl}
Let $I=(\{1,\ldots,n\},\preceq)$ be a finite poset and $a\in I$ be a minimal or maximal element that is junction free in both $I$ and $S_aI$.
\begin{enumerate}[label={\textnormal{(\alph*)}}]
        \item\label{cor:minmaxrefl:diagcomm} The diagram
\begin{equation}\label{cor:minmaxrefl:diag}
\begin{tikzpicture}[baseline={([yshift=-2.75pt]current bounding box)},
label distance=-2pt]
\matrix [matrix of nodes, ampersand replacement=\&, nodes={minimum height=1.5em,minimum width=1.5em,
text depth=0ex,text height=1ex, execute at begin node=$, execute at end node=$}
, column sep={50pt,between origins}, row sep={40pt,between origins}]
{
|(n1)|\ZZ^n\times\ZZ^n  \& |(n2)|\ZZ \\
|(n3)|\ZZ^n\times\ZZ^n  \& \\
};
\draw [-to, shorten <= -.50pt, shorten >= .50pt] (n1) to  node[above, pos=0.5] {{\scriptsize $b_I$}} (n2);
\draw [-to, shorten <= 2.0pt, shorten >= .50pt] (n3) to  node[below right, pos=0.5] {{\scriptsize $b_{S_aI}$}} (n2);
\draw [-to, shorten <= .50pt, shorten >= .50pt] (n1) to  node[right, pos=0.5] {{\scriptsize $\simeq$}} (n3);
\path (n3) to node[left, pos=0.5] {{\scriptsize $s_a\times s_a$}} (n1);
\end{tikzpicture}
\end{equation}
is commutative, where the group isomorphism $\ZZ^n\ni x\xmapsto{s_a} y\in\ZZ^n$, is defined as follows: $y_a\eqdef-x_a$, $y_i\eqdef x_i+x_a$ for $i\in N_I(a)$ and $y_i\eqdef x_i$ otherwise.
\item\label{cor:minmaxrefl:strongcongr} Poset $I$ is strongly Gram $\ZZ$-congruent with poset $S_aI$.
\item\label{cor:minmaxrefl:invol} $I\mapsto S_aI$ operation is involution, that is $S_a(S_aI) = I$.
\end{enumerate}
\end{proposition}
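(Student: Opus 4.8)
The plan is to verify part \ref{cor:minmaxrefl:diagcomm} by a direct computation on basis vectors, then derive \ref{cor:minmaxrefl:strongcongr} as an immediate matrix-level consequence of the commuting diagram, and finally obtain \ref{cor:minmaxrefl:invol} either by reapplying \ref{cor:minmaxrefl:diagcomm} to $S_aI$ or by observing that $s_a$ is its own inverse. Throughout I may assume, without loss of generality, that $a$ is minimal; the maximal case follows by a dual argument (or by passing to the opposite poset).

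First I would set up notation. Write $N\eqdef N_I(a)$ for the set of elements covering $a$ (since $a$ is minimal). The key structural observation, which uses the junction-free hypothesis on $a$ in $I$, is a description of the relation $\preceq$ in terms of $\preceq_\circ$: for $c,d\neq a$ one has $c\preceq d$ iff $c\preceq_\circ d$; and $a\preceq d$ iff $d=a$ or there is a (unique, by junction-freeness) $b\in N$ with $b\preceq d$. In the reflected poset, $\preceq_\bullet$ is the transitive closure of $\preceq_\circ$, and the junction-free hypothesis on $a$ in $S_aI$ gives the symmetric description: $d\preceq_\bullet a$ iff $d=a$ or there is a unique $b\in N$ with $d\preceq_\bullet b$, equivalently $d\preceq b$. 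With this in hand I would compute $b_{S_aI}(y,y)$ where $y=s_a(x)$, splitting the sum $\sum_{i\preceq_\bullet j} y_iy_j$ according to whether $i$ or $j$ equals $a$. The terms not involving $a$ contribute $\sum_{i\preceq j,\, i,j\neq a} x_ix_j$ verbatim; the terms of the form $d\preceq_\bullet a$ contribute $\sum_b (x_b+x_a)(-x_a)$-type expressions together with $y_a^2=x_a^2$; and one checks that the bookkeeping of the "$+x_a$" shifts on the coordinates in $N$ exactly reconstructs the missing terms $\sum_{a\preceq j} x_ax_j$ of $b_I(x,x)=q_I(x)$. Since a symmetric bilinear form over $\ZZ$ is determined by its associated quadratic form after the standard polarization identity (and $b_I, b_{S_aI}$ are the bilinear forms, not merely their symmetrizations, but the diagram only asserts commutativity of $q$'s through $b$'s evaluated on the diagonal—here one should be slightly careful and instead verify $b_{S_aI}(s_a(x), s_a(x')) = b_I(x,x')$ directly, the computation being the same bilinear bookkeeping), the diagram \eqref{cor:minmaxrefl:diag} commutes.

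For \ref{cor:minmaxrefl:strongcongr}: let $B=B_{s_a}\in\MM_n(\ZZ)$ be the matrix of the linear map $s_a$ in the standard basis, i.e. $B$ sends $e_a\mapsto -e_a+\sum_{i\in N}e_i$ and fixes $e_j$ for $j\neq a$. Then $\det B = -1$, so $B\in\Gln$, and the commutativity of \eqref{cor:minmaxrefl:diag} rephrased via the identity $b_I(x,x') = x\cdot C_I\cdot x'^{tr}$ from \eqref{eq:bilinear_form} reads $x\cdot C_I\cdot x'^{tr} = (xB)\cdot C_{S_aI}\cdot (x'B)^{tr}$ for all $x,x'\in\ZZ^n$, that is $C_I = B\cdot C_{S_aI}\cdot B^{tr}$. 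Taking inverses/transposes this is exactly a $\ZZ$-congruence $C_I\sim_\ZZ C_{S_aI}$, hence $I\approx_\ZZ S_aI$ by Definition~\ref{df:congruences}.

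For \ref{cor:minmaxrefl:invol}: the element $a$ is minimal in $I$ iff it is minimal in $S_aI$ (the reflection only reverses the edges at $a$ when $a$ is maximal; when $a$ is minimal the reflected relation still has $a$ below its neighbours in $N$... in fact one checks $a$ keeps the same extremal status), and by hypothesis $a$ is junction free in both $I$ and $S_aI$; moreover $N_{S_aI}(a)=N_I(a)=N$. Applying part \ref{cor:minmaxrefl:diagcomm} to $S_aI$ gives a reflection map $s_a'$ with the analogous defining formula, and since that formula depends only on $a$ and $N$, we get $s_a' = s_a$; thus $s_a$ is an involution on $\ZZ^n$ ($s_a(s_a(x))=x$ is visible directly from $y_a=-x_a$, $y_i=x_i+x_a$), and the poset $S_a(S_aI)$ has incidence bilinear form equal, via $s_a\circ s_a=\mathrm{id}$, to that of $I$, so $S_a(S_aI)=I$ as posets (the incidence bilinear form, equivalently $C_I$, determines the poset).

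The main obstacle I expect is the combinatorial core of step one: correctly describing $\preceq_\bullet$ (the transitive closure) in closed form and checking that the junction-free hypotheses on $a$ in \emph{both} $I$ and $S_aI$ are exactly what is needed to (i) rule out "new" comparabilities sneaking in through the transitive closure other than those of the form $d\preceq_\bullet a$, and (ii) guarantee the uniqueness of the intermediate neighbour $b\in N$ so that the shifted coordinates are counted with the right multiplicity. Once that description is pinned down, the bilinear-form bookkeeping and the passage to matrices are routine.
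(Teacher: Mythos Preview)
Your approach to parts \ref{cor:minmaxrefl:diagcomm} and \ref{cor:minmaxrefl:strongcongr} is essentially the same as the paper's: a direct bilinear-form computation, with the two junction-free hypotheses used exactly as you describe (uniqueness of the intermediate neighbour in $N$, and ruling out spurious comparabilities from the transitive closure). The paper carries this out by writing $b_I$ explicitly and evaluating $b_I(s_a(x),s_a(y))$, then reading off that the result is $b_{S_aI}(x,y)$; your ``bilinear bookkeeping'' is the same calculation.

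Two points in your treatment of \ref{cor:minmaxrefl:invol} need correction. First, the claim that ``$a$ is minimal in $I$ iff it is minimal in $S_aI$'' is false: by Definition~\ref{df:minmaxrefl} the covering relations at $a$ are reversed, so a minimal $a$ becomes maximal in $S_aI$ (Example~\ref{ex:minmaxref} shows this explicitly). This does not actually damage your argument, since you only use that $N_{S_aI}(a)=N_I(a)$, which is correct. Second, and more substantively, your strategy of ``applying part \ref{cor:minmaxrefl:diagcomm} to $S_aI$'' is circular: the hypotheses of \ref{cor:minmaxrefl:diagcomm} for $S_aI$ require that $a$ be junction free in $S_a(S_aI)$, which you do not know until you have established $S_a(S_aI)=I$. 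The paper sidesteps this by noting that $s_a$ (equivalently the matrix $B_a$) is an involution and leaving the rest implicit; the clean fix is to prove \ref{cor:minmaxrefl:invol} directly from Definition~\ref{df:minmaxrefl}, observing that reflecting the cover relations at $a$ twice restores them, independently of \ref{cor:minmaxrefl:diagcomm}.
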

\begin{proof}
\ref{cor:minmaxrefl:diagcomm} Without loss of generality, we may assume that $a=1$ is a minimal element of 
$I=(\{1,\ldots,n\},\preceq)$ and $N_I(a)=\{2,\ldots,k\}$. By assumptions, $\ZZ$-bilinear form $b_I\colon\ZZ^n\times\ZZ^n\to\ZZ$ \eqref{eq:bilinear_form} has a form
{
\allowdisplaybreaks
\begin{align*}
b_I(x,y) &= x_1y_1+\sum_{i=2}^{k}y_i\left(x_1+x_i+ \sum_{j=k+1}^n c_{j,i}x_j \right)+
\sum_{i=k+1}^{n}y_i\left(x_i+\sum_{j\neq i} c_{j,i}x_{j}\right)\\
&= x_{1}\left(\sum_{i=1}^k y_i + \sum_{i=k+1}^n c_{1,i} y_{i} \right) 
+b_{\check I}(x,y),
\end{align*}
}%
where $c_{i,j}\in\{0,1\}$ and $b_{\check I}(x,y)\eqdef b_I(x|_{x_1=0},y|_{y_1}=0)$ is the $\ZZ$-bilinear form associated with the poset $I\setminus\{1\}$. By direct calculations, one easily checks that
{
\allowdisplaybreaks
\begin{align*}
b_I(s_1(x),s_1(y)) &= x_1\sum_{i=k+1}^{n}y_i\left(\sum_{j=2}^k c_{j,i} - c_{1,i} \right)+
y_1\left(\sum_{i=1}^{k} x_i +\sum_{i=k+1}^{n}x_i\sum_{j=2}^{k} c_{i,j} \right)\ 
+b_{\check I}(x,y).
\end{align*}
}%

We claim that $\ZZ$-bilinear form $b_I(s_1(x),s_1(y))$ describes poset $S_1I$, i.e., $b_{S_1I} = b_I(s_1(x),s_1(y))$.
Indeed, 
for any $i>k$ and $j\in\{2,\ldots,k\}$ we have the following.
\begin{enumerate}[label={\textnormal{(\roman*)}}]
\item At most one $c_{j,i}\neq 0,$ since $1\preceq j$ and $1$ is \textit{junction free} in $I$. If that is the case, $c_{1,i}=1$ by transitivity of the relation $\preceq$ and, consequently, we have 
$x_1\sum_{i=k+1}^{n}y_i\left(\sum_{j=2}^k c_{j,i} - c_{1,i} \right)=0$.
\item At most one $c_{i,j}\neq 0,$ since otherwise $1$ would not be \textit{junction free} in $S_1I$.
\end{enumerate}
Summing up, $\ZZ$-bilinear form $b_I(s_1(x),s_1(y))$ describes such a poset $J=(I,\preceq_\bullet)$, that $J\setminus \{1\}$ coincides with $I\setminus \{1\}$, $1$ is a maximal element in $J$ and $b\preceq_\bullet 1$ iff $c\preceq b$ for some $c\in N_I(1)$. Hence $J=S_1I$ by Definition~\ref{df:minmaxrefl}.\smallskip

\ref{cor:minmaxrefl:strongcongr} Follows directly from \ref{cor:minmaxrefl:diagcomm}. In particular, $I \overset{B_a}{\approx_\ZZ} S_aI$, where $B_a=[b^a_{i,j}]\in\MM_n(\ZZ)$
\begin{equation}\label{eq:reflmat}
b^a_{i,j}=
\begin{cases}
-1 & \textnormal{if } i=j= a,\\
\phantom{-}1 & \textnormal{if } i=j\textnormal{ and }i\neq a\textnormal{ or }i\in N_I(a)\textnormal{ and } j=a,\\
\phantom{-}0 & \textnormal{otherwise} \\
\end{cases}
\end{equation}
 is the matrix defining the group isomorphism $s_a\colon x\mapsto x\cdot B_a^{tr}$.
Since $s_a$ is an involution (equivalently: $B_a$ is an involutory matrix),
\ref{cor:minmaxrefl:invol}
follows and the proof is finished. 
\end{proof}

Proposition~\ref{prop:minmaxrefl} describes conditions under which ($min,max$)-reflection defines strong Gram $\ZZ$-congruence. In the following example, we illustrate its applications.

\begin{example}\label{ex:minmaxref:strongcongr}
Consider poset 
$J$ of Example~\ref{ex:minmaxref}, described by the Hasse quiver $\CH(J)$ \eqref{ex:minmaxref:strongcongr:eq}.
We note that every element $a\in\{1,3,4,5\}$ is junction free in $J$, but $a$ is junction free in $S_aJ$ only in the case of $a=1$. Hence, in view of Proposition~\ref{prop:minmaxrefl},
$J$ is strong Gram $\ZZ$-congruent with $S_1J$. 
\begin{equation}\label{ex:minmaxref:strongcongr:eq}
\CH(J)\colon
\tikzsetnextfilename{minmaxExample}
\begin{tikzpicture}[baseline={([yshift=-2.75pt]current bounding box)},
label distance=-2pt,xscale=0.6, yscale=0.6]
\node[circle, fill=black, inner sep=0pt, minimum size=3.5pt, label=below:$\scriptstyle 1$] (n1) at (0  , 1  ) {};
\node[circle, fill=black, inner sep=0pt, minimum size=3.5pt, label=below:$\scriptstyle 2$] (n2) at (2  , 1  ) {};
\node[circle, fill=black, inner sep=0pt, minimum size=3.5pt, label=right:$\scriptstyle 3$] (n3) at (3  , 2  ) {};
\node[circle, fill=black, inner sep=0pt, minimum size=3.5pt, label=right:$\scriptstyle 4$] (n4) at (3  , 0  ) {};
\node[circle, fill=black, inner sep=0pt, minimum size=3.5pt, label=below:$\scriptstyle 5$] (n5) at (4  , 1  ) {};
\foreach \x/\y in {2/1, 3/2, 3/5, 4/2, 4/5}
        \draw [-stealth, shorten <= 2.50pt, shorten >= 2.50pt] (n\x) to  (n\y);
\end{tikzpicture}\quad
\tikzsetnextfilename{minmaxExamplearrix}
\begin{tikzpicture}[baseline={([yshift=-2.75pt]current bounding box)},label distance=-2pt,xscale=0.65, yscale=0.74]
\node (n1) at (0  , 0.40) {$ $};
\node (n2) at (0  , 0  ) {$ $};
\node (n3) at (1.50, 0.40) {$ $};
\draw [|-to] (n1) to  node[above=-2.0pt, pos=0.5] {{\scriptsize $S_1$}} (n3);
\end{tikzpicture}\quad
\CH(S_1J)
\tikzsetnextfilename{minmaxExamplei}
\begin{tikzpicture}[baseline={([yshift=-2.75pt]current bounding box)},
label distance=-2pt,xscale=0.6, yscale=0.6]
\node[circle, fill=black, inner sep=0pt, minimum size=3.5pt, label=below:$\scriptstyle 1$] (n1) at (0  , 1  ) {};
\node[circle, fill=black, inner sep=0pt, minimum size=3.5pt, label=below:$\scriptstyle 2$] (n2) at (2  , 1  ) {};
\node[circle, fill=black, inner sep=0pt, minimum size=3.5pt, label=right:$\scriptstyle 3$] (n3) at (3  , 2  ) {};
\node[circle, fill=black, inner sep=0pt, minimum size=3.5pt, label=right:$\scriptstyle 4$] (n4) at (3  , 0  ) {};
\node[circle, fill=black, inner sep=0pt, minimum size=3.5pt, label=below:$\scriptstyle 5$] (n5) at (4  , 1  ) {};
\foreach \x/\y in {1/2, 3/2, 3/5, 4/2, 4/5}
        \draw [-stealth, shorten <= 2.50pt, shorten >= 2.50pt] (n\x) to  (n\y);
\end{tikzpicture}
\end{equation}
In particular, we have $\ZZ^5 \ni [x_1, x_2, x_3, x_4, x_5] \xmapsto{s_1} [-x_1, x_1+x_2,x_3,x_4,x_5]\in\ZZ^5$,
\begin{align*}
b_J(x,y)&= y_{1} \left(x_{1}\! +\! x_{2}\! +\! x_{3}\! +\! x_{4}\right) + y_{2} \left(x_{2}\! +\! x_{3}\! +\! x_{4}\right) + y_{3} x_{3} + y_{4} x_{4} + y_{5} \left(x_{3}\! +\! x_{4}\! +\! x_{5}\right),\\[0.2cm]
b_J(s_1(x),s_1(y)) &=- y_{1}\! \left(x_{2}\! +\! x_{3}\! +\! x_{4}\right)\! +\! \left(y_{1}\! +\! y_{2}\right)\! \left(x_{1}\! +\! x_{2}\! +\! x_{3}\! +\! x_{4}\right)\!+\! y_{3} x_{3}\! +\! y_{4} x_{4}\!+\! y_{5}\! \left(x_{3}\! +\! x_{4}\! +\! x_{5}\right) \\
&={} y_{1} x_{1} + y_{2} \left(x_{1} + x_{2} + x_{3} + x_{4}\right) + y_{3} x_{3} + y_{4} x_{4}+ y_{5} \left(x_{3} + x_{4} + x_{5}\right)\\
&={}b_{S_1J}(x,y)\\
\intertext{and}
B_1^{tr}\cdot C_J\cdot B_1 &= 
\begin{bsmallmatrix*}[r]-1 & 1 & 0 & 0 & 0\\0 & 1 & 0 & 0 & 0\\0 & 0 & 1 & 0 & 0\\0 & 0 & 0 & 1 & 0\\0 & 0 & 0 & 0 & 1\end{bsmallmatrix*}
\cdot
\begin{bsmallmatrix*}[r]1 & 0 & 0 & 0 & 0\\1 & 1 & 0 & 0 & 0\\1 & 1 & 1 & 0 & 1\\1 & 1 & 0 & 1 & 1\\0 & 0 & 0 & 0 & 1\end{bsmallmatrix*}
\cdot
\begin{bsmallmatrix*}[r]-1 & 0 & 0 & 0 & 0\\1 & 1 & 0 & 0 & 0\\0 & 0 & 1 & 0 & 0\\0 & 0 & 0 & 1 & 0\\0 & 0 & 0 & 0 & 1\end{bsmallmatrix*}
=
\begin{bsmallmatrix*}[r]1 & 1 & 0 & 0 & 0\\0 & 1 & 0 & 0 & 0\\0 & 1 & 1 & 0 & 1\\0 & 1 & 0 & 1 & 1\\0 & 0 & 0 & 0 & 1\end{bsmallmatrix*}
=
C_{S_1J}.
\end{align*}

\end{example}

Assume that $I=(I,\preceq)$, $|I|=n$, is a finite connected principal poset of Dynkin type $\AA_{n-1}$. By Fact~\ref{fact:a:hasse}\ref{fact:a:hasse:princ}, elements $a_i$ of $I$ can be enumerated in such an order $a_1,\ldots,a_n$, that
\begin{itemize}
\item $a_{r_0},\ldots,a_{r_{k-1}}$ denote all elements that are either minimal or maximal in $I$,
\item for every $j\in\{0,\ldots, k-1\}$ and $j'={j+1}\bmod k$ either
\begin{itemize}

\item $a_{r_s}\preceq a_{r_t}$ for every $a_{r_s},a_{r_t}\in \{a_{r_j}, \ldots, a_{r_{j'}} \}$ where $r_s<r_t$, or 
\item $a_{r_s}\succeq a_{r_t}$ for every $a_{r_s},a_{r_t}\in \{a_{r_j}, \ldots, a_{r_{j'}} \}$ where $r_s<r_t$.
\end{itemize}
\end{itemize}
Since $I$ has at least two maximal elements and $\CH(I)$ is a $2$-regular directed graph, it easily follows that $2\leq k=2s$ is an even number. 
\begin{definition}\label{df:anposet:cycleindex}
The \textit{cycle index} $c(I)\in\{\lceil\frac{n}{2}\rceil,\ldots,n\}$ of the Dynkin type $\AA_{|I|-1}$ principal connected poset $I=(\{1,\ldots,n\},\preceq)$ that has exactly $k=2s$ elements that are either minimal or maximal is
\begin{equation}\label{eq:df:cycleindex}
c(I)\!\eqdef\!\max\Big(\sum_{j=0}^{s-1} |\{a_{r_{2j}},\ldots, a_{r_{2j+1}}\}|,\ \sum_{j=1}^{s-1} |\{a_{r_{2j-1}},\ldots, a_{r_{2j}}\}|\!+\!|\{a_{r_k-1},\ldots,a_n,a_{1}\}|\Big)-s.
\end{equation}
\end{definition}
The cycle index $c(I)$ has the following combinatorial interpretation at the Hasse quiver $\CH(I)$ level. 
\begin{remark}\label{rmk:cycle_index_hasse}
Let $I=(\{1,\ldots,n\},\preceq)$ be a finite connected principal poset of Dynkin type $\AA_{n-1}$. 
We may assume that the (planar) digraph $\CH(I)$ is visualized graphically in the circle layout, that is 
\begin{equation}\label{eq:princ:hasse:pic}
\CH(I)\colon
\tikzsetnextfilename{graphcircposprinc}
\begin{tikzpicture}[baseline={([yshift=-2.75pt]current bounding box)},label distance=-2pt,
xscale=0.64, yscale=0.64]
\draw[draw, fill=cyan!10](0,1) circle (7pt);
\draw[draw, fill=cyan!10](5,2) circle (7pt);
\draw[draw, fill=cyan!10](9,1) circle (7pt);
\draw[draw, fill=cyan!10](5,0) circle (7pt);
\node[circle, fill=black, inner sep=0pt, minimum size=3.5pt, label={[yshift=-0.5ex,xshift=-0.6ex]left:{$\scriptstyle a_{r_0}=a_{1}$}}] (n1) at (0  , 1  ) {};
\node[circle, fill=black, inner sep=0pt, minimum size=3.5pt, label={[yshift=0.6ex]above:{$\scriptstyle a_{2}$}}] (n2) at (1  , 2  ) {};
\node (n3) at (2  , 2  ) {$\scriptstyle $};
\node (n4) at (3  , 2  ) {$\scriptstyle $};
\node[circle, fill=black, inner sep=0pt, minimum size=3.5pt, label={[xshift=-0.6ex,yshift=0.6ex]above:{$\scriptstyle a_{r_2\mppmss1}$}}] (n5) at (4  , 2  ) {};
\node[circle, fill=black, inner sep=0pt, minimum size=3.5pt, label={[yshift=0.6ex]above:{$\scriptstyle a_{r_1}$}}] (n6) at (5  , 2  ) {};
\node[circle, fill=black, inner sep=0pt, minimum size=3.5pt, label={[xshift=0.6ex,yshift=0.6ex]above:{$\scriptstyle a_{r_2\mpppss1}$}}] (n7) at (6  , 2  ) {};
\node (n8) at (7  , 2  ) {$\scriptstyle $};
\node (n9) at (8  , 2  ) {$\scriptstyle $};
\node[circle, fill=black, inner sep=0pt, minimum size=3.5pt, label={[xshift=0.6ex]right:{$\scriptstyle a_{r_t}$}}] (n10) at (9  , 1  ) {};
\node[circle, fill=black, inner sep=0pt, minimum size=3.5pt, label={[xshift=0.6ex,yshift=-0.6ex]below:{$\scriptstyle a_{r_t\mpppss1}$}}] (n11) at (8  , 0  ) {};
\node (n12) at (7  , 0  ) {$\scriptstyle $};
\node (n13) at (6  , 0  ) {$\scriptstyle $};
\node[circle, fill=black, inner sep=0pt, minimum size=3.5pt,label={[yshift=-0.6ex]below:{$\scriptstyle a_{n\mppmss 1}$}}] (n14) at (2  , 0  ) {};
\node[circle, fill=black, inner sep=0pt, minimum size=3.5pt, label={[xshift=-0.6ex,yshift=-0.6ex]below:{$\scriptstyle a_{n}$}}] (n15) at (1  , 0  ) {};
\node (n16) at (3  , 0  ) {};
\node (n17) at (4  , 0  ) {};
\node[circle, fill=black, inner sep=0pt, minimum size=3.5pt, label={[yshift=-0.6ex]below:{$\scriptstyle a_{r_{k-1}}$}}] (n18) at (5  , 0  ) {};

\foreach \x/\y in {1/2, 5/6, 6/7, 10/11, 14/15, 15/1}
        \draw [-, shorten <= 2.50pt, shorten >= 2.50pt] (n\x) to  (n\y);
\foreach \x/\y in {2/3, 7/8, 11/12, 18/17}
        \draw [-, shorten <= 2.50pt, shorten >= -2.50pt] (n\x) to  (n\y);
\foreach \x/\y in {3/4, 8/9, 12/13, 17/16}
        \draw [line width=1.2pt, line cap=round, dash pattern=on 0pt off 5\pgflinewidth, -, shorten <= -1.00pt, shorten >= -2.50pt] (n\x) to  (n\y);
\foreach \x/\y in {4/5, 9/10, 13/18, 16/14}
        \draw [-, shorten <= -2.50pt, shorten >= 2.50pt] (n\x) to  (n\y);
\end{tikzpicture},
\end{equation}
every arrow $(a_i,a_{(i+1)\bmod n})$ in \eqref{eq:princ:hasse:pic} is oriented either clockwise or counterclockwise and every subquiver $\{a_{r_j}, \ldots, a_{r_{j'}}\}\subseteq \CH(I)$, where $j'={j+1}\bmod k$, is an oriented chain. The cycle index $c(I)$ equals $\max(l,r)$, where $r$
denotes the number of arrows oriented clockwise, and $l$ denotes the number of arrows oriented counterclockwise in \eqref{eq:princ:hasse:pic}. 
\end{remark}
We show in Theorem~\ref{thm:mainthm:stronggram} that $c(I)$ is an invariant of strong Gram $\ZZ$-congruence. Moreover, it is uniquely determined by the Coxeter polynomial $\cox_I(t)\in\ZZ[t]$.
\section{Proof of the main theorem}

The general idea of the proof of the Theorem~\ref{thm:mainthm:stronggram} is to reduce an arbitrary connected non-negative poset $I$ of Dynkin type $\AA_m$ to a \textit{canonical} one. 

\begin{definition}\label{df:canonpan}
Let $n\geq 4$ be a natural number. By a canonical two peak poset ${}_p\wt \AA_{n}$ we mean any of the finite connected posets defined by the following Hasse quiver %
\begin{equation}\label{hasse:pospprincpan}
\CH({}_p\wt\AA_n)\colon
\graphOnePeakpanPrinc, 
\end{equation}  
where $\frac{n}{2}\leq p\leq n-2$.
\end{definition}

In the following lemma, we sum up  some of the properties of canonical posets ${}_p\wt \AA_{n}$ that are important from the Coxeter spectral analysis point of view. %

\begin{lemma}\label{lemma:posprincpan}
If $I\eqdef {}_p\wt\AA_n=(\{1,\ldots,n\}, \preceq)$ is a canonical two peak poset \eqref{hasse:pospprincpan},
then:
\begin{enumerate}[label={\textnormal{(\alph*)}}]
\item\label{lemma:posprincpan:nnegDyn} $I$ is principal, $\Dyn_I=\AA_{n-1}$ and $\Ker I=\ZZ\cdot\bh_I\subset \ZZ^n$, where $\bh^I=[-1,-1,0,\ldots,0,1,1]$,
\item\label{lemma:posprincpan:coxpol} $\cox_I(t)=t^n- t^p-t^{n-p}+1=(t-1)^2\nu_p\nu_{n-p} \in\ZZ[t]$, where 
$\nu_p\eqdef 1+t+t^2+\cdots +t^{n-1}$,
\item\label{lemma:posprincpan:coxnum} $\bc_I=\infty$ and $\check \bc_I=\lcm(p,n-p)$,
\item\label{lemma:posprincpan:cycleindex} $c({}_p\wt\AA_n)=p$.
\end{enumerate}
\end{lemma}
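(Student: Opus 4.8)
The plan is to read off $\CH(I)$ from \eqref{hasse:pospprincpan}: every vertex is incident with exactly two covering arrows, so $\CH(I)$ is $2$-regular, and the vertices labelled $n{-}1$ and $n$ are both sinks, so $I$ has two maximal elements. By Fact~\ref{fact:a:hasse}\ref{fact:a:hasse:princ} this gives at once that $I$ is non-negative with $\Dyn_I=\AA_{n-1}$, hence $\crk_I=|I|-(n-1)=1$, i.e. $I$ is principal. For the kernel I would check $q_I(\bh^I)=0$ directly from \eqref{eq:incidence_form}: only the coordinates $1,2,n{-}1,n$ of $\bh^I=-e_1-e_2+e_{n-1}+e_n$ are nonzero, and the comparable pairs among these elements are exactly $1\preceq n{-}1$, $2\preceq n{-}1$, $1\preceq n$, $2\preceq n$ (plus the four diagonal pairs), so $q_I(\bh^I)=4-4=0$. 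Since $\bh^I$ is primitive and $\Ker q_I$ is free abelian of rank $\crk_I=1$ (Fact~\ref{fact:specialzbasis}), this forces $\Ker q_I=\ZZ\cdot\bh^I$.

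\textbf{Part (b).} The labelling $1<2<\cdots<n$ of \eqref{hasse:pospprincpan} is topological, so $C_I$ is upper unitriangular, $\det C_I=1$, and therefore $\cox_I(t)=\det(tE-\Cox_I)=\det\!\big((tC_I^{tr}+C_I)\,C_I^{-tr}\big)=\det(tC_I^{tr}+C_I)$. I would compute this last determinant by elementary row/column operations that successively eliminate the interior chain vertices $3,\dots,p$ and $p{+}1,\dots,n{-}2$ (equivalently, by a short recursion in the two chain lengths), collapsing $tC_I^{tr}+C_I$ to a matrix supported on $\{1,2,n{-}1,n\}$ of determinant $t^n-t^p-t^{n-p}+1$. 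Finally $t^n-t^p-t^{n-p}+1=(t^p-1)(t^{n-p}-1)=(t-1)^2\,\nu_p\,\nu_{n-p}$ with $\nu_k\eqdef(t^k-1)/(t-1)=1+t+\dots+t^{k-1}$ (so the displayed definition of $\nu_p$ in the statement should read $1+t+\dots+t^{p-1}$). This determinant evaluation is the step I expect to be the main obstacle.

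\textbf{Part (c).} From (b), $1$ is a root of $\cox_I(t)$ of multiplicity exactly $2$ since $\nu_p(1)=p\ne0$ and $\nu_{n-p}(1)=n-p\ne0$; thus the eigenvalue $1$ of $\Cox_I$ has algebraic multiplicity $2$. On the other hand $v\,\Cox_I=v\iff v(C_I+C_I^{tr})=0$, so the fixed space of $\Cox_I$ is $\ker G_I=\Ker q_I\otimes\RR$, of dimension $1$ by (a). Hence eigenvalue $1$ has geometric multiplicity $1$, $\Cox_I$ is not diagonalisable, and so $\bc_I=\infty$. Moreover $\Cox_I$ is an isometry of the positive semi-definite $G_I$, hence descends to an orthogonal (thus semisimple) automorphism of the positive-definite space $\RR^n/\Ker q_I$, whose characteristic polynomial is $(t-1)\nu_p\nu_{n-p}$ with all roots $p$-th or $(n-p)$-th roots of unity. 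Consequently $\Cox_I$ is $\RR$-conjugate to $\left(\begin{smallmatrix}1&1\\0&1\end{smallmatrix}\right)\oplus D$, where the plane of the Jordan block has fixed line $\Ker q_I\otimes\RR$ and $D$ is diagonal of order $\lcm(p,n-p)$ (it has a primitive $p$-th and a primitive $(n-p)$-th root of unity among its eigenvalues). Then $\Cox_I^m-E$ is $\RR$-conjugate to $\left(\begin{smallmatrix}0&m\\0&0\end{smallmatrix}\right)\oplus(D^m-E)$; the image of the first summand always lies in $\Ker q_I\otimes\RR$, while $D^m-E$ is invertible unless $\lcm(p,n-p)\mid m$. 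Hence $e_i(\Cox_I^{m}-E)\in\Ker q_I$ for all $i$ iff $\lcm(p,n-p)\mid m$, i.e. $\check\bc_I=\lcm(p,n-p)$.

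\textbf{Part (d).} By part (a), $I$ is a connected principal poset of Dynkin type $\AA_{n-1}$, so Remark~\ref{rmk:cycle_index_hasse} applies: traversing the $n$-cycle $\CH(I)$ of \eqref{hasse:pospprincpan} in one direction, the arc $1\to3\to\dots\to p\to n$ together with the arrow $2\to n{-}1$ gives $p$ arrows pointing with the traversal and the remaining $n-p$ point against it, so $c(I)=\max(p,n-p)=p$, the last equality because $p\ge n/2$ by Definition~\ref{df:canonpan}. Equivalently, via Definition~\ref{df:anposet:cycleindex}: the extremal elements in cyclic order are $1,n{-}1,2,n$, the four oriented chains between consecutive ones have sizes $2,2,n-p,p$, so $c(I)=\max\big(2+(n-p),\,2+p\big)-2=p$. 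This also shows $c(I)$ is determined by $\CH(I)$, as needed in Theorem~\ref{thm:mainthm:stronggram}.
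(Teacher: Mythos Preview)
Your argument is correct and, in parts (b)--(c), takes a genuinely different route from the paper. For (a) you invoke Fact~\ref{fact:a:hasse}\ref{fact:a:hasse:princ} directly, which is slicker than the paper's explicit sum-of-squares decomposition of $q_I$ followed by a citation to \cite{gasiorekOnepeakPosetsPositive2012} for the subposet $I\setminus\{n{-}1\}$. For (b) and (c) the paper realises $\Delta_I$ as the line graph $L(Q_I)$ of an explicit quiver $Q_I$, identifies its two minimally decreasing walks (of lengths $p$ and $n-p$), and then quotes \cite[Theorem~6.3, Corollary~6.4]{jimenezgonzalezCoxeterInvariantsNonnegative2022} to read off $\cox_I$, $\bc_I$, and $\check\bc_I$ all at once. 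Your approach is self-contained: in (b) you compute $\det(tC_I^{tr}+C_I)$ directly by chain elimination (the paper's subsequent Remark notes this also follows from older results of Simson and Boldt), and in (c) you extract $\bc_I$ and $\check\bc_I$ from the Jordan structure of $\Cox_I$, using that $\Cox_I$ is a $G_I$-isometry so that the induced map on $\RR^n/\Ker q_I$ is orthogonal and hence semisimple. The line-graph route sidesteps the determinant calculation you rightly flag as the main obstacle; conversely, your spectral argument for $\check\bc_I$ is a clean template that works for any principal poset once the factorisation of $\cox_I$ is known. Part (d) is essentially the same in both proofs, and your correction of the typo in the definition of $\nu_p$ is right.
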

\begin{proof}
\ref{lemma:posprincpan:nnegDyn} The incidence quadratic form $q_{I}\colon \ZZ^n\to \ZZ$ \eqref{eq:incidence_form} is given by the formula:{\allowdisplaybreaks\begin{align*}
q_{I}(x) =& \sum_{i} x_i^2 + \sum_{\mathclap{\substack{i< j \\ i,j\in\CI_1}}} x_ix_j + 
\sum_{\mathclap{\substack{i< j \\ i,j\in\CI_2}}} x_ix_j
+
x_{n-1}(x_1+x_2)
\\
=& \frac{1}{2} \sum_{i=3}^{n-2} x_i^2 +
\frac{1}{2}\Big(\sum_{i\in\CI_1} x_i\Big)^2 + 
\frac{1}{2}\Big(\sum_{i\in\CI_2} x_i\Big)^2 +
\frac{1}{2}(x_1+x_{n-1})^2 + \frac{1}{2}(x_2+x_{n-1})^2,
\end{align*}}%
where $\CI_1\eqdef\{1,3,\ldots,p,n \}$ and $\CI_2\eqdef\{2,p+1,\ldots,n-2,n \}$. It follows that $q_I(v)\geq 0$ for every $v\in\ZZ^n$ and $\bh^I\in \Ker I$, that is, $I$ is non-negative of corank $\crk_I>0$. Since $\bh^I_{n-1}=1$ and, by \cite[Theorem 5.2]{gasiorekOnepeakPosetsPositive2012}, the subposet $I^{(n-1)}\eqdef I\setminus\{n-1\}={}_p\AA^*_{n-1}$ is positive of Dynkin type $\Dyn_{I^{(n-1)}}=\AA_{n-1}$, we conclude that $\crk_I=1$ ($I$ is principal), $\Dyn_I=\AA_{n-1}$
and $\Ker_I=\bh^I\ZZ\subseteq\ZZ^n$, see Definition~\ref{df:Dynkin_type}.\smallskip

To prove \ref{lemma:posprincpan:coxpol} and \ref{lemma:posprincpan:coxnum} we follow the \textit{line graph}~\cite{beinekeLineGraphsLine2021} technique developed by Jiménez González \cite{jimenezgonzalezCoxeterInvariantsNonnegative2022,%
jimenezgonzalezIncidenceGraphsNonnegative2018,%
jimenezgonzalezGraphTheoreticalFramework2021} in the context of Coxeter spectral classification of Dynkin type $\AA_n$ non-negative edge\hyp bipartite signed graphs (bigraphs). First, we note that elements of the poset 
${}_p\wt\AA_n=(\{1,\ldots,n\}, \preceq)$ are topologically sorted, i.e., $i\preceq j$ implies that $i\leq j$, and the incidence matrix $C_I\in\MM_n(\ZZ)$ is upper triangular. Hence $\check G_{\Delta_I}=C_I$ and we conclude that $\cox_I(t)=\cox_{\Delta_I}(t)$, $\bc_I=\bc_{\Delta_I}$ and $\check \bc_I=\check\bc_{\Delta_I}$.

We recall from~\cite{jimenezgonzalezIncidenceGraphsNonnegative2018} (see also~\cite{jimenezgonzalezGraphTheoreticalFramework2021}) that the \textit{line graph} (known also as \textit{root graph} or \textit{incidence graph}) associated with a finite quiver (digraph) $Q=(Q_0, Q_1,s, t)$ is the signed graph $L(Q)=(Q_1, E)$ whose  set of edges $E$ is defined as follows: two vertices $e_1,e_2\in Q_1$ are connected:
\begin{itemize}
\item by a \textit{positive} edge 
$e_1\hdashrule[3pt]{25pt}{0.4pt}{1pt}e_2\in E$
iff $s(e_1)=s(e_2)$ or $t(e_1)=t(e_2)$,
\item by a \textit{negative} edge 
$e_1\rule[3pt]{22pt}{0.4pt}e_2\in E$
iff $s(e_1)=t(e_2)$ or $t(e_1)=s(e_2)$.
\end{itemize}
It is straightforward to check that $\Delta_I=L(Q_{I})$, where 
\begin{equation*}%
Q_{I}\colon
\tikzsetnextfilename{linegraphpan}
\begin{tikzpicture}[baseline={([yshift=-2.75pt]current bounding box)},
label distance=-2pt,xscale=0.5, yscale=0.5]
\node[circle, draw, inner sep=0pt, minimum size=3.5pt, label=left:$\scriptstyle v_{1}$] (n1) at (2  , 4  ) {};
\node[circle, draw, inner sep=0pt, minimum size=3.5pt, label=below:$\scriptstyle v_{n\mppms p}$] (n2) at (2  , 0  ) {};
\node[circle, draw, inner sep=0pt, minimum size=3.5pt, label=left:$\scriptstyle v_{n}$] (n3) at (0  , 2  ) {};
\node[circle, draw, inner sep=0pt, minimum size=3.5pt, label=left:$\scriptstyle v_{n\mppms p\mppps 1}$] (n4) at (4  , 2  ) {};
\node[circle, draw, inner sep=0pt, minimum size=3.5pt, label={[xshift=-1ex]above:$\scriptstyle v_{n\mppms 1}$}] (n5) at (0  , 6  ) {};
\node[circle, draw, inner sep=0pt, minimum size=3.5pt, label={[xshift=0.5ex,yshift=0ex]above:$\scriptstyle v_{n\mppms 2}$}] (n6) at (1  , 6  ) {};
\node[circle, draw, inner sep=0pt, minimum size=3.5pt, label=above:$\scriptstyle v_{n\mppms p\mppps 2}$] (n7) at (4  , 6  ) {};
\node[circle, draw, inner sep=0pt, minimum size=3.5pt, label=right:$\scriptstyle v_{2}$] (n8) at (6  , 0  ) {};
\node[circle, draw, inner sep=0pt, minimum size=3.5pt, label=right:$\scriptstyle v_{3}$] (n9) at (6  , 1  ) {};
\node[circle, draw, inner sep=0pt, minimum size=3.5pt, label=right:$\scriptstyle v_{n\mppms p\mppms 1}$] (n10) at (6  , 4  ) {};
\draw [-to, shorten <= 2.50pt, shorten >= 2.50pt] (n2) to  node[below=-2.0pt, pos=0.5, sloped] {{\scriptsize $n\mmm 1$}} (n3);
\draw [-to, shorten <= 2.50pt, shorten >= 2.50pt] (n2) to  node[below=-2.0pt, pos=0.5, sloped] {{\scriptsize $2$}} (n4);
\draw [-to, shorten <= 2.50pt, shorten >= 2.50pt] (n1) to  node[above=-2.0pt, pos=0.5, sloped] {{\scriptsize $1$}} (n3);
\draw [-to, shorten <= 2.50pt, shorten >= 2.50pt] (n1) to  node[above=-2.0pt, pos=0.5, sloped] {{\scriptsize $n$}} (n4);
\draw [-to, shorten <= 2.50pt, shorten >= 2.50pt] (n1) to  node[below, pos=0.5, sloped] {{\scriptsize $3$}} (n5);
\draw [-to, shorten <= 2.50pt, shorten >= 2.50pt] (n1) to  node[above, pos=0.5, sloped] {{\scriptsize $4$}} (n6);
\draw [-to, shorten <= 2.50pt, shorten >= 2.50pt] (n1) to  node[below, pos=0.5, sloped] {{\scriptsize $p$}} (n7);
\draw [line width=1.2pt, line cap=round, dash pattern=on 0pt off 5\pgflinewidth, -, shorten <= 6.00pt, shorten >= 8.00pt] ([yshift=-10]n6.east) to  ([yshift=-10]n7.west);
\draw [-to, shorten <= 2.50pt, shorten >= 2.50pt] (n8) to  node[below, pos=0.5, sloped] {{\scriptsize $n\mmm 2$}} (n4);
\draw [-to, shorten <= 2.50pt, shorten >= 2.50pt] (n9) to  node[above, pos=0.6, sloped] {{\scriptsize $n\mmm 3$}} ([yshift=2]n4.east);
\draw [-to, shorten <= 2.50pt, shorten >= 2.50pt] (n10) to  node[above, pos=0.5, sloped] {{\scriptsize $p\ppp 1$}} (n4);
\draw [line width=1.2pt, line cap=round, dash pattern=on 0pt off 5\pgflinewidth, -, shorten <= 8.50pt, shorten >= 4.00pt] ([xshift=-5]n10.south) to  ([xshift=-5]n9.north);
\end{tikzpicture}
\!\textnormal{and}\ \Delta_I\colon \!\!\!\! 
\tikzsetnextfilename{bigraphpan}
\begin{tikzpicture}[baseline={([yshift=-2.75pt]current bounding box)},
label distance=-2pt,xscale=0.9, yscale=0.9]
\node[circle, fill=black, inner sep=0pt, minimum size=3.5pt, label=above left:$\scriptstyle 1$] (n1) at (0  , 2  ) {};
\node[circle, fill=black, inner sep=0pt, minimum size=3.5pt, label=above:$\scriptstyle 3$] (n2) at (1  , 2  ) {};
\node[circle, fill=black, inner sep=0pt, minimum size=3.5pt, label=above:$\scriptstyle 4$] (n3) at (2  , 2  ) {};
\node[circle, fill=black, inner sep=0pt, minimum size=3.5pt, label=above:$\scriptstyle 5$] (n4) at (3  , 2  ) {};
\node (n5) at (4  , 2  ) {$ $};
\node (n6) at (5  , 2  ) {$ $};
\node[circle, fill=black, inner sep=0pt, minimum size=3.5pt, label=above:$\scriptstyle p$] (n7) at (6  , 2  ) {};
\node[circle, fill=black, inner sep=0pt, minimum size=3.5pt, label=above right:$\scriptstyle n$] (n8) at (7  , 1  ) {};
\node[circle, fill=black, inner sep=0pt, minimum size=3.5pt, label=below left:$\scriptstyle 2$] (n9) at (0  , 0  ) {};
\node[circle, fill=black, inner sep=0pt, minimum size=3.5pt, label=below:$\scriptstyle p\mppps 1$] (n10) at (1  , 0  ) {};
\node[circle, fill=black, inner sep=0pt, minimum size=3.5pt, label=below:$\scriptstyle p\mppps 2$] (n11) at (2  , 0  ) {};
\node (n12) at (4  , 0  ) {$ $};
\node (n13) at (5  , 0  ) {$ $};
\node[circle, fill=black, inner sep=0pt, minimum size=3.5pt, label=below:$\scriptstyle n\mppms 2$] (n14) at (6  , 0  ) {};
\node[circle, fill=black, inner sep=0pt, minimum size=3.5pt, label=right:$\scriptstyle n\mppms 1$] (n15) at (1  , 1  ) {};
\foreach \x/\y in {1/2, 1/15, 2/3, 3/4, 7/8, 9/8, 9/10, 9/15, 10/8, 10/11, 11/8, 14/8}
        \draw [dashed, -, shorten <= 2.50pt, shorten >= 2.50pt] (n\x) to  (n\y);
\foreach \x/\y in {1/3, 1/4, 2/4, 3/7, 4/7}
        \draw [bend left, dashed, -, shorten <= 2.50pt, shorten >= 2.50pt] (n\x) to  (n\y);
\foreach \x/\y in {4/5, 11/12}
        \draw [dashed, -, shorten <= 2.50pt, shorten >= -2.50pt] (n\x) to  (n\y);
\foreach \x/\y in {5/6, 12/13}
        \draw [line width=1.2pt, line cap=round, dash pattern=on 0pt off 5\pgflinewidth, -, shorten <= -2.50pt, shorten >= -2.50pt] (n\x) to  (n\y);
\foreach \x/\y in {6/7, 13/14}
        \draw [dashed, -, shorten <= -2.50pt, shorten >= 2.50pt] (n\x) to  (n\y);
\draw [bend left=25.0, dashed, -, shorten <= 1.40pt, shorten >= -2.80pt] ([xshift=3]n1.north east) to  ([xshift=-9]n7.north west);
\draw [bend left=25.0, dashed, -, shorten <= 0.90pt, shorten >= 2.50pt] ([xshift=3.3]n2.north east) to  (n7);
\foreach \x/\y in {1/8, 3/8}
        \draw [dashed, -, shorten <= 2.50pt, shorten >= 2.50pt] (n\x) to  ([yshift=-1]n\y.north west);
\draw [dashed, -, shorten <= 2.50pt, shorten >= 2.50pt] (n2) to  ([yshift=-2]n8.north west);
\draw [dashed, -, shorten <= 2.50pt, shorten >= 2.50pt] (n4) to  (n8.north west);
\draw [bend right, dashed, -, shorten <= 2.50pt, shorten >= 2.50pt] (n9) to  (n11);
\draw [bend right=25.0, dashed, -, shorten <= 0.90pt, shorten >= 2.50pt] ([xshift=-3.3]n9.north east) to  (n14);
\foreach \x/\y in {10/14, 11/14}
        \draw [bend right=25.0, dashed, -, shorten <= 2.50pt, shorten >= 2.50pt] (n\x) to  (n\y);
\end{tikzpicture}\!\!,
\end{equation*}
that is, $\Delta_I$ is a line graph of $Q_I$.
Moreover, quiver $Q_I$ contains exactly two \textit{minimally decreasing walks} in the sense of~\cite{jimenezgonzalezCoxeterInvariantsNonnegative2022}:
\begin{itemize}
\item $v_1\mapsto v_2 \mapsto v_3\mapsto\cdots\mapsto v_{n-p-1}\mapsto v_{n-p}\mapsto v_{1}$
\item $v_{n-p-1}\mapsto v_{n-p+2} \mapsto\cdots\mapsto v_{n-1}\mapsto v_{n}\mapsto v_{n-p+1}$
\end{itemize}
of length $n-p$ and $p$, respectively. Hence, \cite[Theorem 6.3]{jimenezgonzalezCoxeterInvariantsNonnegative2022} and \cite[Corollary 6.4]{jimenezgonzalezCoxeterInvariantsNonnegative2022} yield
\begin{itemize}
\item $\cox_{\Phi}(t)=(t-1)^{\crk_{\Delta_I}-1}(t^{n-p}-1)(t^p-1)=t^n- t^p-t^{n-p}+1\in\ZZ[t]$ and
\item $c_\Phi=\infty$ and $\check c_\Phi=\lcm(p,n-p)$,
\end{itemize}
where $\Phi\eqdef-\cG_{\Delta_I}^{-1}\cdot\cG_{\Delta_I}^{tr}$. Since $\Phi=\Cox_{\Delta_I}^{-1}= \Cox_I^{-1}$,
we conclude that $\bc_I=\bc_\Phi=\infty$, $\check \bc_I=\check \bc_\Phi=\lcm(p,n-p)$
and, in view of \cite[Lemma 2.8]{simsonMeshGeometriesRoot2011}, $\cox_I(t)=t^n\cox_\Phi(\frac{1}{t})=t^n- t^p-t^{n-p}+1\in\ZZ[t]$.\medskip

\ref{lemma:posprincpan:cycleindex} In order to apply Definition~\ref{df:anposet:cycleindex} we enumerate the elements $\{1,\ldots,n\}$ of ${}_p\wt\AA_n$ poset as follows.
\begin{equation*}%
\CH({}_p\wt\AA_n)\simeq
\tikzsetnextfilename{posetprincpancyclenum}
\begin{tikzpicture}[baseline={([yshift=-2.75pt]current bounding box)},
label distance=-2pt,xscale=0.6, yscale=0.5]
\draw[draw, fill=cyan!10](0,2) circle (7pt);
\draw[draw, fill=cyan!10](0,0) circle (7pt);
\draw[draw, fill=cyan!10](1,1) circle (7pt);
\draw[draw, fill=cyan!10](6,1) circle (7pt);
\node[circle, fill=black, inner sep=0pt, minimum size=3.5pt, label=above:$\scriptscriptstyle a_{1\phantom{p}}$] (n1) at (0  , 2  ) {};
\node[circle, fill=black, inner sep=0pt, minimum size=3.5pt, label=below left:$\scriptscriptstyle a_{n-1}$] (n2) at (0  , 0  ) {};
\node[circle, fill=black, inner sep=0pt, minimum size=3.5pt, label=above:$\scriptscriptstyle a_{2\phantom{p}}$] (n3) at (1  , 2  ) {};
\node[circle, fill=black, inner sep=0pt, minimum size=3.5pt, label=above:$\scriptscriptstyle a_{3\phantom{p}}$] (n4) at (2  , 2  ) {};
\node (n5) at (3  , 2  ) {$\scriptscriptstyle $};
\node (n6) at (4  , 2  ) {$\scriptscriptstyle $};
\node[circle, fill=black, inner sep=0pt, minimum size=3.5pt, label=right:$\scriptscriptstyle \phantom{a}a_{n}$] (n7) at (1  , 1  ) {};
\node[circle, fill=black, inner sep=0pt, minimum size=3.5pt, label=right:$\scriptscriptstyle \phantom{a}a_{p}$] (n8) at (6  , 1  ) {};
\node[circle, fill=black, inner sep=0pt, minimum size=3.5pt, label=above:$\scriptscriptstyle a_{p-1}$] (n9) at (5  , 2  ) {};
\node[circle, fill=black, inner sep=0pt, minimum size=3.5pt, label=below:$\scriptscriptstyle a_{n-2}$] (n10) at (1  , 0  ) {};
\node (n11) at (2  , 0  ) {$\scriptscriptstyle $};
\node (n12) at (4  , 0  ) {$\scriptscriptstyle $};
\node[circle, fill=black, inner sep=0pt, minimum size=3.5pt, label=below:$\scriptscriptstyle a_{p+1}$] (n13) at (5  , 0  ) {};
\foreach \x/\y in {1/3, 1/7, 2/7, 2/10, 3/4, 9/8, 13/8}
        \draw [-stealth, shorten <= 2.50pt, shorten >= 2.50pt] (n\x) to  (n\y);
\foreach \x/\y in {4/5, 10/11}
        \draw [-stealth, shorten <= 2.50pt, shorten >= -2.50pt] (n\x) to  (n\y);
\foreach \x/\y in {5/6, 11/12}
        \draw [dotted, -, shorten <= -2.50pt, shorten >= -2.50pt] (n\x) to  (n\y);
\foreach \x/\y in {6/9, 12/13}
        \draw [-stealth, shorten <= -2.50pt, shorten >= 2.50pt] (n\x) to  (n\y);
\end{tikzpicture},
\end{equation*}
We note that:
\begin{itemize}
\item $a_{r_0}=a_{1},a_{r_1}=a_{p}, a_{r_2}=a_{n-1}, a_{r_3}=a_{n}$ denote all $k=4=2s$ elements that are either minimal or maximal in ${}_p\wt\AA_n$,
\item for every $j\in\{0,\ldots, k-1\}$ and $j'={j}\bmod k+1$, either:
\begin{itemize}
\item $u\preceq v$ for every  $u<v\in \{a_{r_j}, \ldots, a_{r_{j'}} \}$   or 
\item 
$u\succeq v$  for every $u<v\in \{a_{r_j}, \ldots, a_{r_{j'}} \}$. %
\end{itemize}
\end{itemize}
Since $p\geq n-p$, the cycle index \eqref{eq:df:cycleindex} equals 
\begin{align*}
c({}_p\wt\AA_n)&=\max\left(|\{a_1,\ldots,a_p\}| + |\{a_{n-1},a_n\}|,
|\{a_p,\ldots,a_{n-1}|+|\{a_{n},a_{1}\}|\right)-s\\
&=\max(p+2, n-p+2) - 2 = p,
\end{align*}%
and the proof is finished.
\end{proof}

\begin{remark}
We note that Lemma~\ref{lemma:posprincpan}\ref{lemma:posprincpan:nnegDyn} follows by the results of Simson~\cite[Proposition 2.12]{simsonIncidenceCoalgebrasIntervally2009}
and Boldt~\cite[Proposition 3.3]{boldtMethodsDetermineCoxeter1995}. In our proof, we use the line graph technique to determine not only Coxeter polynomial~\eqref{eq:pos_cox_poly} but also Coxeter number \eqref{eq:pos_cox_num} and reduced Coxeter number \eqref{eq:pos_red_cox_num}.
\end{remark}

Now, we have all the necessary tools to prove one of the paper's main results.

\begin{proofof}{Proof of Theorem 1.1}
Part \ref{thm:mainthm:stronggram:dichotomy} of the theorem follows from Fact~\ref{fact:a:hasse}\ref{fact:a:hasse:crkbiggeri}, see \cite{gasiorekStructureNonnegativePosets2022arxiv} for details.\smallskip

In view of Fact~\ref{fact:sgc_conseq}\ref{fact:sgc_conseq:coxinvariants}, \cite[Theorem 5.2]{gasiorekOnepeakPosetsPositive2012} and Lemma~\ref{lemma:posprincpan}, to prove parts \ref{thm:mainthm:stronggram:posit} and \ref{thm:mainthm:stronggram:princ} it suffices to show that \ref{thm:mainthm:stronggram:posit:opcongr} and \ref{thm:mainthm:stronggram:princ:pancongr} hold true.\smallskip

\ref{thm:mainthm:stronggram:posit:opcongr} Assume that $I$ is a positive connected poset of the Dynkin type $\AA_n$. It follows from Fact~\ref{fact:a:hasse}, that $\ov \CH(I)\simeq\,P_n(1,n)\eqdef \, 1\scriptstyle \bullet\,\rule[1.5pt]{22pt}{0.4pt}\,\bullet\,\rule[1.5pt]{22pt}{0.4pt}\,\,\hdashrule[1.5pt]{12pt}{0.4pt}{1pt}\,\rule[1.5pt]{22pt}{0.4pt}\,\bullet \displaystyle n$, hence, by \eqref{eq:isomorphism:stronggram}, without loos of generality we may assume that the elements of $I=(\{1,\ldots,n\},\preceq)$ are enumerated in such a way, that $N_I(1)=\{2\}$, $N_I(n)=\{n-1\}$ and $N_I(i)=\{i-1,i+1\}$, for every $i\in \{2, \ldots, n-1 \}$.

We have two possibilities: either (i) $1\preceq 2$ or (ii) $2\preceq 1$. First, we assume that $1\preceq 2$, i.e., $\CH(I)$ contains the oriented chain $\vec P(1,k)$, where $1<k\leq n$, as an induced subquiver.
\begin{equation*}
\CH(I)\colon
\tikzsetnextfilename{prfgraphanpositi}
\begin{tikzpicture}[baseline={([yshift=1.75pt]current bounding box)},label distance=-2pt,
xscale=0.64, yscale=1.94]
\node[circle, fill=black, inner sep=0pt, minimum size=3.5pt, label=below:$\scriptstyle 1$] (n1) at (0  , 0  ) {};
\node[circle, fill=black, inner sep=0pt, minimum size=3.5pt, label=below:$\scriptstyle 2$] (n2) at (1  , 0  ) {};
\node (n3) at (2  , 0  ) {$\scriptstyle $};
\node (n4) at (3  , 0  ) {$\scriptstyle $};
\node[circle, fill=black, inner sep=0pt, minimum size=3.5pt, label=below:$\scriptstyle k$] (n5) at (4  , 0  ) {};
\node (n6) at (6  , 0  ) {$\scriptstyle $};
\node[circle, fill=black, inner sep=0pt, minimum size=3.5pt, label=below:$\scriptstyle n\mppms 1$] (n7) at (7  , 0  ) {};
\node[circle, fill=black, inner sep=0pt, minimum size=3.5pt, label=below:$\scriptstyle \phantom{1}n\phantom{1}$] (n8) at (8  , 0  ) {};
\node (n9) at (5  , 0  ) {$ $};
\draw [-stealth, shorten <= 2.50pt, shorten >= 2.50pt] (n1) to  (n2);
\draw [-stealth, shorten <= 2.50pt, shorten >= -2.50pt] (n2) to  (n3);
\foreach \x/\y in {3/4, 9/6}
        \draw [line width=1.2pt, line cap=round, dash pattern=on 0pt off 5\pgflinewidth, -, shorten <= -2.50pt, shorten >= -2.50pt] (n\x) to  (n\y);
\draw [-stealth, shorten <= -2.50pt, shorten >= 2.50pt] (n4) to  (n5);
\draw [-, shorten <= -2.50pt, shorten >= 2.50pt] (n6) to  (n7);
\draw [-, shorten <= 2.50pt, shorten >= 2.50pt] (n7) to  (n8);
\draw [stealth-, shorten <= 2.50pt, shorten >= -2.50pt] (n5) to  (n9);
\end{tikzpicture}
\end{equation*}
If $k=n$ then $I={}_0 \AA_{n-1}^*$ \eqref{hasse:pospositzeroanmi} and we are done. Otherwise, we set $I_1\eqdef I$ and define poset $I_2$ as the result of composition of $k$ ($min,max$)-reflections $S_k,S_{k-1},\ldots,S_1$
\begin{equation}\label{eq:minmax:iter}
I_2=I[k]\eqdef S_1\cdots S_{k-1}S_kI.
\end{equation}
We note that $\CH(I_2)$ contains the oriented chain $\vec P(1,k')$, where $k<k'\leq n$ and $k'\geq k+1$ is minimal in $I_2$. If $k'\neq n$, we repeat the procedure and define $I_3\eqdef I[k']$ \eqref{eq:minmax:iter}. It is easy to see that this procedure has to end in at most $r\leq n-k$ steps at $I_r\simeq{}_0 \AA_{n-1}^*$.
\begin{equation}\label{hasse:pospositzeroanmi}
\CH({}_0\AA_{n-1}^*)=
\graphOnePeakpan
\end{equation}
Hence, in view of Proposition~\ref{prop:minmaxrefl}\ref{cor:minmaxrefl:strongcongr}, we conclude that $I\approx_\ZZ {}_0 \AA_{n-1}^*$. To finish this part of the proof, we note that in the case (ii), it suffices to apply the arguments of (i) to the $I_0\eqdef S_1 I$ poset.\smallskip

\ref{thm:mainthm:stronggram:princ:pancongr} We proceed analogously as in the previous case. Without loss of generality, by \eqref{eq:isomorphism:stronggram} and Fact~\ref{fact:a:hasse}\ref{fact:a:hasse:princ}, we may assume that the elements of $I=(\{1,\ldots,n\},\preceq)$ are enumerated in such a way, that $1$ is a minimal element in $I$, $N_I(1) = \{2,n\}$, $N_I(n)=\{1,n-1\}$ and $N_I(i)=\{i-1,i+ 1\}$ for $i\in\{2,\ldots,n-1\}$. Given a minimal or maximal element $i\in I$, we define:
\begin{equation*}
shift_I(i)\eqdef
\begin{cases}
\min_{k\in\NN} \{k ;\ i\preceq k\textnormal{ and }i\not\preceq (i+k)\bmod n+1\},& \textnormal{if $i$ is a minimal element in $I$},\\
\min_{k\in\NN} \{k;\ k\preceq i\textnormal{ and } (i+k)\bmod n+1\not\preceq i\},& \textnormal{if $i$ is a maximal element in $I$},
\end{cases}
\end{equation*}
and $nxt_I(i)\eqdef (i+shitf_I(i))\bmod n + 1\in I$.
It is straightforward to check that $nxt_I(i)\in I$ is a minimal (maximal) element in $I$ if $i\in I$ is a maximal (minimal) one. Furthermore, by $S_a^bI$, where $a < b < n$, we denote a composition of $b-a-1$ ($min,max$)-reflections $S_a^bI\eqdef S_{a+1}\cdots S_{b-1}S_bI$ and by $S_i^{nxt}I$ we mean $ S_i^rI$ with $r\eqdef nxt_I(i)$. 

 Our aim is to show that $I\approx_\ZZ {}_p\wt \AA_{n}$ \eqref{hasse:pospprincpan}, where $p$ is a non-zero integer. We do it by successively applying ($min,max$)-reflections to reduce $I$ to ${}_p\wt \AA_{n}$. We set $I_1\eqdef I$ and proceed as follows.%
\begin{enumerate}[label={\textbf{Step \arabic*${}^\circ$}},leftmargin=1.5cm]
\item\label{prf:thm:mainthm:stronggram:i} If $2\in I_1$ is a maximal element in $I_1$ we set $I_2\eqdef I_1$. Otherwise, we set $I_2\eqdef S_1^{nxt}I_1$.
\item\label{prf:thm:mainthm:stronggram:ii} If $3\in I_2$ is a minimal element in $I_2$ we set $I_3\eqdef I_2$; otherwise: $I_3\eqdef S_2^{nxt}I_2$. We set $k\eqdef 2.$
\item\label{prf:thm:mainthm:stronggram:iii} We set $k\eqdef k +1$, $r_k\eqdef nxt_{I_k}(3)$ and $t_k\eqdef nxt_{I_k}(r_k)$.
\item\label{prf:thm:mainthm:stronggram:iv} If $t_{k}=1$, then $I_k={}_p\wt \AA_{n}$ and we are done. Otherwise:
\item  we set $I_{k+1}\eqdef S_r^tI_k$ and proceed to \ref{prf:thm:mainthm:stronggram:iii}.

 \end{enumerate}
We claim that this procedure finishes in a finite number of steps. Indeed, given $t_k\eqdef nxt_{I_k}(r_k)\neq 1$ in \ref{prf:thm:mainthm:stronggram:iii}, after applying $I_{k+1}\eqdef S_r^tI_k$ in \ref{prf:thm:mainthm:stronggram:iv}, we have $r_{k+1}=nxt_{I_{k+1}}(3)= t_k >r_k>3$. Since $I$ is a finite poset and $r_k$ is a strictly monotonically increasing sequence, we conclude that $t_{k}\eqdef nxt_{I_{k}}(r_{k}) = 1$ for some $k\geq 2$.

To finish the proof, we note that every poset $I_k$ has at least two minimal and at least two maximal elements. It follows that every minimal (maximal) element is junction-free in the sense of Definition~\ref{df:junctionfree}. Hence, by Proposition~\ref{prop:minmaxrefl}\ref{cor:minmaxrefl:strongcongr}, we obtain that $I\approx_\ZZ {}_p\wt \AA_{n}$ for some $p$. To prove that $p$ coincides with the cycle index $c(I)$ of $I$ \eqref{eq:df:cycleindex}, we recall Remark~\ref{rmk:cycle_index_hasse}: the cycle index can be interpreted in terms of clockwise and counterclockwise edges (arrows) in the Hasse digraph $\CH(I)$. It is easy to see that the ($min,max$)-reflection does not change these numbers, i.e., $c(I)=c(S_a(I)),$ since at the Hasse digraph level, it substitutes one clockwise edge by a counterclockwise one, and one counterclockwise by a clockwise one.
We conclude that $c(I)=c({}_p\wt \AA_{n})=p$, see Lemma~\ref{lemma:posprincpan}\ref{lemma:posprincpan:cycleindex}.
\end{proofof}

In the following corollaries, we show that connected non-negative posets $I$ of Dynkin type $\AA_{m}$ are determined uniquely, up to strong Gram $\ZZ$-congruence, by the Coxeter spectrum $\specc_I\subseteq\CCC$. In the case of positive (i.e., $\crk_I=0$) posets, we have a stronger result: in this case strong Gram $\ZZ$-congruence coincides with weak Gram $\ZZ$-congruence (see Definition~\ref{df:congruences}).
\begin{corollary}\label{cor:posit_eqv}
If $I$ and $J$ are finite connected positive posets of Dynkin type $\AA_{m}$, then: 
\begin{enumerate}[label=\normalfont{(\makebox[\widthof{$c'$}][c]{\alph*})}]
\item\label{cor:posit_eqv:weakcongr} $I\sim_\ZZ J$,
  \item\label{cor:posit_eqv:strongcongr} $I\approx_\ZZ J$,
  \item\label{cor:posit_eqv:specc} $\specc_I=\specc_J$ [$\cox_I(t)=\cox_J(t)$].
\end{enumerate}
\end{corollary}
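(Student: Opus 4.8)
The statement is an immediate consequence of Theorem~\ref{thm:mainthm:stronggram}, so the plan is simply to unwind it. Two of the three assertions are for free: the implication \ref{cor:posit_eqv:strongcongr}$\Rightarrow$\ref{cor:posit_eqv:weakcongr} is Fact~\ref{fact:sgc_conseq}\ref{fact:sgc_conseq:weakcongr}, and \ref{cor:posit_eqv:strongcongr}$\Rightarrow$\ref{cor:posit_eqv:specc} is Fact~\ref{fact:sgc_conseq}\ref{fact:sgc_conseq:coxinvariants}. Hence the only thing I would actually need to prove is \ref{cor:posit_eqv:strongcongr}, namely that \emph{any} two positive connected posets of Dynkin type $\AA_m$ are strongly Gram $\ZZ$-congruent; granting this, \ref{cor:posit_eqv:weakcongr} and \ref{cor:posit_eqv:specc} drop out, and all three conditions become pairwise equivalent within the class in question — which is the substance of the corollary, since for general posets \ref{cor:posit_eqv:weakcongr} is strictly weaker than \ref{cor:posit_eqv:strongcongr}.

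For \ref{cor:posit_eqv:strongcongr} I would first note that positivity means $\crk_I=\crk_J=0$, so the hypothesis $\Dyn_I=\Dyn_J=\AA_m$ forces $|I|=|J|=m$; in particular $C_I,C_J\in\MM_m(\ZZ)$. Applying Theorem~\ref{thm:mainthm:stronggram}\ref{thm:mainthm:stronggram:posit:opcongr} to each of $I$ and $J$ gives
\[
I\approx_\ZZ {}_0\AA_{m-1}^*\qquad\textnormal{and}\qquad J\approx_\ZZ {}_0\AA_{m-1}^*,
\]
where ${}_0\AA_{m-1}^*$ is the one-peak poset \eqref{hasse:pospositzeroanmi} on $m$ elements. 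Since $\sim_\ZZ$ on $\MM_m(\ZZ)$ is symmetric and transitive — $B^{tr}XB=Y$ gives $(B^{-1})^{tr}YB^{-1}=X$, and congruences compose, as $\Gl(m,\ZZ)$ is a group — the relation $\approx_\ZZ$ is an equivalence relation on posets of size $m$, whence $I\approx_\ZZ J$. Finally \ref{cor:posit_eqv:weakcongr} follows via Fact~\ref{fact:sgc_conseq}\ref{fact:sgc_conseq:weakcongr}, and \ref{cor:posit_eqv:specc} follows either from Fact~\ref{fact:sgc_conseq}\ref{fact:sgc_conseq:coxinvariants} or, alternatively, directly from Theorem~\ref{thm:mainthm:stronggram}\ref{thm:mainthm:stronggram:posit:coxpol}, which gives $\cox_I(t)=t^{m+1}+\ldots+t+1=\cox_J(t)$.

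I do not expect a real obstacle here: all the work sits in Theorem~\ref{thm:mainthm:stronggram}, whose proof reduces an arbitrary such poset to ${}_0\AA_{m-1}^*$ through a chain of ($min,max$)-reflections. The only points deserving care are the size bookkeeping ($|I|=|J|=m$, so that $C_I$ and $C_J$ live in the same $\MM_m(\ZZ)$ and the compositions of congruences are legitimate) and making explicit that, for positive Dynkin type $\AA_m$ posets, there is a \emph{single} Coxeter type — this is precisely what forces the a priori distinct notions in \ref{cor:posit_eqv:weakcongr}, \ref{cor:posit_eqv:strongcongr} and the spectral condition \ref{cor:posit_eqv:specc} to coincide.
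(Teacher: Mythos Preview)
Your proposal is correct and follows essentially the same route as the paper: apply Theorem~\ref{thm:mainthm:stronggram}\ref{thm:mainthm:stronggram:posit:opcongr} to both $I$ and $J$ to get $I\approx_\ZZ {}_0\AA_{m-1}^*\approx_\ZZ J$, conclude $I\approx_\ZZ J$ by transitivity, and then read off \ref{cor:posit_eqv:weakcongr} and \ref{cor:posit_eqv:specc} from Fact~\ref{fact:sgc_conseq}. The extra care you take with the size bookkeeping and the explicit remark that $\approx_\ZZ$ is an equivalence relation are fine but not strictly needed.
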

\begin{proof}
In view of Theorem~\ref{thm:mainthm:stronggram}\ref{thm:mainthm:stronggram:posit:opcongr}, for every two finite connected positive posets $I$ and $J$ of Dynkin type $\AA_{m}$ we have $I\approx_\ZZ{}_0 \AA_{n-1}\approx_\ZZ J$, hence $I\approx_\ZZ{} J$ by transitivity of strong Gram $\ZZ$-congruence. This, by Fact~\ref{fact:sgc_conseq}, implies that $I\sim_\ZZ J$, $\specc_I=\specc_J$ and $\cox_I(t)=\cox_J(t)$.
\end{proof}

\begin{corollary}\label{cor:princ_eqv}
If  $I$ and $J$ are finite connected principal posets of Dynkin type $\AA_{m}$, the following conditions are equivalent:
\begin{enumerate}[label=\normalfont{(\makebox[\widthof{$c'$}][c]{\alph*})}]
  \item\label{cor:princ_eqv:strongcongr} $I\approx_\ZZ J$,
  \item\label{cor:princ_eqv:specc} $\specc_I=\specc_J$ [$\cox_I(t)=\cox_J(t)$],
  \item\label{cor:princ_eqv:cyclenum} $c(I) = c(J).$ 
\end{enumerate}
\end{corollary}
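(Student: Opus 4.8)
The plan is to prove the cycle of implications \ref{cor:princ_eqv:strongcongr}$\Rightarrow$\ref{cor:princ_eqv:specc}$\Rightarrow$\ref{cor:princ_eqv:cyclenum}$\Rightarrow$\ref{cor:princ_eqv:strongcongr}, with Theorem~\ref{thm:mainthm:stronggram}\ref{thm:mainthm:stronggram:princ} carrying essentially all of the weight. First I would record the combinatorial bookkeeping: a connected principal poset of Dynkin type $\AA_m$ has exactly $m+1$ elements by Fact~\ref{fact:a:hasse}\ref{fact:a:hasse:princ}, so $n\eqdef|I|=|J|=m+1$; write $p_I\eqdef c(I)$ and $p_J\eqdef c(J)$, which by Theorem~\ref{thm:mainthm:stronggram}\ref{thm:mainthm:stronggram:princ:pancongr} both lie in $\{\lceil\frac{n}{2}\rceil,\ldots,n-2\}$. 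The two readings of condition~\ref{cor:princ_eqv:specc} are interchangeable, since $\cox_I(t)\in\ZZ[t]$ is monic and is therefore recovered from its multiset of roots $\specc_I$, so that $\specc_I=\specc_J$ precisely when $\cox_I=\cox_J$.

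The implication \ref{cor:princ_eqv:strongcongr}$\Rightarrow$\ref{cor:princ_eqv:specc} is immediate from Fact~\ref{fact:sgc_conseq}\ref{fact:sgc_conseq:coxinvariants}. For \ref{cor:princ_eqv:cyclenum}$\Rightarrow$\ref{cor:princ_eqv:strongcongr} I would set $p\eqdef c(I)=c(J)$: Theorem~\ref{thm:mainthm:stronggram}\ref{thm:mainthm:stronggram:princ:pancongr} then gives $I\approx_\ZZ{}_p\wt\AA_n$ and $J\approx_\ZZ{}_p\wt\AA_n$ with one and the \emph{same} canonical poset (the same $n$ since $|I|=|J|$, the same $p$ by hypothesis), whence $I\approx_\ZZ J$ by symmetry and transitivity of the strong Gram $\ZZ$-congruence.

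The substantive step is \ref{cor:princ_eqv:specc}$\Rightarrow$\ref{cor:princ_eqv:cyclenum}. Here I would use Theorem~\ref{thm:mainthm:stronggram}\ref{thm:mainthm:stronggram:princ:coxpol} to write $\cox_I(t)=t^n-t^{p_I}-t^{n-p_I}+1$ and $\cox_J(t)=t^n-t^{p_J}-t^{n-p_J}+1$, the common degree being $n=|I|=|J|$. Equating the two polynomials and cancelling the term $t^n+1$ leaves $t^{p_I}+t^{n-p_I}=t^{p_J}+t^{n-p_J}$ in $\ZZ[t]$; comparing monomials on the two sides, with the degenerate case $p=\frac{n}{2}$ (where this expression collapses to $2t^{n/2}$) treated separately, forces the multiset equality $\{p_I,\,n-p_I\}=\{p_J,\,n-p_J\}$. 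Since $p_I\geq\lceil\frac{n}{2}\rceil$ gives $p_I=\max\{p_I,\,n-p_I\}$, and likewise for $J$, taking the maxima of these equal multisets yields $p_I=p_J$, that is $c(I)=c(J)$.

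I expect the only point that is not purely formal to be \ref{cor:princ_eqv:specc}$\Rightarrow$\ref{cor:princ_eqv:cyclenum}: the Coxeter polynomial $t^n-t^p-t^{n-p}+1=(t^p-1)(t^{n-p}-1)$ determines the cycle index only up to the palindromic ambiguity $p\leftrightarrow n-p$, and it is exactly the normalization $\frac{n}{2}\leq c(I)\leq n-2$ built into Theorem~\ref{thm:mainthm:stronggram}\ref{thm:mainthm:stronggram:princ} that removes it. Everything else reduces to transitivity of the strong Gram $\ZZ$-congruence together with the invariance recorded in Fact~\ref{fact:sgc_conseq}\ref{fact:sgc_conseq:coxinvariants}.
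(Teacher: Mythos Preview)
Your proof is correct and follows essentially the same cycle of implications as the paper: \ref{cor:princ_eqv:strongcongr}$\Rightarrow$\ref{cor:princ_eqv:specc} via Fact~\ref{fact:sgc_conseq}\ref{fact:sgc_conseq:coxinvariants}, \ref{cor:princ_eqv:specc}$\Rightarrow$\ref{cor:princ_eqv:cyclenum} via Theorem~\ref{thm:mainthm:stronggram}\ref{thm:mainthm:stronggram:princ}, and \ref{cor:princ_eqv:cyclenum}$\Rightarrow$\ref{cor:princ_eqv:strongcongr} via Theorem~\ref{thm:mainthm:stronggram}\ref{thm:mainthm:stronggram:princ:pancongr} and transitivity. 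You spell out the polynomial comparison for \ref{cor:princ_eqv:specc}$\Rightarrow$\ref{cor:princ_eqv:cyclenum} more explicitly than the paper does (the paper simply cites Theorem~\ref{thm:mainthm:stronggram}\ref{thm:mainthm:stronggram:princ} without unpacking why $\cox_I$ pins down $p$), and your observation that the normalization $p\geq\lceil n/2\rceil$ is precisely what breaks the $p\leftrightarrow n-p$ symmetry is the right point to highlight.
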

\begin{proof}
Assume that $I$ and $J$ are finite connected principal posets of Dynkin type $\AA_{m}$. Implication 
\ref{cor:princ_eqv:strongcongr} $\Rightarrow$ \ref{cor:princ_eqv:specc}
follows by Fact~\ref{fact:sgc_conseq}\ref{fact:sgc_conseq:coxinvariants} and
\ref{cor:princ_eqv:specc} $\Rightarrow$ \ref{cor:princ_eqv:cyclenum}
 is a consequence of Theorem~\ref{thm:mainthm:stronggram}\ref{thm:mainthm:stronggram:princ}.

\ref{cor:princ_eqv:cyclenum} $\Rightarrow$ \ref{cor:princ_eqv:strongcongr} %
By Theorem~\ref{thm:mainthm:stronggram}\ref{thm:mainthm:stronggram:princ:pancongr}, we know that $I\approx_\ZZ {}_p\wt\AA_n \approx_\ZZ J$, where $p= c(I)=c(J)$, and we conclude that $I \approx_\ZZ J$ by transitivity of strong Gram $\ZZ$-congruence.
\end{proof}

We finish this section by noting that Corollary~\ref{corr:main:dyncongr} follows by Corollary~\ref{cor:posit_eqv} and Corollary~\ref{cor:princ_eqv}.
\begin{proofof}{Proof of Corollary 1.3}
Assume that $I$ and $J$ are non-negative connected posets of Dynkin type $\AA_m$. Our aim is to show that $I\approx_\ZZ J$ if and only if $\cox_I(t)=\cox_J(t)$. Since the ``$\Rightarrow$'' implication follows from Fact~\ref{fact:sgc_conseq}\ref{fact:sgc_conseq:coxinvariants}, it suffices to show ``$\Leftarrow$''.

Assume that $\cox_I(t)=\cox_J(t)$. It follows that $|I|=|J|$ and, consequently, $\crk_I=\crk_J$. By Fact~\ref{fact:a:hasse}\ref{fact:a:hasse:crkbiggeri}, we know that either $\crk_I=0$ or $\crk_I=1$. The implication follows by Corollary~\ref{cor:posit_eqv} in the first case and by Corollary~\ref{corr:main:dyncongr} in the second one.
\end{proofof}

\section{Algorithms}\label{sec:algorithms}
Since isomorphic posets are strongly Gram $\ZZ$-congruent~\eqref{eq:isomorphism:stronggram}, we can view $\ZZ$-congruence as a generalization of the notion of isomorphism. Similar to the isomorphism case~\cite{mckayPracticalGraphIsomorphism2014}, we have two general strategies for devising algorithms that construct a matrix $B\in\Gln$ that defines strong Gram $\ZZ$-congruence between two non-negative posets $I$ and $J$ of Dynkin type $\Dyn_I=\Dyn_J=\AA_m$:
\begin{enumerate}[label=\normalfont{(\makebox[\widthof{$ii$}][c]{\roman*})}] 
  \item compute such a $B$ that $I\stackrel{B}{\approx}_\ZZ J$ \textit{directly};
  \item compute a \textit{canonical representative} $R$ and $\ZZ$-congruences $I\stackrel{B_I}{\approx}_\ZZ R$ and $J\stackrel{B_J}{\approx}_\ZZ R$, then calculate $B$ as $B\eqdef B_I\cdot B_J^{-1}$.
\end{enumerate}

In the first strategy, we are given two posets, encoded in, e.g., incidence matrix form \eqref{eq:pos_inc_mat}, and our aim is to construct such a matrix $B\in\Gln$, that $B^{tr}\cdot C_I\cdot B=C_J$. The reader is referred to \cite{gasiorekOnepeakPosetsPositive2012,gasiorekAlgorithmicCoxeterSpectral2020,gasiorekCongruenceRationalMatrices2023} for discussion of some of the algorithms of this type.

The main advantage of the second strategy is its efficiency in a general classification setting defined as follows: divide a (possibly large) list of posets into sublists of $\ZZ$-congruent ones. Since two posets $I$ and $J$ are $\ZZ$-congruent if and only if their canonical representatives are equal, it is more efficient to compute representatives first and then use a sorting, hashing, or
balanced tree algorithm to efficiently group the input list into desired sublists. This methodology is analogous to \textit{canonical labeling} in a graph isomorphism setting, see \cite{mckayPracticalGraphIsomorphism2014}. It is the strategy we follow in the current work. 

By Theorem~\ref{thm:mainthm:stronggram}, statements~\ref{thm:mainthm:stronggram:posit:opcongr} and \ref{thm:mainthm:stronggram:princ:pancongr}, every connected non-negative poset $I$ of Dynkin type $\Dyn_I=\AA_n$ is strongly Gram $\ZZ$-congruent with either the one-peak poset ${}_0\AA_{n-1}^*$ or  a canonical two peak poset ${}_p\wt\AA_{n}$,
\[
\CH({}_0\AA_{n-1}^*)=
\graphOnePeakpan\!\!,\quad\CH({}_p\wt\AA_{n})=\graphOnePeakpanPrinc.
\]
Since $I\approx_\ZZ {}_0\AA_{n-1}^*$ if and only if $\crk_I=0$ and $I\approx_\ZZ {}_p\wt\AA_{n}$ if and only if $\crk_I=1$, we prepare two algorithms for each of these cases. 
{
\makeatletter
\renewcommand{\ALG@name}{Listing}
\makeatother
\begin{algorithm}[H]
    \caption{ \textsc{ReflMatrix()} function}\label{lst:reflmatrix}
    \begin{algorithmic}[1]
\Function{ReflMatrix}{$i, a, b, n$}
    \State $S \gets E_n\in\MM_n(\ZZ)$\Comment{\makebox[66pt]{Identity matrix}}
    \State $S[i,i]\gets -1$;\ $S[a, i]\gets S[b, i]\gets 1$ \Comment{\makebox[66pt]{$S[x, y]\equiv s_{x,y}$\hfill}}
    \State \Return $S$ 
    \EndFunction
\end{algorithmic}
\end{algorithm}}%

Algorithm \ref{alg:refl:posit} is the implementation of the procedure described in the proof of Theorem~\ref{thm:mainthm:stronggram}\ref{thm:mainthm:stronggram:posit:opcongr}. Its main idea is to use the ($min,max$)-reflection operation, implemented in the form of \textsc{ReflMatrix()} function (see Listing~\ref{lst:reflmatrix}), to perform the reduction $J\mapsto {}_0\AA_{n-1}^*$ and construct matrix $B\in\Gln$ as the composition of ($min,max$)-reflection matrices.
We note that this procedure is guaranteed to finish at the poset isomorphic with ${}_0\AA_{n-1}^*$ and not necessarily equal to it. In lines \ref{alg:refl:posit:relabeli}-\ref{alg:refl:posit:relabelii} of the algorithm, we take this into consideration and reorder the resulting matrix to ensure that the  algorithm  returns a matrix that defines the strong Gram $\ZZ$-congruence with the canonically numbered poset ${}_0\AA_{n-1}^*$ \eqref{hasse:pospositzeroanmi}.

\begin{breakablealgorithm}
\caption{Strong Gram $\ZZ$-congruence of a positive connected poset $J$ with $\Dyn_J=\AA_n$ with a one-peak poset ${}_0\AA_{n-1}^*$}\label{alg:refl:posit}
\hspace*{\algorithmicindent}%
\textbf{Input} Incidence matrix $C_J\in\MM_n(\ZZ)$ of a positive connected poset $J$ of the Dynkin type $\AA_n$. \hfill\mbox{}\\
\hspace*{\algorithmicindent}%
\textbf{Output} Nonsingular matrix $B\in\MM_n(\ZZ)$ defining strong Gram $\ZZ$-congruence $J\stackrel{B}{\approx_\ZZ} {}_0\AA_{n-1}^*$.\hfill\mbox{}
\begin{algorithmic}[1]
\Function{StronCongrPositA}{$C_J\in\MM_n(\ZZ)$}
    \State $A \gets C_J^{-1}\in\MM_n(\ZZ)$
    \State $B \gets E_n\in\MM_n(\ZZ)$
    \State $D \gets 0_n\in\MM_n(\ZZ)$\Comment{Zero (null) matrix}
    \State $neigh\gets [\,]$; $is\_not\_peak \gets [\,]$
     \For {$j\in\{0,\ldots,n-1\}$}
     \State $ins\gets \{i;\ a_{i,j}=-1\}$;\ $outs\gets \{i;\ a_{j,i}=-1\}$
     \State $neigh.append(ins \cup outs)$
     \State $is\_not\_peak.append((ins\neq\emptyset)\ \textnormal{\textbf{and}}\ (outs\neq\emptyset))$
    \If {$len(neigh[len(neigh)-1])=1$}\Comment{Vertex $j$ is a \textit{leaf} in $\CH(J)$}
    \State $p\gets j$;\ $p\_is\_min \gets (ins\neq\emptyset)$
    \EndIf
    \EndFor
    \State $line\gets [p, neigh[p].pickElement()]$
    \While{$(nxt\gets line[len(line)-1]\, \setminus\, line[len(line)-2])\neq \emptyset$}
    \State $line.append(nxt.pop())$
    \EndWhile
    \State $k\gets 0$
    \State $C\gets\Call{ReflMatrix}{neigh[0],neigh[1],neigh[1],n}$
    \For{$i\in\{2,\ldots,n-1\}$}\label{alg:refl:posit:mainfor}
    \If{$is\_not\_peak[line[i-1]]$}
    \State \textbf{continue}
    \EndIf
    \State $is\_not\_peak[line[i]]\gets \neg is\_not\_peak[line[i]]$ 
    \For{$j\in\{k+1,\ldots, i-1 \}$}\label{alg:refl:posit:linefor}
    \State $C\gets\Call{ReflMatrix}{line[j],line[j-1],line[j+1],n} \cdot C$\label{alg:refl:posit:linemuli}
    \EndFor
    \State $B\gets B\cdot C$\label{alg:refl:posit:linemul}
    \State $k\gets i-1$
    \EndFor
    \If{$p\_is\_min$}\label{alg:refl:posit:relabeli}
    \State $line\gets line.reverse()$
    \EndIf
     \For{$i\in\{1,\ldots, n \}$}
    \State $D.setColumn(i, B.getColumn(line[i]))$\label{alg:refl:posit:relabelii}
    \EndFor
    \State \Return $D$
    \EndFunction
\end{algorithmic}
\end{breakablealgorithm}
\begin{corollary}\label{cor:positalg:complex}
Algorithm~\ref{alg:refl:posit} performs at most $2(n-2)=O(n)$ matrix multiplications, hence it is of $O(n^4)$ time complexity (assuming na\"{\i}ve matrix multiplication algorithm).
\end{corollary}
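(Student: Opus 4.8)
The plan is to count the matrix multiplications performed by Algorithm~\ref{alg:refl:posit}, bound their number by $2(n-2)$ via an amortization argument, and then multiply by the $O(n^3)$ cost of one naïve $n\times n$ integer matrix product.

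First I would localize the multiplications. The only lines that multiply two matrices are line~\ref{alg:refl:posit:linemuli} (inside the nested loop of line~\ref{alg:refl:posit:linefor}, which multiplies $C$ on the left by a reflection matrix produced by \textsc{ReflMatrix}) and line~\ref{alg:refl:posit:linemul} (the update $B\gets B\cdot C$, executed once per non-skipped iteration of the main loop of line~\ref{alg:refl:posit:mainfor}). Every other operation costs $O(n^3)$ in total: the initial inversion $A\gets C_J^{-1}$ by Gaussian elimination is $O(n^3)$, while the construction of the lists $neigh$ and $is\_not\_peak$, the \textbf{while} loop assembling $line$, each call to \textsc{ReflMatrix}, and the column reordering of lines~\ref{alg:refl:posit:relabeli}--\ref{alg:refl:posit:relabelii} are each $O(n^2)$.

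Next I would bound the number of executions of line~\ref{alg:refl:posit:linemuli}. The key observation is that the variable $k$ is monotone: it starts at $0$ and, after each iteration of the main loop that does not hit \textbf{continue}, it is reset to $i-1$. Writing $i_1<\cdots<i_t$ for the indices of those non-skipped iterations and setting $i_0:=1$, we have $k_0=0=i_0-1$ and $k_{\ell-1}=i_{\ell-1}-1$, so at iteration $i_\ell$ the nested loop of line~\ref{alg:refl:posit:linefor} runs $i_\ell-1-k_{\ell-1}$ times, and the total number of executions of line~\ref{alg:refl:posit:linemuli} is
\[
\sum_{\ell=1}^{t}\bigl(i_\ell-1-k_{\ell-1}\bigr)=\sum_{\ell=1}^{t}\bigl((i_\ell-1)-(i_{\ell-1}-1)\bigr)=i_t-1\le (n-1)-1=n-2,
\]
a telescoping sum. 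Equivalently, the ranges $\{k+1,\ldots,i-1\}$ occurring over the run are pairwise disjoint subintervals of $\{1,\ldots,n-1\}$. Line~\ref{alg:refl:posit:linemul} is executed at most once per iteration of the main loop, hence at most $n-2$ times as well, so Algorithm~\ref{alg:refl:posit} performs at most $(n-2)+(n-2)=2(n-2)=O(n)$ matrix multiplications.

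Finally, each product of two matrices in $\MM_n(\ZZ)$ takes $O(n^3)$ arithmetic operations under the naïve algorithm, and this dominates the $O(n^3)$ spent on all auxiliary steps; hence the running time is $O(n\cdot n^3)=O(n^4)$. The one genuinely subtle point is the amortized bound on line~\ref{alg:refl:posit:linemuli}: a crude estimate gives the main loop $n-2$ iterations, each potentially triggering up to $n-2$ inner multiplications, i.e.\ $O(n^2)$ products and $O(n^5)$ time, so the argument really hinges on recognizing that the monotonicity of $k$ makes the inner index ranges disjoint and collapses the count to $O(n)$.
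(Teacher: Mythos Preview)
Your proof is correct and follows the same approach as the paper: you identify that matrix multiplications occur only in lines~\ref{alg:refl:posit:linemuli} and~\ref{alg:refl:posit:linemul}, bound each by $n-2$, and conclude. The paper merely asserts these two bounds without justification, whereas you supply the telescoping amortization argument (via the monotonicity of $k$) that actually establishes the $n-2$ bound for line~\ref{alg:refl:posit:linemuli}; in this respect your write-up is more complete than the original.
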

\begin{proof}
It is straightforward to see that matrix multiplication is the dominant (i.e., most time-consuming) operation of Algorithm~\ref{alg:refl:posit}. The total number of these operations equals at most 
\begin{itemize}
\item $n-2$ in line \ref{alg:refl:posit:linemuli} and 
\item $n-2$ in line \ref{alg:refl:posit:linemul}.
\end{itemize}
Hence we have at most $2(n-2)=O(n)$ matrix multiplications.
\end{proof}

The number of matrix multiplications performed by Algorithm~\ref{alg:refl:posit} depends on the shape of the Hasse quiver $\CH(J)$. In order to verify that the number $2(n-2)$ is actually attainable, consider a poset $J$ with $\CH(J)$ of the shape
\begin{equation}\label{cor:positalg:complex:pesshasse}
\CH(J)=
\tikzsetnextfilename{hasseposit0Anmipesimistic}\begin{tikzpicture}[baseline=(Alabel.base),label distance=-2pt, xscale=0.8]
        \node[circle, fill=black, inner sep=0pt, minimum size=3.5pt, label={[name=Alabel]left:$\scriptstyle 2$}] (n1) at (0  , 0.5  ) {};
        \node[circle, fill=black, inner sep=0pt, minimum size=3.5pt, label=above:$\scriptstyle 3$] (n2) at (1  , 1  ) {};
        \node (n3) at (2  , 1  ) {};
        \node (n4) at (3  , 1  ) {};
        \node[circle, fill=black, inner sep=0pt, minimum size=3.5pt, label=above:$\scriptstyle n-1$] (n5) at (4  , 1  ) {};
        \node[circle, fill=black, inner sep=0pt, minimum size=3.5pt, label=above:$\scriptstyle \phantom{1}n\phantom{1}$] (n6) at (5  , 1  ) {};
        \node[circle, fill=black, inner sep=0pt, minimum size=3.5pt, label=right:$\scriptstyle 1$] (n0) at (1  , 0  ) {};
        \draw [-stealth, shorten <= -2.50pt, shorten >= 2.50pt] (n4) to  (n5);
        \draw [dotted, -, shorten <= -2.50pt, shorten >= -2.50pt] (n3) to  (n4);
        \foreach \x/\y in {1/0,1/2, 5/6}
        \draw [-stealth, shorten <= 2.50pt, shorten >= 2.50pt] (n\x) to  (n\y);
        \draw [-stealth, shorten <= 2.50pt, shorten >= -2.50pt] (n2) to  (n3);
\end{tikzpicture}
\textnormal{or\ \  }
\CH(J)=
\tikzsetnextfilename{hasseposit0Anmipesimisticii}\begin{tikzpicture}[baseline=(Alabel.base),label distance=-2pt, xscale=0.8]
        \node[circle, fill=black, inner sep=0pt, minimum size=3.5pt, label={[name=Alabel]left:$\scriptstyle 2$}] (n1) at (0  , 0.5  ) {};
        \node[circle, fill=black, inner sep=0pt, minimum size=3.5pt, label=above:$\scriptstyle 3$] (n2) at (1  , 1  ) {};
        \node (n3) at (2  , 1  ) {};
        \node (n4) at (3  , 1  ) {};
        \node[circle, fill=black, inner sep=0pt, minimum size=3.5pt, label=above:$\scriptstyle n-1$] (n5) at (4  , 1  ) {};
        \node[circle, fill=black, inner sep=0pt, minimum size=3.5pt, label=above:$\scriptstyle \phantom{1}n\phantom{1}$] (n6) at (5  , 1  ) {};
        \node[circle, fill=black, inner sep=0pt, minimum size=3.5pt, label=right:$\scriptstyle 1$] (n0) at (1  , 0  ) {};
        \draw [stealth-, shorten <= -2.50pt, shorten >= 2.50pt] (n4) to  (n5);
        \draw [dotted, -, shorten <= -2.50pt, shorten >= -2.50pt] (n3) to  (n4);
        \foreach \x/\y in {1/0,1/2, 5/6}
        \draw [stealth-, shorten <= 2.50pt, shorten >= 2.50pt] (n\x) to  (n\y);
        \draw [stealth-, shorten <= 2.50pt, shorten >= -2.50pt] (n2) to  (n3);
\end{tikzpicture}.
\end{equation}
It is easy to verify that in this case, Algorithm~\ref{alg:refl:posit} performs exactly  $n-2$ steps (line~\ref{alg:refl:posit:mainfor}), and each step performs exactly $2$ matrix multiplications: one in line~\ref{alg:refl:posit:linemuli} and second in line~\ref{alg:refl:posit:linemul}.

\begin{remark}
The reduction procedure described in the proof of Theorem~\ref{thm:mainthm:stronggram}\ref{thm:mainthm:stronggram:posit:opcongr}, applied to poset $J$ of the shape \eqref{cor:positalg:complex:pesshasse}, is as follows:
\begin{equation*}
J_1\eqdef J\mapsto J_2\eqdef S_1S_2 J_1\mapsto J_3\eqdef S_1S_2S_3 J_2\mapsto \cdots \mapsto J_{n-1}\eqdef S_1S_2\cdots S_{n-1} J_{n-2}\simeq {}_0\AA_{n-1}^*.
\end{equation*}
Hence, we can construct matrix $B$ that defines  strong Gram $\ZZ$-congruence $J\stackrel{B}{\approx_\ZZ} {}_0\AA_{n-1}^*$ as:
\[
B\eqdef (B_2 B_1) (B_3B_2 B_1)\cdots(B_{n-1}\cdots B_2 B_1)\in\Gln,
\]
where $B_a=[b^a_{i,j}]\in\MM_n(\ZZ)$ \eqref{eq:reflmat} is a matrix defining ($min,max$)-reflection $S_a$, see Proposition~\ref{prop:minmaxrefl}\ref{cor:minmaxrefl:strongcongr}.
 The straightforward implementation of this procedure yields $\frac{1}{2}(n-2)(n+1)=O(n^2)$ matrix multiplications. In Algorithm~\ref{alg:refl:posit}, we reuse partial results at each step (see lines \ref{alg:refl:posit:linefor}-\ref{alg:refl:posit:linemul}) to significantly reduce the number of multiplications.
\end{remark}

Assume now, that $J=(\{1,\ldots,n\},\preceq)$ is a principal connected poset of Dynkin type $\AA_{n-1}$. By Fact~\ref{fact:a:hasse}\ref{fact:a:hasse:princ}, the Hasse quiver $\CH(J)$ is $2$-regular, and $J$ has at least two maximal elements. It follows that  $\ov\CH(J)$ is a cycle graph, and  $J$ has at least two \textit{minimal} elements.
{
\makeatletter
\renewcommand{\ALG@name}{Listing}
\makeatother

\begin{breakablealgorithm}
    \caption{\textsc{ReflMatrixCirc()}, \textsc{NextOnCircle()} and \textsc{MinMaxMove()} functions}\label{lst:reflmatrixcirc}
    \begin{algorithmic}[1]
    \Function{ReflMatrixCirc}{$u, n, circle$}
    \State $S \gets E_n\in\MM_n(\ZZ)$\Comment{\makebox[66pt]{Identity matrix}}
    \State $a\gets circle[u-1\bmod n]$;\ $i\gets circle[u\bmod n]$;\ $b\gets circle[u+1\bmod n]$
    \State $S[i,i]\gets -1$;\ $S[a, i]\gets S[b, i]\gets 1$ \Comment{\makebox[66pt]{$S[x, y]\equiv s_{x,y}$}}
    \State \Return $S$ 
    \EndFunction\vspace{8pt}
    \Function{NextOnCircle}{$i, n, is\_mmx, circle$}
    \While {$\neg is\_mmx[circle[i\bmod n]]$}
    \State $i \gets i+1$
    \EndWhile
    \State \Return $i \bmod n$ 
    \EndFunction\vspace{8pt}
    \Function{MinMaxMove}{$k, p, is\_mmx, circle$}
    \State $B \gets E_n\in\MM_n(\ZZ)$
    \For{$i\in\{k,k-1,\ldots, p+1 \}$}
    \State $B\gets B\cdot\Call{ReflMatrixCirc}{i, n, circ}$
    \State $is\_minmax[circ[i-1]]\gets \neg is\_minmax[circ[i-1]]$
    \State $is\_minmax[circ[i+1]]\gets \neg is\_minmax[circ[i+1]]$
    \EndFor
    \State \Return $B$
    \EndFunction 
\end{algorithmic}
\end{breakablealgorithm}}%

Algorithm \ref{alg:refl:princ} is the implementation of the procedure described in the proof of Theorem~\ref{thm:mainthm:stronggram}\ref{thm:mainthm:stronggram:princ:pancongr}. Similarly as in case of Algorithm \ref{alg:refl:posit}, its main idea is to use the ($min,max$)-reflection operation (see \textsc{ReflMatrixCirc()} function in Listing~\ref{lst:reflmatrixcirc}) to perform the reduction $J\mapsto {}_p\wt\AA_{n-1}$.
Finally, in lines \ref{alg:refl:princ:relabeli}-\ref{alg:refl:princ:relabelii}, we ensure that  the algorithm  returns a matrix that defines the strong Gram $\ZZ$-congruence with the canonically numbered poset ${}_p\AA_{n}$ \eqref{hasse:pospprincpan}.

\begin{breakablealgorithm}
\caption{Strong Gram $\ZZ$-congruence of a principal connected poset $J$ with $\Dyn_J=\AA_n$ with a canonical two-peak poset ${}_p\wt \AA_{q}$}\label{alg:refl:princ}
\hspace*{\algorithmicindent}%
\textbf{Input} Incidence matrix $C_J\in\MM_n(\ZZ)$ of a principal connected poset $J$ of the Dynkin type $\AA_n$. \hfill\mbox{}\\
\hspace*{\algorithmicindent}%
\textbf{Output} Nonsingular matrix $B\in\MM_n(\ZZ)$ defining strong Gram $\ZZ$-congruence $J\stackrel{B}{\approx_\ZZ} {}_p\wt\AA_{n}$.\hfill\mbox{}
\begin{algorithmic}[1]
    \Function{StronCongrPositA}{$C_J\in\MM_n(\ZZ)$}
    \State $A \gets C_J^{-1}\in\MM_n(\ZZ)$;\ 
    \State $B \gets E_n\in\MM_n(\ZZ)$
    \State $D \gets 0_n\in\MM_n(\ZZ)$\Comment{Zero (null) matrix}
    \State $neigh\gets [\,]$; $is\_mm \gets [\,]$
     \For {$j\in\{0,\ldots,n-1\}$}
     \State $ins\gets \{i;\ a_{i,j}=-1\}$;\ $outs\gets \{i;\ a_{j,i}=-1\}$
     \State $neigh.append(ins \cup outs)$
     \State $is\_mm.append((ins=\emptyset)\ \textnormal{\textbf{or}}\ (outs=\emptyset))$\Comment{List of minimal and maximal vertices}
    \If {$ins=\emptyset$}\Comment{Vertex $j$ is minimal in $J$}
    \State $v\gets j$
    \EndIf
    \EndFor
    \State $circ\gets [v, neigh[v].pop()]$
    \While{$(w \gets (neigh[circ[len(circ) -1]] \setminus set(circ[len(circ)-2])).pop()) \neq v$}
    \State $circ.append(w)$
    \EndWhile
    \For{$i\in\{1,2 \}$}
    \If{$\neg is\_mm[circ[i]]$}
    \State $B\gets B\cdot \Call{MinMaxMove}{\Call{NextOnCircle}{i, n, is\_mm, circ}, i, is\_mm, circ}$
    \EndIf
    \EndFor
    
    \State $ p = \Call{NextOnCircle}{3, n, is\_mm, circ}$
    \While {$(k\gets \Call{NextOnCircle}{p+1, n, is\_mm, circ})\neq 0$ }
    \State $B\gets B\cdot \Call{MinMaxMove}{k, p, is\_mm, circ}$
    \State $ p = \Call{NextOnCircle}{p+1, n, is\_mm, circ}$
    \EndWhile
    \If{$p < n-p+2$}
    \State $\Call{swap}{circ[0], circ[2]}$
    \For{$i\in\{3,\ldots, \lfloor \frac{n-3}{2} \rfloor + 2\}$}
    \State $\Call{swap}{circ[i], circ[n-i+2]}$
    \EndFor
    \State $p\gets n-p+2$
    \EndIf
    
    \State $perm\gets [0,\ldots,0]\in \ZZ^n$\Comment{$n$-elemented array}\label{alg:refl:princ:relabeli}
    \State $perm[0]\gets circ[2]$;\ $perm[1]\gets circ[0]$
    \State $perm[n-1]\gets circ[p]$;\ $perm[n-2]\gets circ[1]$
    \For{$i\in\{3,\ldots, p - 1\}$}
    \State $perm[i-1]\gets circ[i])$
    \EndFor
    \For{$i\in\{1,\ldots, n - p - 1\}$}
    \State $perm[p+i-2]\gets circ[n-i])$
    \EndFor
     \For{$i\in\{1,\ldots, n \}$}
    \State $D.setColumn(i, B.getColumn(perm[i]))$\label{alg:refl:princ:relabelii}
    \EndFor
    \State \Return $D$
    \EndFunction
\end{algorithmic}
\end{breakablealgorithm}
\begin{corollary}\label{cor:princalg:complex}
Algorithm~\ref{alg:refl:princ} performs at most $\lfloor\frac{n-2}{2}\rfloor\cdot\lceil\frac{n-2}{2}\rceil=O(n^2)$ matrix multiplications, hence it is of $O(n^5)$ time complexity (assuming na\"{\i}ve matrix multiplication algorithm).
\end{corollary}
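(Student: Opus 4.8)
The plan is to follow the template of the proof of Corollary~\ref{cor:positalg:complex}: first argue that matrix multiplication is the asymptotically dominant primitive operation of Algorithm~\ref{alg:refl:princ}, and then bound how many such operations it performs. Every other operation — the single matrix inversion $A\gets C_J^{-1}$, the construction of the neighbourhood lists $neigh$ and of the cyclic order $circ$ along $\ov\CH(J)$, the final reconstruction of the output matrix $D$ from $B$ by a column permutation, and the remaining bookkeeping (the $is\_mm$ updates inside \textsc{MinMaxMove} and the \textsc{NextOnCircle} scans) — costs $O(n^3)$ or less and, summed over the whole run, stays $O(n^4)$, hence is dominated once the number of matrix products is shown to be $O(n^2)$. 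The overall running time is then $O(n^2)\cdot O(n^3)=O(n^5)$ for the na\"{\i}ve multiplication algorithm.

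Next I would localise the matrix products. They occur only (i) in the single line of \textsc{MinMaxMove} where $B$ is right-multiplied by the output of \textsc{ReflMatrixCirc}, and (ii) in each line of the main routine where $B$ is right-multiplied by the output of \textsc{MinMaxMove} — at most twice in the preamble loop over $i\in\{1,2\}$ and once per iteration of the main \texttt{while} loop. A call $\textsc{MinMaxMove}(k,p,\cdots)$ runs its loop over the $k-p$ indices $p+1,\dots,k$, so it contributes exactly $k-p$ products of type~(i); since there are $O(n)$ calls (the parameter $p$ of the \texttt{while} loop is strictly increasing, exactly as the sequence $r_k$ in the proof of Theorem~\ref{thm:mainthm:stronggram}\ref{thm:mainthm:stronggram:princ:pancongr}) and each contributes one product of type~(ii), the total number of matrix products is $\bigl(\sum_{\text{calls}}(k-p)\bigr)+O(n)$. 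It therefore suffices to bound $\sum_{\text{calls}}(k-p)$, the total number of ($min,max$)-reflections performed during the reduction $J\mapsto{}_p\wt\AA_{n}$.

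Finally, and this is where the work lies, I would bound the number of ($min,max$)-reflections. The observation recorded in the proof of Theorem~\ref{thm:mainthm:stronggram}\ref{thm:mainthm:stronggram:princ:pancongr} — that a single ($min,max$)-reflection, at the level of $\CH(J)$, transposes two adjacent arrows, trading a clockwise arrow for a counterclockwise one and vice versa — shows that the reduction $J\mapsto{}_p\wt\AA_{n}$ is nothing but a sorting of the cyclic word of arrow orientations into its canonical shape by adjacent transpositions, so $\sum_{\text{calls}}(k-p)$ is precisely the number of transpositions the algorithm uses. That word has $c(I)$ arrows of one rotational sense and $n-c(I)$ of the other, and $\CH(J)$ being $2$-regular with at least two minimal and at least two maximal elements forces $\tfrac{n}{2}\le c(I)\le n-2$ and $n-c(I)\ge 2$ (Fact~\ref{fact:a:hasse}\ref{fact:a:hasse:princ}); a cyclic-inversion estimate, discounting the two arrows pinned at the distinguished vertices, then bounds the number of transpositions by (essentially) $(c(I)-1)(n-c(I)-1)$, and since this quantity is maximised over the admissible range of $c(I)$ exactly at $\lfloor\tfrac{n-2}{2}\rfloor\cdot\lceil\tfrac{n-2}{2}\rceil$, one recovers the claimed bound, the $O(n)$ type-(ii) products being absorbed. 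The main obstacle is that Algorithm~\ref{alg:refl:princ} does not process single transpositions: the array $is\_mm$, which every \textsc{MinMaxMove} call mutates on the window $[p,k+1]$, makes successive calls operate on arcs reshaped by the earlier ones, so the greedy strategy is not manifestly optimal and one must show, by an invariant — for instance, that after the $j$-th \texttt{while} iteration the already-processed part of the cycle is a monotone chain glued to the distinguished vertices — that the cumulative transposition count never exceeds the cyclic-inversion bound; pinning the exact constant then requires exhibiting a "maximally scrambled" principal poset realising it, analogous to the extremal positive poset~\eqref{cor:positalg:complex:pesshasse}.
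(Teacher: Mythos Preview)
Your proposal is correct in its high-level structure but takes a noticeably different route from the paper. The paper's proof is extremely terse: after noting that matrix multiplication is the dominant operation, it simply exhibits a concrete principal poset $J$ (the one whose Hasse quiver has two chains of lengths $k-1$ and $n-1-k$ joined at a single minimal vertex $k$, together with a fourth vertex $n$ adjacent to the two maximal elements, with $k=\lfloor n/2\rfloor$), asserts that Algorithm~\ref{alg:refl:princ} applied to this particular $J$ performs exactly $\lfloor\tfrac{n-2}{2}\rfloor\cdot\lceil\tfrac{n-2}{2}\rceil$ matrix multiplications, asserts that this $J$ is, up to isomorphism, the pessimistic case, and leaves both verifications to the reader. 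There is no sorting or inversion argument, no analysis of the type-(i)/type-(ii) split, and no invariant; the upper bound is not proved analytically at all.

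Your approach, by contrast, tries to establish the upper bound directly by interpreting the ($min,max$)-reflections as adjacent transpositions sorting a cyclic binary word and bounding their number by a cyclic-inversion count $(c(I)-1)(n-c(I)-1)$, then maximising over the admissible range of $c(I)$. This is more principled and would, if completed, actually \emph{prove} the upper bound rather than merely assert it. The obstacle you identify --- that \textsc{MinMaxMove} mutates $is\_mm$ on overlapping windows, so one must show by an invariant that the algorithm never exceeds the inversion bound --- is real and is precisely what the paper's ``details are left to the reader'' sweeps under the rug. One small point: your type-(ii) count is an additional $O(n)$ term on top of $\sum(k-p)$, so your argument yields $\lfloor\tfrac{n-2}{2}\rfloor\cdot\lceil\tfrac{n-2}{2}\rceil+O(n)$ rather than the exact constant; this is harmless for the $O(n^2)$ and $O(n^5)$ conclusions, but if you want to match the paper's exact bound you would need to fold the outer products into the inner count more carefully (or simply follow the paper and exhibit the extremal $J$ directly).
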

\begin{proof}
It is straightforward to see that matrix multiplication is the dominant (i.e., most time-consuming) operation of Algorithm~\ref{alg:refl:princ}. Assume that $J$ is a principal connected poset of Dynkin type $\AA_{n-1}$ and 

\begin{equation}\label{cor:princalg:complex:pesshasse}
\CH(J)=
\tikzsetnextfilename{hasseprincpesimistic}\begin{tikzpicture}[baseline=(Alabel.base),label distance=-2pt, xscale=0.8, yscale=0.6]
\node[circle, fill=black, inner sep=0pt, minimum size=3.5pt, label=above:$\scriptstyle 1$] (n1) at (6  , 2  ) {};
\node[circle, fill=black, inner sep=0pt, minimum size=3.5pt, label=above:$\scriptstyle 2$] (n2) at (5  , 2  ) {};
\node[circle, fill=black, inner sep=0pt, minimum size=3.5pt, label=above:$\scriptstyle 3$] (n3) at (4  , 2  ) {};
\node (n4) at (3  , 2  ) {$\scriptstyle $};
\node (n5) at (2  , 2  ) {$\scriptstyle $};
\node[circle, fill=black, inner sep=0pt, minimum size=3.5pt, label=above:$\scriptstyle k-1$] (n6) at (1  , 2  ) {};
\node[circle, fill=black, inner sep=0pt, minimum size=3.5pt, label={[name=Alabel]left:$\scriptstyle k$}] (n7) at (0  , 1  ) {};
\node[circle, fill=black, inner sep=0pt, minimum size=3.5pt, label=below:$\scriptstyle k+1$] (n8) at (1  , 0  ) {};
\node (n9) at (2  , 0  ) {$\scriptstyle $};
\node (n10) at (4  , 0  ) {$\scriptstyle $};
\node[circle, fill=black, inner sep=0pt, minimum size=3.5pt, label=below:$\scriptstyle n-2$] (n11) at (5  , 0  ) {};
\node[circle, fill=black, inner sep=0pt, minimum size=3.5pt, label=below:$\scriptstyle n-1$] (n12) at (6  , 0  ) {};
\node[circle, fill=black, inner sep=0pt, minimum size=3.5pt, label=left:$\scriptstyle n$] (n13) at (5  , 1  ) {};
\foreach \x/\y in {2/1, 3/2, 7/6, 7/8, 11/12, 13/1, 13/12}
        \draw [-stealth, shorten <= 2.50pt, shorten >= 2.50pt] (n\x) to  (n\y);
\foreach \x/\y in {6/5, 8/9}
        \draw [-stealth, shorten <= 2.50pt, shorten >= -2.50pt] (n\x) to  (n\y);
\foreach \x/\y in {5/4, 9/10}
        \draw [line width=1.2pt, line cap=round, dash pattern=on 0pt off 5\pgflinewidth, -, shorten <= -2.50pt, shorten >= -2.50pt] (n\x) to  (n\y);
\foreach \x/\y in {4/3, 10/11}
        \draw [-stealth, shorten <= -2.50pt, shorten >= 2.50pt] (n\x) to  (n\y);
\end{tikzpicture}\!\!\!,\textnormal{ where }
k=\begin{cases}
\frac{n}{2},&\textnormal{ if $n$ is even,}\\
\frac{n-1}{2},&\textnormal{ if $n$ is odd.}
\end{cases}
\end{equation}
One checks that:\begin{itemize}
\item 
the  number of matrix multiplication performed by
Algorithm~\ref{alg:refl:princ} applied to the poset $J$ equals:
\[
\left\lfloor\tfrac{n-2}{2}\right\rfloor\cdot\left\lceil\tfrac{n-2}{2}\right\rceil=
\begin{cases}
(k-1)^2,&\textnormal{ if $n$ is even,}\\
k(k-1),&\textnormal{ if $n$ is odd;}
\end{cases}
\]
\item poset $J$, up to the isomorphism, represents the pessimistic case that requires the most matrix multiplications.
\end{itemize}
The details are left to the reader.
\end{proof}

\section{Enumeration of Dynkin type \texorpdfstring{$\AA_m$}{Am} posets}

Throughout this section, we assume that $I$ is a non-negative poset of a Dynkin type $\AA_m$. Our aim is to give formulas for the exact number of all (up to the isomorphism) posets $I$ with a given Coxeter polynomial (called \textit{Coxeter type}).
The total number of such non-negative posets $I$ is given in \cite{gasiorekStructureNonnegativePosets2022arxiv} and is as follows.

{\sisetup{group-minimum-digits=4}%
\begin{fact}\label{fact:typeanum}
Let $Nneg(\AA_n)$ be the number of all non-negative posets $I$ of size $n=|I|$ and Dynkin type $\Dyn_I=\AA_{n-\crk_I}$. Then
\begin{equation}\label{thm:typeanum:eq}
Nneg(\AA_n)=
\begin{cases}
1 & \textnormal{ if }n\in\{1,2\},\\
2^{n - 2} + \frac{1}{2n} \sum_{d\mid n}\left(2^{\frac{n}{d}}\varphi(d)\right) + 2^{\frac{n - 3}{2}}-\lceil\frac{n+1}{2}\rceil, & \textnormal{ if } n\geq 3 \textnormal{ is odd},\\[0.1cm]
2^{n - 2} + \frac{1}{2n} \sum_{d\mid n}\left(2^{\frac{n}{d}}\varphi(d)\right) + 2^{\frac{n}{2}-2}-\lceil\frac{n+1}{2}\rceil, & \textnormal{ if } n\geq 4 \textnormal{ is even},\\
\end{cases}
\end{equation}
where $\varphi$ is Euler's totient function. Moreover, there are exactly
\begin{equation}\label{fact:digrphnum:path:eq}
N(P_n)=
\begin{cases}
2^{n-2}, & \textnormal{if $n\geq 2$ is even},\\[0.1cm]
2^{\frac{n - 3}{2}} + 2^{n - 2}, & \textnormal{if $n\geq 1$ is odd,}\\
\end{cases}
\end{equation}
positive posets, and $Nneg(\AA_n)-N(P_n)$ principal ones.
\end{fact}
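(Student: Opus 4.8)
The plan is to translate the enumeration of posets into the enumeration of orientations of a path and of a cycle up to digraph isomorphism, and then to evaluate those counts by Burnside's lemma. By Fact~\ref{fact:a:hasse}, a finite connected poset $I$ with $|I|=n$ is non-negative of Dynkin type $\AA_{n-\crk_I}$ precisely when $\CH(I)$ is either an oriented path on $n$ vertices (and then $I$ is positive, $\crk_I=0$) or a connected $2$-regular acyclic digraph with at least two sinks (and then $I$ is principal, $\crk_I=1$); since $\CH(I)$ determines $I$ uniquely and $I\simeq J$ iff $\CH(I)\simeq\CH(J)$, the number of such posets up to isomorphism equals the number, up to digraph isomorphism, of (i) orientations of the path $P_n$ and (ii) orientations of the cycle $C_n$ with at least two sinks. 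For (ii): a connected $2$-regular digraph on $n$ vertices is an orientation of $C_n$; going around the cycle, the number of sinks equals the number of sources, say $k$; the two orientations with $k=0$ are not acyclic, those with $k=1$ have only one sink, and every orientation with $k\geq2$ is acyclic and transitively reduced (a redundant edge would force $k=1$), hence the Hasse digraph of the principal poset we want. So the principal posets correspond, up to isomorphism, to the $2^{n}-2-n(n-1)$ orientations of $C_n$ with $k\geq2$: all of them, minus the two directed cycles and minus the $n(n-1)$ orientations having one source and one sink.

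For the positive count, $P_n$ has $n-1$ edges, hence $2^{n-1}$ orientations, and $\mathrm{Aut}(P_n)\cong\ZZ_2$ is generated by the reversal $i\mapsto n+1-i$. Burnside gives $N(P_n)=\tfrac12(2^{n-1}+f_n)$, where $f_n$ is the number of reversal-invariant orientations. As the reversal reverses the order of the $n-1$ edges and flips each direction, an invariant orientation requires the edges to pair up with no fixed edge: for even $n$ the central edge is fixed and direction-reversed, so $f_n=0$; for odd $n$ the edges form $\frac{n-1}{2}$ free pairs, so $f_n=2^{\frac{n-1}{2}}$. This is exactly \eqref{fact:digrphnum:path:eq}.

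For the principal count, I would first count the $D_n$-orbits of all $2^n$ orientations of $C_n$ by Burnside over $\mathrm{Aut}(C_n)\cong D_n$, $|D_n|=2n$. A rotation by $j$ steps splits the $n$ edges into $\gcd(j,n)$ cycles and, preserving directions, fixes $2^{\gcd(j,n)}$ orientations, so the rotations contribute $\sum_{d\mid n}\varphi(d)2^{n/d}$. A reflection reverses edge directions, so it fixes an orientation only if it fixes no edge: for even $n$ the $n/2$ reflections through two opposite vertices pair all edges and fix $2^{n/2}$ orientations each, while the $n/2$ reflections through two opposite edge midpoints each fix two edges and fix no orientation; for odd $n$ every reflection fixes exactly one edge and fixes no orientation. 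Burnside thus gives $\tfrac1{2n}\sum_{d\mid n}\varphi(d)2^{n/d}$ orbits of orientations for odd $n$, and $\tfrac1{2n}\sum_{d\mid n}\varphi(d)2^{n/d}+2^{\frac{n}{2}-2}$ for even $n$. It remains to subtract the orbits of the inadmissible orientations: the two directed cycles form one orbit, and the $k=1$ orientations — each determined by the unordered pair $\{a,b\}$ of lengths of the two arcs joining its unique source and sink, with $a+b=n$ and $a,b\geq1$ — form $\lfloor n/2\rfloor$ orbits, for a total of $\lfloor n/2\rfloor+1=\lceil(n+1)/2\rceil$ orbits. Adding $N(P_n)$ to the remaining principal count gives \eqref{thm:typeanum:eq} for $n\geq3$; the cases $n\in\{1,2\}$, where there is no principal poset, are immediate.

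The hard part will be the dihedral Burnside bookkeeping for $C_n$: one has to split the reflections into their two conjugacy classes when $n$ is even, use that a reflection fixing an edge fixes no orientation because it reverses that edge, and — most error-prone — identify exactly which orbits to discard, namely the single orbit of directed cycles together with the $\lfloor n/2\rfloor$ orbits of orientations having one source and one sink, since an off-by-one here propagates directly into the term $\lceil(n+1)/2\rceil$ of \eqref{thm:typeanum:eq}. One also needs that the $k=1$ orientations, while Hasse quivers of posets, are never non-negative of type $\AA_{n-1}$, but this is already contained in Fact~\ref{fact:a:hasse}\ref{fact:a:hasse:princ}.
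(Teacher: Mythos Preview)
Your proof is correct and complete. The paper itself does not prove this statement: it imports Fact~\ref{fact:typeanum} wholesale from the companion paper~\cite{gasiorekStructureNonnegativePosets2022arxiv} and only develops the related necklace machinery (Propositions~\ref{prop:binneckl:num} and~\ref{prop:orietedcycle:idx:num}) later, for the finer count by Coxeter type in Theorem~\ref{thr:posnum:a}. So you are supplying an argument the present paper deliberately omits.

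Your route is the natural one and matches what the cited reference presumably does: reduce via Fact~\ref{fact:a:hasse} to counting orientations of $P_n$ and $C_n$ up to digraph isomorphism, then apply Burnside over $\ZZ_2$ and $D_n$ respectively. The bookkeeping is right throughout: the parity split for $N(P_n)$ comes out exactly as in~\eqref{fact:digrphnum:path:eq}; the reflection analysis for $C_n$ correctly uses that a reflection reverses the cyclic sense of each edge, so any reflection fixing an edge fixes no orientation; and the identification of the $\lceil(n+1)/2\rceil$ discarded orbits (one directed-cycle orbit plus $\lfloor n/2\rfloor$ one-sink orbits indexed by the unordered arc-length pair $\{a,n-a\}$) is exactly what is needed. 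Your side remark that every $k\ge 2$ orientation is automatically transitively reduced---because a redundant edge would force all remaining $n-1$ edges to be consistently oriented and hence $k=1$---is the one point Fact~\ref{fact:a:hasse} leaves implicit, and you handle it cleanly.
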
}%

In view of Theorem~\ref{thm:mainthm:stronggram}, to obtain the number of posets $I$ of a given Coxeter polynomial $\cox_I(t)\in\ZZ[t]$ \eqref{eq:pos_cox_poly}, we count oriented paths and (some) oriented cycles. Indeed%
\begin{enumerate}[label=\normalfont{(\roman*)}]
\item\label{enum:a:posit} $I$ is positive and $\cox_I(t) = t^{n+1}+t^n+\ldots + t + 1\in\ZZ[t]$ iff $\CH(I)$ is an oriented path,
\item\label{enum:a:princ} $I$ is principal and $\cox_I(t) = t^n- t^p-t^{n-p}+1\in\ZZ[t]$ iff $\CH(I)$ is such an oriented cycle,
that
\begin{itemize}
\item there are at least two sinks,
\item there are exactly $p$ edges oriented clockwise, see Remark~\ref{rmk:cycle_index_hasse}.
\end{itemize}
\end{enumerate}

Since there are exactly $N(P_n)$ \eqref{fact:digrphnum:path:eq} oriented paths (see \cite{gasiorekStructureNonnegativePosets2022arxiv}), it is sufficient to count oriented cycles described in~\ref{enum:a:princ}. We start with counting such oriented cycles $D=(V_{D},E_{D})$, that have exactly $n\eqdef |V_D|\geq 3$ vertices and $p\geq \frac{n}{2}$ edges oriented clockwise. 

Assume that $D=(\{1,\ldots,n\}, E_{D}\})$ is an oriented cycle. Without loss of generality, we may assume that $D$
is depicted in the circle layout on the Euclidean plane and its edges $e\in E_{D}$ are labeled with two colors:\pagebreak

\begin{itemize}
\item \textit{black}: if $e$ is clockwise oriented, and
\item \textit{white}: if $e$ is counterclockwise oriented.
\end{itemize} 
This way, every cycle $D$ can be viewed as a binary combinatorial necklace $\CN_2(n)$. We recall that a binary necklace of length $n$ is an equivalence class of binary strings $s\in\{0,1\}^n$ on length $n$, with all \textit{rotations} considered equivalent. By a \textit{rotation} we mean a \textit{circular shift}, i.e., a cyclic permutation consisting of a single nontrivial cycle. In other words, a binary combinatorial necklace represents a structure of $n$ circularly connected beads, each of which is either black or white.

\begin{proposition}\label{prop:binneckl:num}
The number $N\!ck_2(p,n)$ of binary combinatorial necklaces that have exactly $p$ black and $n-p$ white beads: 
\begin{enumerate}[label=\normalfont{(\alph*)}]
 \item\label{prop:binneckl:num:genfunc} is given by the coefficient $a_{p,n-p}\in\ZZ$ of the monomial $b^pw^{n-p}$ in the generating function
\begin{equation}\label{prop:binneckl:num:genfunc:eq}
G_{\CN_2(n)}(b,w)=\sum_{p=0}^n a_{p,n-p} b^pw^{n-p}= \frac{1}{n} \sum_{d|n} (b^{n/d} + w^{n/d})^d \varphi \left( \frac{n}{d} \right)\in\ZZ[b,w];
\end{equation}
 \item\label{prop:binneckl:num:exact} equals:
 \begin{equation}
N\!ck_2(p,n)=\frac{1}{n}\sum_{d|\gcd(n,p)}\varphi(d)\binom{n/d}{p/d}.
\end{equation} 
\end{enumerate}
\end{proposition}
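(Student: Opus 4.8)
The plan is to apply Burnside's lemma (the Cauchy--Frobenius counting lemma) to the action of the cyclic group $\ZZ/n\ZZ$ on the set of binary strings $\{0,1\}^n$, where the generator acts by circular shift. First I would recall that a binary necklace of length $n$ is by definition an orbit of this action, so $N\!ck_2(p,n)$ — counting only orbits whose strings have exactly $p$ entries equal to one fixed colour, say black — is the number of such orbits among strings of fixed weight $p$. To keep track of the weight, I would work not with a plain count but with the weighted count recorded by the two-variable generating function $G_{\CN_2(n)}(b,w)$, assigning the monomial $b^{\#\text{black}}w^{\#\text{white}}$ to each string; Burnside's lemma then applies verbatim with each fixed-point set replaced by its weighted sum.

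The key computational step is to identify, for each $k\in\{0,\ldots,n-1\}$, the weighted sum over strings fixed by the $k$-th power of the shift. A string $s$ is fixed by the circular shift $\rho^k$ exactly when it is constant on each orbit of $\langle\rho^k\rangle$ acting on the $n$ positions, and these position-orbits all have size $n/\gcd(n,k)$ and there are $\gcd(n,k)$ of them. Hence a $\rho^k$-fixed string is determined freely by one colour choice per position-orbit, and its weighted contribution is $(b^{n/\gcd(n,k)} + w^{n/\gcd(n,k)})^{\gcd(n,k)}$. Summing over $k$ and grouping the terms by the common value $d:=\gcd(n,k)$ — there are exactly $\varphi(n/d)$ values of $k$ in $\{0,\ldots,n-1\}$ with $\gcd(n,k)=d$, for each divisor $d\mid n$ — and dividing by $|\ZZ/n\ZZ|=n$, I obtain precisely the formula in \eqref{prop:binneckl:num:genfunc:eq}. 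Extracting the coefficient of $b^pw^{n-p}$ from both sides of \eqref{prop:binneckl:num:genfunc:eq} proves part~\ref{prop:binneckl:num:genfunc}.

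For part~\ref{prop:binneckl:num:exact} I would extract the coefficient term by term. In $(b^{n/d} + w^{n/d})^d$ the coefficient of $b^pw^{n-p}$ is, by the binomial theorem, $\binom{d}{p d/n}$ when $n/d$ divides $p$ (equivalently $d\mid \gcd(n,p)\cdot(n/\gcd(n,p))$, which after simplification is the condition $\tfrac{n}{d}\mid p$, i.e.\ $d$ is a multiple of $\tfrac{n}{\gcd(n,p)}$ — I should double-check the index bookkeeping here), and is $0$ otherwise. Reindexing the surviving divisors via $d\mapsto$ the complementary divisor so that the summation runs over $d\mid\gcd(n,p)$, and using $\binom{d}{pd/n}=\binom{n/d}{p/d}$ under this reindexing together with the standard fact that $\varphi$ is constant on the corresponding cosets, yields $N\!ck_2(p,n)=\tfrac{1}{n}\sum_{d\mid\gcd(n,p)}\varphi(d)\binom{n/d}{p/d}$. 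The main obstacle I anticipate is purely the change-of-variable bookkeeping in the divisor sum — matching $\gcd(n,k)=d$ to the binomial index $pd/n$ and confirming that the rescaled index $p/d$ is an integer precisely when $d\mid\gcd(n,p)$; none of it is deep, but it is the easiest place to drop a factor. Everything else is a direct invocation of Burnside's lemma, a fact well known from \cite{barotQuadraticFormsCombinatorics2019} and standard combinatorics.
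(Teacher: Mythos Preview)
Your proposal is correct and rests on the same core tool as the paper---Burnside's lemma applied to the cyclic shift action on $\{0,1\}^n$---so the approaches are essentially the same. The only organizational difference is that the paper cites the P\'olya enumeration theorem for part~\ref{prop:binneckl:num:genfunc} and then proves part~\ref{prop:binneckl:num:exact} by applying Burnside directly to the set of weight-$p$ strings, whereas you prove~\ref{prop:binneckl:num:genfunc} via weighted Burnside (which is P\'olya in this special case) and then derive~\ref{prop:binneckl:num:exact} from~\ref{prop:binneckl:num:genfunc} by coefficient extraction; your route is slightly more unified, and the divisor reindexing you flag ($d\mapsto n/d$, so that $\tfrac{n}{d}\mid p$ becomes $d\mid\gcd(n,p)$ and $\binom{d}{pd/n}$ becomes $\binom{n/d}{p/d}$, with $\varphi(n/d)\mapsto\varphi(d)$) goes through cleanly.
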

\begin{proof}
Part \ref{prop:binneckl:num:genfunc} of the proposition follows directly from the Pólya enumeration theorem \cite{polyaCombinatorialEnumerationGroups1987} and to show \ref{prop:binneckl:num:exact} we use the Burnside's lemma (also known as the Cauchy–Frobenius lemma) as follows.

Assume that $X$ is the set of binary necklaces with $n$ beads, $p$ of which is black, and $G$ denotes the rotations group. By Burnside's lemma, we have that
\begin{equation*}
N\!ck_2(p,n)=\frac{1}{|G|}\sum_{g\in G} |X^g|,
\end{equation*}
where $X^g$ denotes the set on necklaces fixed by $g\in G$, i.e. $X^g\eqdef \{x\in X;\, gx=x\}$. Now, we fix an element $g\in G$ and describe necklaces $x\in X^g$. Note that $|G|=n$, since every rotation $g\in G$ can be viewed as a clockwise rotation by $0\leq k < n$ beads. It follows that every $x\in X^g$ consists of $d\eqdef \frac{n}{k}$ identical \textit{linear} (continuous) sections of length $k= \frac{n}{d}$. These sections can be chosen in
\begin{equation*}
\binom{n/d}{p/d,(n-p)/d}=\frac{\frac{n}{d}!}{\frac{p}{d}!\frac{n-p}{d}!}=\binom{n/d}{p/d}
\end{equation*}
ways and we conclude that $X^g\neq \emptyset$ if and only if $d|\gcd(n,p)$. Moreover, for each divisor $d$ of $\gcd(n,p)$, there are $\varphi(d)$ rotations with the same number of symmetries as rotation by $k$ beads, that~is:
\begin{equation*}
N\!ck_2(p,n)=\frac{1}{|G|}\sum_{g\in G} |X^g|=\frac{1}{n}\sum_{d|\gcd(n,p)}\varphi(d)\binom{n/d}{p/d},
\end{equation*}
where $\varphi(d)\eqdef|\{1\leq k<n;\,\gcd(k,n)=1 \}|$ is Euler's totient function.
\end{proof}

\begin{example}
Consider the set $\CN_2(10)$ of all binary necklaces consisting of $10$ beads. 
The generating function $G(b,w)\eqdef G_{\CN_2(10)}(b,w)\in\ZZ[b,w]$ is given by the formula
\begin{equation*}
G(b,w)=b^{10} + b^{9} w + 5 b^{8} w^{2} + 12 b^{7} w^{3} + 22 b^{6} w^{4} + 26 b^{5} w^{5} + 22 b^{4} w^{6} + 12 b^{3} w^{7} + 5 b^{2} w^{8} + b w^{9} + w^{10}\!.
\end{equation*}
Hence, the total number of such necklaces equals $|\CN_2(10)|=G_{\CN_2(10)}(1,1)=108$. In particular, there exists exactly
 \begin{equation*}
N\!ck_2(6,10)=\frac{1}{10}\!\sum_{d|\gcd(10,6)}\!\!\!\varphi(d)\binom{n/d}{p/d}=\frac{1}{10}\left(\varphi(1)\binom{10}{6}+\varphi(2)\binom{5}{3}\right)=\frac{1}{10}(210+10)=22
\end{equation*} 
necklaces that have $6$ black beads. Obviously, this number coincides with the number of necklaces that have exactly $6$ white beads and with the coefficient $a_{6,4}$ of $b^6w^4$ monomial in the $G(b,w)\in\ZZ[b,w]$ generating function \eqref{prop:binneckl:num:genfunc:eq}.
\end{example}

Assume that $C_n=(\{1,\ldots, n\},E_{C_n})$ is an oriented cycle. Since $C_n$ is a planar digraph, we may assume that it is depicted on the Euclidean plane in such a way that no two edges intersect, for example, in the circle layout.
By a cycle index $c(C_n)\in\{\lceil\frac{n}{2} \rceil, \ldots, n \}$ we mean $c(C_n)\eqdef \max(r_{C_n},l_{C_n})$, where $r_{C_n}$ ($l_{C_n}$) denotes the number of clockwise (counterclockwise) oriented edges in $C_n$. It is straightforward to see that the cycle index is invariant under digraph isomorphism.

\begin{proposition}\label{prop:orietedcycle:idx:num}
Number $N(C_n,p)$ of all, up to digraph isomorphism, oriented cycles $D$ with the cycle index $c(D)$ equal $p$ is given by the formulae
\begin{equation}\label{prop:orietedcycle:idx:num:eq}
N(C_n,p)=
\begin{cases}
\frac{1}{n}\sum_{d|\gcd(n,p)}\varphi(d)\binom{n/d}{p/d}, & \textnormal{if }p\in\{\lceil\frac{n}{2} \rceil, \ldots, n \}\textnormal{ and } p\neq\frac{n}{2},\\[0.1cm]
\frac{1}{2n}\sum_{d|\gcd(n,\frac{n}{2})}\varphi(d)\binom{n/d}{n/2d}+2^{\frac{n}{2}-2}, & \textnormal{if }p=\frac{n}{2}.\\
\end{cases}
\end{equation}
\end{proposition}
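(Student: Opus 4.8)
The plan is to count oriented cycles on $n$ vertices up to digraph isomorphism by applying Burnside's lemma to the action of the dihedral group $D_n=\mathrm{Aut}(C_n)$, of order $2n$, on the set of all edge‑orientations of the $n$‑cycle. First I would set up the dictionary between oriented cycles $D$ and binary necklaces: fixing the circle layout, record for each edge of $D$ a color (black if oriented clockwise, white if counterclockwise), so that $D$ becomes a binary string $s\in\{0,1\}^n$ and a digraph isomorphism corresponds to an element of $D_n$. The key point, which needs care, is how $D_n$ acts on strings: a rotation acts by the corresponding cyclic shift and leaves orientations unchanged, whereas a reflection of the planar picture reverses the orientation of \emph{every} edge, hence acts by \emph{reversing the string and simultaneously exchanging the two colors}. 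In particular $c(D)=\max(r_D,l_D)$ is $D_n$‑invariant, so for $p\in\{\lceil n/2\rceil,\dots,n\}$ the set $X_p$ of strings with $\max(\#\textnormal{black},\#\textnormal{white})=p$ is a union of $D_n$‑orbits and $N(C_n,p)=\tfrac{1}{2n}\sum_{g\in D_n}|X_p^g|$.

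The rotation contributions reduce immediately to Proposition~\ref{prop:binneckl:num}. Since $N\!ck_2(q,n)$ counts the rotation‑orbits of strings with exactly $q$ black beads, Burnside for the cyclic subgroup gives $\sum_{\sigma\ \textnormal{rotation}}|Y_q^\sigma|=n\,N\!ck_2(q,n)$ for the set $Y_q$ of such strings. As $X_p$ is the disjoint union of the $p$‑black and the $(n-p)$‑black strings when $p\neq n/2$ (and $N\!ck_2(p,n)=N\!ck_2(n-p,n)$), the total rotation contribution is $2n\,N\!ck_2(p,n)$ in that case, and $n\,N\!ck_2(n/2,n)$ when $p=n/2$.

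The heart of the argument, and where I expect the main obstacle, is the reflection contributions: one must count the strings fixed by $s\mapsto(\textnormal{reverse}\circ\textnormal{color-swap})(s)$ for each reflection, keeping track of how many edges it fixes. A string $s$ fixed by such a map satisfies $s(e)=\overline{s(\pi(e))}$ where $\pi$ is the induced edge permutation; on a fixed edge ($\pi(e)=e$) this forces $s(e)=\overline{s(e)}$, which is impossible, while on a transposed pair $\{e,e'\}$ it forces one black and one white. If $n$ is odd, every reflection of $C_n$ fixes exactly one edge, so it has no fixed string and contributes $0$. If $n$ is even there are two kinds: the $n/2$ reflections through two opposite edges each fix two edges and again contribute $0$; the $n/2$ reflections through two opposite vertices fix no edge and split the $n$ edges into $n/2$ transposed pairs, giving $2^{n/2}$ fixed strings, \emph{all having exactly $n/2$ black beads}. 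Hence when $p\neq n/2$ these fixed strings miss $X_p$ and every reflection contributes $0$, while when $p=n/2$ each of the $n/2$ vertex‑type reflections contributes $2^{n/2}$ and the edge‑type ones contribute $0$.

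Assembling via Burnside with $|D_n|=2n$ then finishes the proof. For $p\neq n/2$ one gets $N(C_n,p)=\tfrac{1}{2n}\cdot 2n\,N\!ck_2(p,n)=N\!ck_2(p,n)=\tfrac1n\sum_{d\mid\gcd(n,p)}\varphi(d)\binom{n/d}{p/d}$ by Proposition~\ref{prop:binneckl:num}; for $p=n/2$ one gets $N(C_n,n/2)=\tfrac{1}{2n}\bigl(n\,N\!ck_2(n/2,n)+\tfrac n2\,2^{n/2}\bigr)=\tfrac12N\!ck_2(n/2,n)+2^{n/2-2}=\tfrac{1}{2n}\sum_{d\mid\gcd(n,n/2)}\varphi(d)\binom{n/d}{n/2d}+2^{n/2-2}$, which are exactly the two formulae of \eqref{prop:orietedcycle:idx:num:eq}. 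The only genuinely delicate steps are verifying the color‑swap in the reflection action and the edge‑fixed‑point bookkeeping for $n$ even that produces the correction term $2^{n/2-2}$; everything else is routine.
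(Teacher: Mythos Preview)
Your proof is correct and, in fact, more complete than the paper's. Both arguments pass through the necklace model and Proposition~\ref{prop:binneckl:num}, but the strategies differ. The paper proceeds in two steps: first it identifies $D_n$-orbits with $p\neq n/2$ directly with rotation-orbits (necklaces) having exactly $p$ black beads, which is immediate since color-swap moves such a necklace to one with $n-p\neq p$ black beads; for $p=n/2$ it then quotients the $N\!ck_2(n/2,n)$ necklaces by color-swap, asserting without proof that exactly $2^{n/2-1}$ of them are self-complementary. Your route applies Burnside to the full dihedral action in one stroke, and the delicate reflection bookkeeping (no fixed string when the reflection fixes an edge; $2^{n/2}$ fixed strings, all balanced, for each vertex-type reflection when $n$ is even) both recovers the $p\neq n/2$ formula and supplies a proof of the correction term $2^{n/2-2}$ that the paper leaves as a bare claim. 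The paper's argument is shorter when one is willing to accept the self-complementary count; yours is self-contained and explains where that number comes from.
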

\begin{proof}
We use the combinatorial necklace model. If $p\neq\frac{n}{2}$, it is straightforward to see that $N(C_n,p)$ coincides with the number of combinatorial necklaces that have exactly $p>n-p$ black beads. Hence, the first part of equality \eqref{prop:orietedcycle:idx:num:eq} follows from Proposition~\ref{prop:binneckl:num}\ref{prop:binneckl:num:exact}. To prove the second part, we assume that $p=\frac{n}{2}=n-p$, i.e., we consider combinatorial necklaces that have an equal number of white and black beads. We have two possibilities: we can identify black beads with clockwise or counterclockwise edges. Since switching beads color yields isomorphic necklaces in exactly $2^{\frac{n}{2}-1}$ cases, we conclude that $N(C_n,\frac{n}{2})=2^{\frac{n}{2}-1} + (N\!ck_2(\frac{n}{2},n) -2^{\frac{n}{2}-1})/2 $ and equality \eqref{prop:orietedcycle:idx:num:eq} follows.
\end{proof}

As discussed earlier in this section, every non-negative poset $I$ of a Dynkin type $\Dyn_I=\AA_m$ can be viewed as a binary combinatorial necklace. Moreover, the cycle index $c(I)$ \eqref{eq:df:cycleindex} has a natural interpretation in this model: it is the number of black beads. By combining the results of Theorem~\ref{thm:mainthm:stronggram} and Proposition~\ref{prop:orietedcycle:idx:num} we get the proof of Theorem~\ref{thr:posnum:a}.

\begin{proofof}{Proof of Theorem 1.2}
Apply the results of Fact~\ref{fact:typeanum}, Theorem~\ref{thm:mainthm:stronggram} and Proposition~\ref{prop:orietedcycle:idx:num}. In particular, to prove \ref{thr:posnum:a:printc}, assume that $n\geq 4$ and $\frac{n}{2}\leq p\leq n-2$ are fixed and note that there exists only one poset $I$ (oriented cycle $C_n$) that has exactly one maximal element (one sink) and $c(I)=c(C_n)=p$.
\end{proofof}

\begin{example}
For $n=34$, there exist exactly $\num{4547647110}$ non-isomorphic connected posets $I$ of size $n=|I|$, that are non-negative of Dynkin type $\Dyn_I=\AA_{n-\crk_I}$. Moreover, these posets yield $\lfloor\frac{n}{2}\rfloor=17$ Coxeter types (Coxeter polynomials). By Theorem~\ref{thr:posnum:a}, there are exactly $\num{4294967296}$ such positive ($\crk_I=0$) posets with $\cox_I=t^{34}+t^{33}+t^{32}+\cdots+t+1\in\ZZ[t]$ and $\num{252679814}$ principal ($\crk_I=1$) ones, as described in Table~\ref{tbl:exprinccoxtypes}.
\end{example}

%
\makeatother%
\endgroup%
        \caption{Log--log plot of $\#I\textnormal{ with }\mathrm{cox}_I=t^n- t^p-t^{n-p}+1=N(C_n,p)-1$}
        \label{fig:coxgrow}
\end{figure}
\begin{remark}
We recall from \cite{gasiorekStructureNonnegativePosets2022arxiv} that the number of connected non-negative posets $I$ of Dynkin type $\AA_{n-\crk_I}$ grows exponentially when $n\to\infty$. Theorem~\ref{thr:posnum:a}\ref{thr:posnum:a:printc} gives a description of the number $N(C_n,p)-1$ of principal ($\crk_I=0$) posets $I$ with $c(I)=p$. In particular, $N(C_n,p)=\lfloor\frac{n}{2}\rfloor$ for $p=n-2$, i.e., in this case the number of posets grows \textit{linearly}. Otherwise, for $p\neq n-2$, the growth rate is \textit{exponential}, see Figure~\ref{fig:coxgrow} for illustration. 
\end{remark}

\section{Summary and future work}\label{sec:conclusions}
In the present work, we give a complete Coxeter spectral classification of connected non-negative posets $I$ of Dynkin type $\Dyn_I=\AA_{|I|-\crk_I}$. A complete description and classification of Dynkin type $\DD_n$ posets is unknown (partial results are given in \cite{gasiorekOnepeakPosetsPositive2012,gasiorekAlgorithmicStudyNonnegative2015,gasiorekCoxeterTypeClassification2019}). Therefore, the following problem remains open.
\begin{problem}
Give a structural description and Coxeter spectral classification of $\DD_n$ Dynkin type non-negative connected posets.
\end{problem}

In the case of $\AA_m$ type posets, we show in the present work that there are exactly $\lfloor\frac{n}{2}\rfloor$ Coxeter types:
\begin{itemize}
\item $\cox_I(t) = t^{n}+t^{n-1}+\ldots + t + 1\in\ZZ[t]$ if $I$ is positive ($\crk_I=0$),
\item $\cox_I(t) = t^n- t^p-t^{n-p}+1\in\ZZ[t]$, where $\frac{n}{2}\leq p\leq n-2$, if $I$ is principal ($\crk_I=1$).
\end{itemize}
Moreover, we prove that, given a pair of non-negative posets $I$ and $J$ of Dynkin type $\Dyn_I=\Dyn_J=\AA_{|I|-\crk_I}$, the incidence matrices $C_I$ and $C_J$ are $\ZZ$-congruent if and only if $\specc_I = \specc_J$. It is known that the assumption that Dynkin types coincide cannot be omitted in general (see~\cite[Example~4.1]{gasiorekAlgorithmicStudyNonnegative2015}). Our experiments show that this is possible for \textit{small} posets.
\begin{proposition}
If $n\leq 8$ and $I$ is such a poset that $\cox_I(t) = t^n+t^{n-1}+\cdots +t + 1$, then $I$ is connected, positive, and $\Dyn_I=\AA_n$.
\end{proposition}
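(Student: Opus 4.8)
The plan is a finite, exhaustive verification over all posets of size at most $8$, using nothing beyond the fact that the Coxeter polynomial \eqref{eq:pos_cox_poly} is an isomorphism invariant (a consequence of \eqref{eq:isomorphism:stronggram} together with Fact~\ref{fact:sgc_conseq}\ref{fact:sgc_conseq:coxinvariants}), so that it suffices to inspect one representative of each isomorphism type. First I would record the trivial reduction: since $\cox_I(t)=\det(tE-\Cox_I)$ is monic of degree $|I|$, the hypothesis forces $|I|=n$; hence only posets on exactly $n\le 8$ elements are relevant, and there are finitely many of them ($1,2,5,16,63,318,2045,16999$ isomorphism types for $n=1,\dots,8$, respectively).

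For each such representative $I$ one forms the incidence matrix $C_I\in\MM_n(\ZZ)$ \eqref{eq:pos_inc_mat}, computes $\Cox_I=-C_I\cdot C_I^{-tr}\in\MM_n(\ZZ)$ \eqref{eq:pos_cox_mat} and then $\cox_I(t)=\det(tE-\Cox_I)\in\ZZ[t]$ with exact integer/rational arithmetic, and tests the equality $\cox_I(t)=t^n+t^{n-1}+\cdots+t+1$. For every $I$ passing this test I would then check the three asserted properties directly: connectedness, by inspecting whether the Hasse digraph $\CH(I)$ is connected; positivity, by verifying that all leading principal minors of the symmetric Gram matrix $G_I=\tfrac{1}{2}(C_I+C_I^{tr})$ \eqref{eq:pos_gram_mat} are strictly positive (Sylvester's criterion); and $\Dyn_I=\AA_n$, either through the inflation algorithm \cite[Algorithm~4.6]{zajacPolynomialTimeInflation2020} or, once positivity is established, equivalently by checking that $\CH(I)$ is an oriented path, as in Fact~\ref{fact:a:hasse}\ref{fact:a:hasse:posit}. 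The expected outcome is stronger than the bare statement: the posets passing the test should be exactly those whose Hasse digraph is an oriented path, of which there are $N(P_n)$ \eqref{fact:digrphnum:path:eq}, and together with the converse contained in Theorem~\ref{thm:mainthm:stronggram} (that every connected positive poset of Dynkin type $\AA_n$ does have this Coxeter polynomial) this would pin the answer down completely. To keep the search self-checking, I would independently confirm that the number of generated isomorphism types matches the tabulated count of posets on $n$ elements and that the number of posets passing the test equals $N(P_n)$.

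A small part of the work can be removed conceptually: when $n+1$ is prime, $t^n+t^{n-1}+\cdots+t+1=\Phi_{n+1}(t)$ is irreducible over $\QQ$, and since the Coxeter polynomial of a disconnected poset factors as the product of the (positive-degree, monic) Coxeter polynomials of its connected components, a disconnected $I$ is impossible; this disposes of disconnectedness for $n\in\{1,2,4,6\}$ with no computation. For $n\in\{3,5,7\}$ and for $n=8$ one still has to exclude disconnected $I$ by direct computation, and positivity together with the exact Dynkin type must be checked computationally for \emph{every} $I$ that survives the Coxeter-polynomial test, in all cases. Consequently the proof is genuinely computer-assisted.

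The main obstacle is therefore not finding an argument but the scale and reliability of the enumeration: the statement asserts that no indefinite poset and no poset of a Dynkin type other than $\AA_n$ on $n\le 8$ vertices accidentally carries the Coxeter polynomial $t^n+\cdots+1$, and there is no structural reason known for this beyond exhaustive search --- indeed the bound $n\le 8$ is essential, the analogous claim being known to fail once the Dynkin-type hypothesis is dropped, cf.\ \cite[Example~4.1]{gasiorekAlgorithmicStudyNonnegative2015}. The care needed in the write-up lies in certifying that the poset generation is complete (by matching the known sequence of poset counts) and that the symbolic computation of $\cox_I$ is exact, so that the resulting finite case check is trustworthy.
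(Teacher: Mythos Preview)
Your proposal is correct and follows essentially the same approach as the paper: an exhaustive computer enumeration of all posets on at most $8$ elements, followed by a direct check that those with Coxeter polynomial $t^n+\cdots+t+1$ are precisely the connected positive posets of Dynkin type $\AA_n$. The paper's proof is stated in one sentence with fewer implementation details, but the method is the same finite verification you describe.
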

\begin{proof}
The proof is a computational one. We list all, up to isomorphism, posets $I$ of size $|I|\leq 8$, and verify that posets $J$ that have their Coxeter polynomial $\cox_J(t)\in\ZZ[t]$ \eqref{eq:pos_cox_poly} equal $\cox_J(t) = t^n+t^{n-1}+\cdots +t + 1$ are exactly non-negative ones, of Dynkin type $\Dyn_J=\AA_n$. 
\end{proof}

\begin{example}
Let $I=(\{1,\ldots,9\},\preceq)$ be a  connected poset defined by the following Hasse quiver.
\begin{equation*}
\CH(I)=
\tikzsetnextfilename{exindefcoxa}
\begin{tikzpicture}[baseline={([yshift=-2.75pt]current bounding box)},label distance=-2pt,xscale=0.65, yscale=0.74]
\node[circle, fill=black, inner sep=0pt, minimum size=3.5pt, label=right:$\scriptstyle 1$] (n1) at (4  , 1.50) {};
\node[circle, fill=black, inner sep=0pt, minimum size=3.5pt, label=below:$\scriptstyle 2$] (n2) at (2.50, 1  ) {};
\node[circle, fill=black, inner sep=0pt, minimum size=3.5pt, label=above:$\scriptstyle 3$] (n3) at (2.50, 2  ) {};
\node[circle, fill=black, inner sep=0pt, minimum size=3.5pt, label=above:$\scriptstyle 4$] (n4) at (1.50, 2  ) {};
\node[circle, fill=black, inner sep=0pt, minimum size=3.5pt, label=above:$\scriptstyle 5$] (n5) at (1.50, 1  ) {};
\node[circle, fill=black, inner sep=0pt, minimum size=3.5pt, label=right:$\scriptstyle 6$] (n6) at (1.50, 0  ) {};
\node[circle, fill=black, inner sep=0pt, minimum size=3.5pt, label=right:$\scriptstyle 7$] (n7) at (1.50, 3  ) {};
\node[circle, fill=black, inner sep=0pt, minimum size=3.5pt, label=left:$\scriptstyle 8$] (n8) at (0  , 1  ) {};
\node[circle, fill=black, inner sep=0pt, minimum size=3.5pt, label=left:$\scriptstyle 9$] (n9) at (0  , 2  ) {};
\foreach \x/\y in {2/1, 3/1, 4/2, 4/3, 5/2, 6/2, 7/3, 8/4, 8/5, 8/6, 9/4, 9/7}
        \draw [-stealth, shorten <= 2.50pt, shorten >= 2.50pt] (n\x) to  (n\y);
\draw [-stealth, shorten <= 2.50pt, shorten >= 2.50pt] (n5) to  (n3);
\draw [-stealth, shorten <= 2.50pt, shorten >= 2.50pt] (n9) to  (n6);
\end{tikzpicture}\ C_I=
\begin{bsmallmatrix*}[r]
1 & 0 & 0 & 0 & 0 & 0 & 0 & 0 & 0\\
1 & 1 & 0 & 0 & 0 & 0 & 0 & 0 & 0\\
1 & 0 & 1 & 0 & 0 & 0 & 0 & 0 & 0\\
1 & 1 & 1 & 1 & 0 & 0 & 0 & 0 & 0\\
1 & 1 & 1 & 0 & 1 & 0 & 0 & 0 & 0\\
1 & 1 & 0 & 0 & 0 & 1 & 0 & 0 & 0\\
1 & 0 & 1 & 0 & 0 & 0 & 1 & 0 & 0\\
1 & 1 & 1 & 1 & 1 & 1 & 0 & 1 & 0\\
1 & 1 & 1 & 1 & 0 & 1 & 1 & 0 & 1
\end{bsmallmatrix*}\ \Cox_I=
\begin{bsmallmatrix*}[r]
\shortminus 1 & \phantom{\shortminus}1 & \phantom{\shortminus}1 & \shortminus 1 & \shortminus 1 & \phantom{\shortminus}0 & \phantom{\shortminus}0 & 1 & 0\\
\shortminus 1 & 0 & 1 & 0 & 0 & 1 & 0 & \shortminus 1 & \shortminus 1\\
\shortminus 1 & 1 & 0 & 0 & 0 & 0 & 1 & 0 & \shortminus 1\\
\shortminus 1 & 0 & 0 & 0 & 1 & 1 & 1 & \shortminus 1 & \shortminus 1\\
\shortminus 1 & 0 & 0 & 1 & 0 & 1 & 1 & \shortminus 1 & \shortminus 2\\
\shortminus 1 & 0 & 1 & 0 & 0 & 0 & 0 & 0 & 0\\
\shortminus 1 & 1 & 0 & 0 & 0 & 0 & 0 & 0 & 0\\
\shortminus 1 & 0 & 0 & 0 & 0 & 0 & 1 & 0 & 0\\
\shortminus 1 & 0 & 0 & 0 & 1 & 0 & 0 & 0 & 0
\end{bsmallmatrix*}
\end{equation*}
It is straightforward to check that:
\begin{itemize}
\item $q_I([2, -1, -2, -2, 0, -3, -1, 3, 2])=-3$, i.e., poset $I$ is indefinite (is not non-negative),
\item $\cox_I(t)=t^{9} + t^{8} + t^{7} + t^{6} + t^{5} + t^{4} + t^{3} + t^{2} + t + 1.$
\end{itemize}
\end{example}
In other words, the Coxeter spectrum (Coxeter polynomial) does not preserve the positivity of posets. Nevertheless, the following conjecture holds for posets of at most $13$ elements.

\begin{conjecture}
If $I$ is such a poset, that $\cox_I(t) = t^n+t^{n-1}+\cdots +t + 1$, then
$I$ is connected. If, additionally, $I$ is non-negative, then $\Dyn_I=\AA_{|I|}$.
\end{conjecture}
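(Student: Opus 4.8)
The plan is to argue by contraposition, exploiting the multiplicativity of the Coxeter polynomial over connected components. Suppose $I$ splits as a disjoint union $I=I_1\sqcup\cdots\sqcup I_k$ of its $k\geq 2$ connected components. After topologically reordering the vertices the incidence matrix $C_I$ becomes block upper triangular with diagonal blocks $C_{I_1},\dots,C_{I_k}$, so $\Cox_I=-C_IC_I^{-tr}$ is block diagonal and $\cox_I(t)=\prod_{j=1}^{k}\cox_{I_j}(t)$. Now $t^n+t^{n-1}+\cdots+t+1=\frac{t^{n+1}-1}{t-1}$ is the product of the pairwise distinct cyclotomic polynomials $\Phi_d(t)$ with $d\mid n+1$ and $d>1$; since $\ZZ[t]$ is a unique factorization domain, each factor $\cox_{I_j}(t)$ must therefore be a \emph{squarefree} product of some of these $\Phi_d$'s, the chosen index sets partitioning $\{\,d : d\mid n+1,\ d>1\,\}$. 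Hence the whole statement reduces to showing that no such nontrivial partition can be \emph{realized}: that $\frac{t^{n+1}-1}{t-1}$ is not a product of Coxeter polynomials of posets each having strictly fewer than $n$ elements.

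For the non-negativity refinement I would first note $\cox_{I_j}(1)=\det\!\big(E+C_{I_j}C_{I_j}^{-tr}\big)=\det\!\big(C_{I_j}+C_{I_j}^{tr}\big)=\det(2G_{I_j})$, and that $\prod_j\cox_{I_j}(1)$ equals the nonzero value $n+1$ at $t=1$ of $\frac{t^{n+1}-1}{t-1}$. Thus every $\det(2G_{I_j})\neq 0$, i.e.\ each $G_{I_j}$ is nonsingular, so each component $I_j$ is \emph{positive} ($\crk_{I_j}=0$); in particular no $\Phi_d$ with $d=1$ occurs and $t=1$ is not a Coxeter eigenvalue, as required. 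Moreover $\det(2G_{I_j})$ is a weak Gram $\ZZ$-congruence invariant, so it equals $\det(2G_{\Dyn_{I_j}})$, which is $|I_j|+1$ for $\Dyn_{I_j}=\AA_{|I_j|}$ and $4,3,2,1$ for $\Dyn_{I_j}=\DD_{|I_j|},\EE_6,\EE_7,\EE_8$. If all components have type $\AA$ this forces $\prod_j(|I_j|+1)=1+\sum_j|I_j|$, which is impossible for $k\geq 2$ by the elementary inequality $\prod_j(m_j+1)\geq 1+\sum_j m_j$ (strict as soon as two of the $m_j\geq 1$); and for $k=1$ the same invariant pins $\Dyn_I$ to $\AA_{|I|}$, since $\det(2G_{\DD_m})=4$, $\det(2G_{\EE_6})=3$, $\det(2G_{\EE_7})=2$, $\det(2G_{\EE_8})=1$ never match $|I|+1$ for the corresponding sizes $m,6,7,8$. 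What remains is to exclude components of Dynkin type $\DD$ or $\EE$.

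This is the main obstacle, and it is why the statement is conjectural rather than a theorem. A positive connected poset of type $\DD_m$ (or $\EE_m$) need not share the Coxeter polynomial of its Dynkin bigraph: the Coxeter polynomials of the Dynkin bigraphs $\DD_m$ and $\EE_6,\EE_7,\EE_8$ are frequently non-squarefree (e.g.\ $\cox_{\DD_m}$ acquires a factor $\Phi_2^2$ when $m$ is even), which would immediately contradict divisibility of the squarefree $\frac{t^{n+1}-1}{t-1}$, but there is currently no proof that \emph{every} positive poset of type $\DD_m$ or $\EE_m$ has a repeated Coxeter eigenvalue, nor a separating invariant showing that the "complementary" factor $\prod_{d\notin S}\Phi_d$ can never be realized; indeed a structural description and Coxeter spectral classification of Dynkin type $\DD_m$ posets is open (cf.\ Section~\ref{sec:conclusions}). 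The remaining, still harder part of the first assertion — that no \emph{indefinite} poset of size $m<n$ has Coxeter polynomial $\prod_{d\in S}\Phi_d(t)$ for a proper subset $S$ — would in addition require controlling indefinite posets whose Coxeter transformation has finite order (such posets exist and can have cyclotomic Coxeter polynomials, as the $9$-element example above shows), so a complete proof presumably demands either the missing $\DD$/$\EE$ classification or a new bound on $\det(2G_I)=\cox_I(1)$ for connected posets; for now the evidence is the exhaustive computational verification for $|I|\leq 13$.
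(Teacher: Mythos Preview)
The statement is presented in the paper as a \emph{conjecture}: no proof is offered, only the remark that it has been checked computationally for $|I|\leq 13$. You correctly treat it as such and present partial progress together with an honest identification of the obstacles, rather than a claimed proof.

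Your argument in fact \emph{proves} the second clause in the form ``$I$ connected and non-negative with $\cox_I(t)=t^n+\cdots+t+1$ implies $\Dyn_I=\AA_{|I|}$'': the identity $\cox_I(1)=\det(E+C_IC_I^{-tr})=\det(2G_I)$ (using $\det C_I=1$), the weak-congruence invariance of $\det(2G_I)$, and the standard determinants $m+1,4,3,2,1$ for the Cartan matrices of $\AA_m,\DD_m,\EE_6,\EE_7,\EE_8$ together force $\Dyn_I=\AA_n$, since $4,3,2,1\neq n+1$ at the relevant sizes $n\geq 4,6,7,8$. Combined with your product inequality $\prod_j(|I_j|+1)>1+\sum_j|I_j|$ ruling out $k\geq 2$ components all of type $\AA$, this is a clean argument that goes strictly beyond what the paper establishes.

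For the genuinely open parts---connectedness without the non-negativity hypothesis, and exclusion of $\DD$/$\EE$ components in the disconnected non-negative case---your diagnosis (no available Coxeter spectral classification for type $\DD$ posets; existence of indefinite posets with cyclotomic Coxeter polynomial, as in the paper's $9$-element example) matches the paper's own assessment in Section~\ref{sec:conclusions}. One trivial correction: a disjoint union of posets has block \emph{diagonal} incidence matrix after reordering, not merely block upper triangular; this does not affect the argument.
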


\end{document}